\newcommand{\size}[1]{\ensuremath{|#1|}}
\newcommand{\ceil}[1]{\ensuremath{\lceil#1\rceil}}
\newcommand{\floor}[1]{\ensuremath{\lfloor#1\rfloor}}
\newcommand{\lrA}[1]{\ensuremath{\left(#1\right)}}
\newcommand{\lrC}[1]{\ensuremath{\left\{#1\right\}}}
\let\epsilon=\varepsilon
\def\MST{\mbox{MST}}
\def\C{\mathcal{C}}
\def\P{\mathcal{P}}
\def\U{\mathcal{U}}
\def\E{\mathcal{E}}
\def\F{\mathcal{F}}
\def\H{\mathcal{H}}
\def\G{\mathcal{G}}
\def\V{\mathcal{V}}
\def\MST{\mbox{MST}}
\def\C{\mathcal{C}}
\def\P{\mathcal{P}}
\newcommand{\EE}[1]{\ensuremath{\mathbb{E}[#1]}}
\newtheorem{theorem}{Theorem}
\newtheorem{lemma}[theorem]{Lemma}
\newtheorem{corollary}[theorem]{Corollary}
\newtheorem{assumption}[theorem]{Assumption}
\begin{document}

\title[TTP: Improved Algorithms Based on Cycle Packing]{The Traveling Tournament Problem: Improved Algorithms Based on Cycle Packing}


\author{\fnm{Jingyang} \sur{Zhao} }\email{jingyangzhao1020@gmail.com}

\author{\fnm{Mingyu} \sur{Xiao}}\email{myxiao@uestc.edu.cn}
\equalcont{Corresponding author}

\author{\fnm{Chao} \sur{Xu} }\email{the.chao.xu@gmail.com}

\affil{\orgdiv{School of Computer Science and Engineering}, \orgname{University of Electronic Science and Technology of China}, \orgaddress{\street{2006 Xiyuan Ave}, \city{Chengdu}, \postcode{610054}, \state{Sichuan}, \country{China}}}

\abstract{
The Traveling Tournament Problem (TTP) is a well-known benchmark problem in the field of tournament timetabling, which asks us to design a double round-robin schedule such that each pair of teams plays one game in each other's home venue, minimizing the total distance traveled by all $n$ teams ($n$ is even).
TTP-$k$ is the problem with one more constraint that each team can have at most $k$-consecutive home games or away games.
In this paper, we investigate schedules for TTP-$k$ and analyze the approximation ratio of the solutions.
Most previous schedules were constructed based on a Hamiltonian cycle of the graph.
We will propose a novel construction based on a $k$-cycle packing. Then, combining our $k$-cycle packing schedule with the Hamiltonian cycle schedule, we obtain improved approximation ratios for TTP-$k$ with deep analysis.
The case where $k=3$, TTP-3, is one of the most investigated cases. We improve the approximation ratio of TTP-3 from $(1.667+\epsilon)$ to $(1.598+\epsilon)$, for any $\epsilon>0$.
For TTP-$4$, we improve the approximation ratio from $(1.750+\epsilon)$ to $(1.700+\epsilon)$.
By a refined analysis of the Hamiltonian cycle construction, we also improve the approximation ratio of TTP-$k$ from $(\frac{5k-7}{2k}+\epsilon)$ to $(\frac{5k^2-4k+3}{2k(k+1)}+\epsilon)$ for any constant $k\geq 5$.
Our methods can be extended to solve a variant called LDTTP-$k$ (TTP-$k$ where all teams are allocated on a straight line). We show that the $k$-cycle packing construction can achieve an approximation ratio of $(\frac{3k-3}{2k-1}+\epsilon)$, which improves the approximation ratio of LDTTP-3 from $4/3$ to $(6/5+\epsilon)$.
}

\keywords{Approximation Algorithms, Sports Scheduling, Traveling Tournament Problem, Timetabling Combinatorial Optimization}



\maketitle
\section{Introduction}
In the field of the tournament schedule, the traveling tournament problem (TTP) is a widely known benchmark problem that was first systematically introduced in~\cite{easton2001traveling}.
This problem aims to find a double round-robin tournament satisfying some constraints, minimizing the total distance traveled by all participant teams.
In a double round-robin tournament of $n$ teams, each team will play 2 games against each of the other $n-1$ teams, one at its home venue and one at its opponent's home venue.
Additionally, each team should play one game a day, all games need to be scheduled on $2(n-1)$-consecutive days, and so there are exactly $n/2$ games on each day.
According to the definition, we know that $n$ is always even.
For TTP, we have the following two basic constraints or assumptions on the double round-robin tournament.

\medskip
\noindent
\textbf{Traveling Tournament Problem (TTP)}
\begin{itemize}
\item \emph{No-repeat}: Two teams cannot play against each other on two consecutive days.
\item \emph{Direct-traveling}: Before the first game, all teams are at home, and they will return home after the last game. Furthermore, a team travels directly from its game venue on $i$th day to its game venue on $(i+1)$th day.
\end{itemize}

TTP-$k$, a well-known variant of TTP, is to add the following constraint on the maximum number of consecutive home games and away games.

\begin{itemize}
\item \emph{Bounded-by-$k$}: Each team can have at most $k$-consecutive home games or away games.
\end{itemize}

The smaller the value of $k$, the more frequently a team has to return home. By contrast, if $k$ is very large, say $k=n-1$, then this constraint loses meaning, and TTP-$k$ becomes TTP where a team can schedule their travel distance as short as that in the traveling salesman problem (TSP).

A weight function $w$ on the complete undirected graph is called a \emph{metric} if it satisfies the symmetry and triangle inequality
properties: $w(a,b) = w(b,a)$ and $w(a,c) \leq w(a,b) + w(b,c)$ for all $a,b,c\in V$.

The input of TTP-$k$ contains a complete graph where each vertex is a team, and the weight between two vertices $u$ and $v$ is the distance from the home of team $u$ to the home of team $v$. In this paper, we only consider the case when the weight function $w$ is a metric.

\subsection{Related Work}
TTP and TTP-$k$ are difficult optimization problems.
The NP-hardness of TTP and TTP-$k$ with $k\geq 3$ has been established \cite{bhattacharyya2016complexity, thielen2011complexity, DBLP:journals/corr/abs-2110-02300}.
Although the hardness of TTP-2 is still not formally proved, it is believed that TTP-2 is also hard since it is not easy to construct a feasible solution to it.
In the literature there is a large number of contributions on approximation algorithms~\cite{thielen2012approximation,DBLP:conf/mfcs/XiaoK16,DBLP:conf/ijcai/ZhaoX21,DBLP:conf/cocoon/ZhaoX21,DBLP:conf/atmos/ChatterjeeR21,miyashiro2012approximation,yamaguchi2009improved,imahori2010approximation,westphal2014,hoshino2013approximation,imahori20211+} and heuristic algorithms~\cite{easton2003solving,lim2006simulated,anagnostopoulos2006simulated,di2007composite,goerigk2014solving,DBLP:journals/anor/GoerigkW16}.

For heuristic algorithms, most known works are concerned with the case of $k=3$.
Since the search spaces of TTP and TTP-$k$ are very large, many instances with more than 10 teams in the online benchmark \cite{trick2007challenge,DBLP:journals/eor/BulckGSG20}
have not been completely solved even by using high-performance machines.

In terms of approximation algorithms, most results are based on the assumption that the distance holds the symmetry and triangle inequality properties. This is natural and practical in the sports schedule. For $k=2$, one significant contribution to TTP-2 was done by Thielen and Westphal~\cite{thielen2012approximation}. They proposed an $(1+16/n)$-approximation algorithm for $n/2$ being even and a $(3/2+O(1/n))$-approximation algorithm for $n/2$ being odd. Approximation ratios for these two cases have been improved to $(1+3/n)$ and $(1+12/n)$, respectively~\cite{DBLP:conf/cocoon/ZhaoX21,DBLP:conf/ijcai/ZhaoX21}.
For $k=3$, the first approximation algorithm, proposed by Miyashiro \emph{et al.}~\cite{miyashiro2012approximation}, admits a $2+O(1/n)$ approximation ratio by using the Modified Circle Method. Then, the approximation ratio was improved to $5/3+O(1/n)$ by Yamaguchi \emph{et al.}~\cite{yamaguchi2009improved}. Their approximation algorithm also works for $3<k=o(n)$. For $k=4$, the ratio is $1.750+O(1/n)$, and for $4<k=o(n)$, the ratio is $(5k-7)/(2k)+O(k/n)$~\cite{yamaguchi2009improved}.
For $k=\Theta(n)$, Westphal and Noparlik~\cite{westphal2014} proposed a $5.875$-approximation algorithm for any choice of $k\geq 4$ and $n\geq 6$, and Imahori \emph{et al.}~\cite{imahori2010approximation} gave a $2.75$-approximation algorithm for $k=n-1$ where they also proved that the ratio could be improved to $2.25$ if the optimal TSP is given. In both of their algorithms for $k=\Theta(n)$, the approximation ratios can be slightly improved if they use the randomized $(3/2-\epsilon_0)$-approximation algorithm for TSP \cite{DBLP:conf/stoc/KarlinKG21}.
The current best approximation ratio for $k=3$ is still $5/3+O(1/n)$~\cite{yamaguchi2009improved}.

We refer the readers to \cite{DBLP:journals/eor/BulckGSG20} for more variants of TTP (including the traveling tournament problem with predefined venues, the time-relaxed traveling tournament problem, etc.).

\subsection{Our Contributions}
In this paper, we consider approximation algorithms for TTP-$k$ with constant $k\geq3$.
Our contributions are summarized as follows.

Firstly, we deeply analyze the structural properties, which leads us to strengthen some of the current-known lower bounds and also to propose several new lower bounds.

Secondly, we consider the constructions of feasible schedules for TTP-$k$.
Our algorithm will use two constructions.
The first construction is based on a Hamiltonian cycle, which is a known method. The Hamiltonian cycle used in the literature is usually generated by the Christofides-Serdyukov algorithm~\cite{christofides1976worst,serdyukov1978some}, while we propose a new Hamiltonian cycle that uses the minimum weight perfect matching.
The second construction is newly proposed by us, which is based on a $k$-cycle packing.
To get our claimed approximation ratios, we will make a trade-off between the two constructions and do careful analysis.

Thirdly, based on careful analysis, we improve the approximation ratio from $(5/3+\epsilon)$ to $(139/87+\epsilon)$ for TTP-3, and from $(7/4+\epsilon)$ to $(17/10+\epsilon)$ for TTP-4.

Fourthly, we consider TTP-$k$ with $k\geq 5$. By a refined analysis of the Hamiltonian construction, we improve the approximation ratio from $(\frac{5k-7}{2k}+\epsilon)$ to $(\frac{5k^2-4k+3}{2k(k+1)}+\epsilon)$.

Last, we show that the construction based on the $k$-cycle packing can also be used to improve Linear Distance TTP-$k$ (LDTTP-$k$), which is a special case of TTP-$k$ where all teams are on a line. We prove an approximation ratio of $(\frac{3k-3}{2k-1}+\epsilon)$ for this problem, which improves the approximation ratio from $4/3$~\cite{DBLP:journals/jair/HoshinoK12} to $(6/5+\epsilon)$ for the case of $k=3$. 

\subsection{Paper Organization}
The remaining parts of the paper are organized as follows.
In Section~\ref{sec_pre}, we introduce basic notations.
In Section \ref{A}, we propose several new lower bounds. In Section \ref{B}, we introduce the construction methods for the schedules. Specifically, in Section \ref{B.1}, we  modify the known construction based on a Hamiltonian cycle. In Section \ref{B.2}, we propose a new construction based on a $k$-cycle/path packing.
Although these two constructions cannot make any improvement separately, in Section \ref{C}, we show that together they can guarantee an improved approximation ratio for TTP-3 and TTP-4.
In Section \ref{C.1}, we analyze the approximation ratio of TTP-3.
In Section \ref{C.2}, we improve the approximation ratio for TTP-4. In Section \ref{C.3}, we give a refined analysis of the Hamiltonian construction and improve the approximation ratio for TTP-$k$ with $k\geq 5$. In Section~\ref{D}, we analyze the approximation ratio of our algorithm for LDTTP-$k$ with $k\geq3$. Finally, we make the concluding remarks in Section \ref{E}.

Partial results of this paper were presented at the 47th International Symposium on Mathematical Foundations of Computer Science (MFCS 2022)~\cite{zhao2022improved}.

\section{Preliminaries}\label{sec_pre}
We always use $n$ to denote the total number of teams in the problem, where $n$ is even. The set of $n$ teams is denoted by $V=\{t_1,t_2,\ldots,t_n\}$.
We use $G=(V, E)$ to denote the complete graph on the $n$ vertices representing the $n$ teams. There is a positive weight function $w: E\to \mathbb{R}_{\geq0}$ on the edges of $G$. We often write $w(u,v)$ to denote the weight of the edge $(u,v)$, which is the distance between the home of team $u$ and the home of team $v$.
For any weight function $w:X\to \mathbb{R}_{\geq0}$, we extend it to subsets of $X$. Define $w(Y) = \sum_{x\in Y} w(x)$ for $Y\subseteq X$. For the sake of analysis, we will always consider that $k$ is a constant not smaller than 3. Then, in the following, we also assume that $n\geq 8k^2$. Otherwise, since $k$ is a constant, we can solve TTP-$k$ in constant time.

The following notations work for any graph. We consider w.l.o.g. a graph $G=(V,E)$. For any $V'\subseteq V$, the induced graph of $G$ by $V'$ is denoted as $G[V']$.
The weight of a minimum weight spanning tree in $G$ is denoted by $\mbox{MST}(G)$.
We use $\delta_G(u)$ to denote the set of edges incident on a vertex $u$ in $G$.
When the graph is an edge-weighted graph,
we use $\deg_G(u)$ to denote the weighted degree of a vertex, i.e., sum of the weights of all edges incident on $u$. We also let $\Delta_G$ be the sum of weighted degrees in $G$, i.e., $$\Delta_G=\sum_{u\in V}\deg_G(u)=2w(E).$$

Let $m=2\floor{\frac{n}{4k}}$. Let $S$ be the set of $2mk$ vertices of $V$ such that the sum of the weighted degrees of vertices in $S$ is maximized. The set $S$ can be found in $O(n^2)$ time by greedily selecting the vertex with maximum weighted degree.
We always use $H=G[S]$ to denote the complete graph induced by $S$ in $G$.
We have
\begin{eqnarray}\label{degree}
\sum_{v\in S}\deg_G(v)\geq \frac{2mk}{n}\sum_{v\in V}\deg_G(v)=\frac{2mk}{n}\Delta_G.
\end{eqnarray}

We let $\overline{S}=V\setminus S$, $n_S=2mk$ and $n_{\overline{S}}=n-n_S=n-4k\floor{\frac{n}{4k}}$. Then, there are $n_S$ vertices in $S$. Note that $n_{\overline{S}}<4k$. Since $n\geq 8k^2$, we have
\begin{eqnarray}\label{n_S}
\frac{1}{n_S}=\frac{1}{n-n_{\overline{S}}}<\frac{1}{n-4k}<\frac{2}{n}.
\end{eqnarray}
By (\ref{degree}), we have
\begin{eqnarray}\label{degree+}
\sum_{v\in \overline{S}}\deg_G(v)\leq \frac{n_{\overline{S}}}{n}\sum_{v\in V}\deg_G(v)\leq \frac{4k}{n}\Delta_G.
\end{eqnarray}

Two subgraphs or sets of edges are \emph{vertex-disjoint} if they do not share a common vertex.
A simple path on $k$ different vertices $\{v_1,v_2,\dots,v_k\}$ is called a \emph{$k$-path} and denoted by $(v_1,v_2,\dots,v_k)$.
A simple cycle on $k$ different vertices $\{v_1,v_2,\dots,v_k\}$ is called a \emph{$k$-cycle} and denoted by $(v_1,v_2,\dots,v_k,v_1)$.
An $l$-cycle is called a \emph{mod $k$-cycle} if $l\bmod k=0$. Hence, a $k$-cycle is also a mod $k$-cycle.
A \emph{$k$-path packing} (resp., \emph{(mod) $k$-cycle packing}) in a graph is a set of edges such that each component is a $k$-path (resp., (mod) $k$-cycle) and all vertices in the graph are covered. A 2-path packing is also called a \emph{perfect matching}.

We can obtain a $k$-cycle packing from a $k$-path packing by connecting the first and last vertices of each $k$-path in the packing. We refer to this operation as \emph{completing the $k$-path packing}.
Since $n_S\bmod k=0$ and $n_S\bmod 2=0$, there must exist a $k$-cycle packing, $k$-path packing, and a perfect matching in $H$.
Usually, we let $\P_k$ denote a $k$-path packing in $H$, $\C_k$ denote a $k$-cycle packing in $H$, $\C_{\bmod k}$ denote a mod $k$-cycle packing in $H$, and $M_H$ denote a perfect matching in $H$.

Recall that $m=\frac{n_S}{2k}$. For a $k$-cycle packing $\C_k$ in graph $H$, there are exactly $2m$ $k$-cycles in $\C_k$.
We assume that the $2m$ $k$-cycles are $\{C_1,\dots,C_{2m}\}$ and $C_i=(t_{k(i-1)+1}, \dots, t_{k(i-1)+k},t_{k(i-1)+1})$ for each $1\leq i \leq 2m$.
In our algorithm, we may consider the set of teams in each cycle $C_i$ together.
For the sake of presentation, we take each cycle $C_i$ as a \emph{cycle-team} $u_i$, which is the set of teams in $C_i$, i.e., $u_i=\{t_{k(i-1)+1}, \dots, t_{k(i-1)+k}\}$.
Each cycle-team contains exactly $k$ (normal) teams.
Furthermore, we also call $U_i=u_{2i-1}\cup u_{2i}$ $(1\leq i \leq m)$ a \emph{super-team}.
Each supper-team contains exactly $2k$ (normal) teams.
Let $\U=\{U_1,\dots,U_m\}$ denote the set of all super-teams.

According to the $2m$ cycle-teams $\V=\{u_1,\dots,u_{2m}\}$, we define a compete graph $\G=(\V,\E)$, where we let $w(u_i,u_j)=\sum_{t_{i'}\in u_i\&t_{j'}\in u_j}w(t_{i'},t_{j'})$ for $i\neq j$ and $w(u_i,u_i)=0$.
According to the $m$ super-teams $\U=\{U_1,\dots,U_m\}$, we similarly define another compete graph $\H=(\U,\F)$, where we let $w(U_i,U_j)=\sum_{u_{i'}\in U_i\&u_{j'}\in U_j}w(u_{i'},u_{j'})=\sum_{t_{i'}\in U_i\&t_{j'}\in U_j}w(t_{i'},t_{j'})$ for $i\neq j$ and $w(U_i,U_i)=0$.
We also define $w(u_i)=\sum_{t_{i'},t_{j'}\in C_i}w(t_{i'},t_{j'})$ and $w(U_i)=\sum_{t_{i'},t_{j'}\in U_i}w(t_{i'},t_{j'})$. Note that $w(X,Y)$ can be seen as the sum of the weights of edges in the cut $(X,Y)$ of graph $G[X\cup Y]$ and $w(X)$ can be seen as the sum of the weights of edges in the graph $G[X]$.
Recall that $\Delta_H$ is the sum of weighted degrees in $H$. We use $E_H$ to denote the set of edges in graph $H$.
We can get
\begin{eqnarray} \label{eqn_GH}
\frac{1}{2}\Delta_H=w(E_H)=w(\V)+w(\E).
\end{eqnarray}

\section{The Independent Lower Bounds}\label{A}
For TTP and TTP-$k$, a well-known method to obtain the lower bounds is to use an independent relaxation~\cite{campbell1976minimum,easton2001traveling}.
The basic idea is to obtain a lower bound on the traveling distance of each team independently without considering the feasibility of other teams and then sum them together.
Although there exist many independent lower bounds for TTP-$k$ \cite{miyashiro2012approximation,yamaguchi2009improved,westphal2014}, we cannot use them directly to get our results. We are interested in a stronger bound.
Before we make some observations on the independent lower bounds, we first need to explore some properties of TTP-$k$.

For TTP-$k$, each team needs to visit each other team's home once in the tournament.
A \emph{road trip} of a team $v$ is a simple cycle starting and ending at $v$.
An \emph{itinerary} of team $v\in V$ on $G$ is a connected subgraph of $G$ that consists of road trips with each simple cycle of length at most $k+1$, and each vertex in $V$ other than $v$ has degree $2$.
An itinerary is \emph{optimal} for a team if it is the itinerary of minimum weight.
Considering an optimal itinerary $I_v$ for team $v$, we use $\psi_v=w(I_v)$ to denote the weight of the optimal itinerary for team $v$. Then, $\psi=\sum_{v\in V}\psi_v$ is a simple independent lower bound for the minimum weight solution of TTP-$k$. Note that this lower bound was also used in the experiment~\cite{easton2001traveling}. However, it is NP-hard to compute $\psi_v$ since it involves solving the $k$-tour problem and this problem is NP-hard for any fixed $k\geq3$~\cite{DBLP:conf/stoc/AsanoKTT97}. Hence, we want to find an alternative lower bound for each team's optimal itinerary.

When $n\bmod {k}=0$, we can prove that there always exists an optimal itinerary with no 2-cycles.

\begin{lemma}\label{2-cycle}
For TTP-$k$ with $k\geq3$, when $n\bmod {k}=0$, there exists an optimal itinerary with no 2-cycles for each team.
\end{lemma}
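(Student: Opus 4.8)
The plan is to take an optimal itinerary $I_v$ for a fixed team $v$ and show that any $2$-cycle appearing in it can be eliminated without increasing the total weight, using the fact that $n \bmod k = 0$. First I would observe that, since $n \bmod k = 0$ and $v$ must visit the other $n-1$ teams, the road trips in $I_v$ partition the $n-1$ vertices $V \setminus \{v\}$ into groups of size at most $k$; because $n-1 \equiv k-1 \pmod k$, the sizes of these groups cannot all be $k$, so there is slack to redistribute vertices among road trips. Suppose $I_v$ contains a $2$-cycle, i.e. a road trip $(v, x, v)$ consisting of a single away game at $x$. The idea is to splice $x$ into another road trip of $I_v$ that currently has fewer than $k$ away games (such a trip exists by the counting above, unless $I_v$ is essentially all $2$-cycles, a degenerate case I would handle separately by noting $k \ge 3$ lets us merge two $2$-cycles into a $3$-cycle).

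The key step is the local exchange argument. Let $(v, x, v)$ be a $2$-cycle and let $(v, y_1, \dots, y_\ell, v)$ be another road trip with $\ell < k$. Replace these two road trips by the single road trip $(v, x, y_1, \dots, y_\ell, v)$, which is still a valid road trip of length $\ell + 2 \le k+1$. The change in weight is
\begin{equation*}
\Delta = w(v,x) + w(x,y_1) + \sum_{i=1}^{\ell-1} w(y_i,y_{i+1}) + w(y_\ell,v) - \big[w(v,x)+w(x,v)\big] - \big[w(v,y_1) + \sum_{i=1}^{\ell-1} w(y_i,y_{i+1}) + w(y_\ell,v)\big],
\end{equation*}
which simplifies to $\Delta = w(x,y_1) - w(v,x) - w(v,y_1)$. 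By the triangle inequality, $w(x,y_1) \le w(x,v) + w(v,y_1)$, hence $\Delta \le 0$. So the new itinerary has weight at most $w(I_v)$, and by optimality it is also optimal, but now with one fewer $2$-cycle.

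Iterating this exchange strictly decreases the number of $2$-cycles each time (the merge step for the all-$2$-cycles case likewise reduces their count), so after finitely many steps we reach an optimal itinerary for $v$ with no $2$-cycles. Applying this to every team $v \in V$ proves the lemma. The main obstacle I anticipate is not the exchange inequality itself—which is a one-line triangle-inequality computation—but making the counting argument fully rigorous: I must argue carefully that whenever a $2$-cycle is present there is always somewhere to put the stranded vertex $x$, i.e. either a road trip with room, or two $2$-cycles to merge. This is exactly where $n \bmod k = 0$ enters (forcing the total number of non-$v$ vertices to be $n-1 \not\equiv 0 \pmod k$, so the trips cannot all be saturated at size $k$), and the degenerate corner case where every road trip is a $2$-cycle needs the separate $2$-cycle merge using $k \ge 3$.
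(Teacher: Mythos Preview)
Your argument is correct and is essentially the same as the paper's: both merge a 2-cycle into a road trip of length at most $k$ via the triangle inequality and invoke $n \equiv 0 \pmod k$ to rule out the obstructed case. The paper phrases it as a minimal-counterexample argument (an optimal itinerary with the fewest 2-cycles can contain no $k'$-cycle with $3 \le k' \le k$ and at most one 2-cycle, forcing $n \equiv 2 \pmod k$), which sidesteps the small wrinkle you flag—namely that the unsaturated trip guaranteed by $n-1 \not\equiv 0 \pmod k$ might be the given 2-cycle itself—but the substance is identical.
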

\begin{proof}
Consider an optimal itinerary $I_v$ of team $v$ with the minimum number of 2-cycles. Assume $I_v$ contains a 2-cycle. Since all cycles share the vertex $v$,
by the triangle inequality, we can get a $(k'+1)$-cycle by shortcutting one 2-cycle and one $k'$-cycle, and the $(k'+1)$-cycle has a weight of no greater than the sum of the weights of the two cycles.
This shows that $I_v$ contains no $k'$-cycles ($3\leq k'\leq k$) and at most one $2$-cycle if it contains one 2-cycle.
Thus, $I_v$ contains only one 2-cycle, and the rest cycles are all $(k+1)$-cycles. We get $n\bmod k=2$, a contradiction to $n\bmod k=0$.
\end{proof}

The number of vertices in graph $H$ is $n_S=2mk$, which is divisible by $k$. We may consider an optimal itinerary of a team $v\in S$ on graph $H$, denoted by $I^H_v$. Recall that $I_v$ is an optimal itinerary of the team $v$ on graph $G$.
By the triangle inequality, it is easy to see that $w(I^H_v)\leq w(I_v)$.
Next, we consider several lower bounds on $w(I^H_v)$, which can also be used as lower bounds of $w(I_v)$.

\subsection{Bounds on the Optimal Itinerary}
In this subsection, we fix a single team $v\in S$ and consider the optimal itinerary $I^H_v$ on graph $H$ for this team.
We will give a more refined analysis than that in the previous paper~\cite{yamaguchi2009improved}.

By Lemma~\ref{2-cycle}, there exists an optimal itinerary $I^H_v$ with no 2-cycles, and it consists of a set of $(k+1)$-cycles and a set of $k'$-cycles ($3\leq k'\leq k$). Hence we will partition $I^H_v$ into two sets: 
\begin{enumerate}
\item $I^H_{v,1}$ consists of all $(k+1)$-cycles; and
\item $I^H_{v,2}$ consists of the other $k'$-cycles.
\end{enumerate}
Define $0\leq \gamma_v\leq 1$, such that $$w(I^H_{v,1}) = \gamma_v w(I^H_v), \quad\mbox{and}\quad w(I^H_{v,2}) = (1-\gamma_v) w(I^H_v). $$
Hence $\gamma_v$ measures the proportion of weights of the $(k+1)$-cycles compared to the entire itinerary.
The edges in graph $H$ incident to $v$ in $I^H_v$ are called \emph{home-edges}, which is the same as $\delta_H(v)\cap I^H_v$. Let $\alpha_v$ and $\beta_v$ be the proportion of weights of the home-edges in $I^H_{v,1}$ and $I^H_{v,2}$, respectively. Namely,
$$\alpha_v w(I^H_{v,1}) = w(I^H_{v,1}\cap \delta_H(v))$$ is the weight of all home-edges in $I^H_{v,1}$, and $$\beta_v w(I^H_{v,2}) = w(I^H_{v,2} \cap \delta_H(v))$$ is the weight of all home-edges in $I^H_{v,2}$.

Now, we are ready to give bounds for the optimal itinerary $I^H_v$.

\begin{lemma}\label{lb-tsp}
Let $C_H$ be a minimum weight Hamiltonian cycle in graph $H$. It holds that $w(I^H_v)\geq w(C_H)$.
\end{lemma}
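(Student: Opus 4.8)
The plan is to show that an optimal itinerary $I^H_v$ of team $v$ on $H$, which is a connected subgraph in which every vertex other than $v$ has degree $2$ and $v$ has even degree, always contains a spanning connected Eulerian multigraph, and in fact that shortcutting it yields a Hamiltonian cycle of $H$ of weight at most $w(I^H_v)$. First I would recall that $I^H_v$ is by definition a union of road trips (simple cycles) all sharing the vertex $v$; hence as a multigraph on the vertex set $S$ it is connected, every vertex $u \neq v$ has degree exactly $2$, and $v$ has degree equal to twice the number of road trips, which is even. So $I^H_v$ is a connected Eulerian multigraph spanning all of $H$.

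Next I would take an Eulerian circuit of $I^H_v$ and shortcut it: traverse the Euler tour starting and ending at $v$, and whenever the tour is about to revisit an already-visited vertex, skip ahead to the next unvisited vertex, using the triangle inequality of the metric $w$ to bound the weight of each shortcut edge by the weight of the subpath it replaces. This is exactly the classical "double-tree"/Eulerian shortcutting argument used for metric TSP. The result is a Hamiltonian cycle $C'$ of $H$ with $w(C') \le w(I^H_v)$. Since $C_H$ is by definition a \emph{minimum} weight Hamiltonian cycle in $H$, we get $w(C_H) \le w(C') \le w(I^H_v)$, which is the claim.

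The main thing to be careful about — the only mild obstacle — is verifying the degree/connectivity bookkeeping so that the Eulerian-circuit shortcutting genuinely applies: one must note that the road-trip decomposition guarantees connectivity through $v$ and that all degrees are even (degree $2$ at non-root vertices, even degree at $v$), so no Eulerization step is even needed; the multigraph is already Eulerian. A second small point is that the itinerary lives on $H = G[S]$, so all edges used and all triangle-inequality applications stay within $S$, and the comparison is to the minimum Hamiltonian cycle of $H$ (not of $G$), which is consistent with the statement. Everything else is the standard metric shortcutting estimate, so I expect the proof to be short.
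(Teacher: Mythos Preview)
Your proposal is correct and follows essentially the same approach as the paper's own proof: observe that $I^H_v$ is a connected spanning multigraph with all vertex degrees even, take an Eulerian tour, and shortcut it via the triangle inequality to obtain a Hamiltonian cycle of weight at most $w(I^H_v)$. Your write-up simply spells out the connectivity and degree bookkeeping in more detail than the paper does.
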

\begin{proof}
According to the definition of the itinerary, we know that each vertex in $I^H_v$ has an even degree. Then, we can get an Eulerian tour in $I^H_v$ and obtain a Hamiltonian cycle in graph $H$ by shortcutting $I^H_v$. Hence
$w(I^H_v)\geq w(C_H)$.
\end{proof}

We also recall the following result.
\begin{lemma}[\cite{christofides1976worst,serdyukov1978some}]\label{lb-chris}
For graph $H$, let $C_H$ be a minimum weight Hamiltonian cycle and $C'_H$ be a Hamiltonian cycle obtained by the Christofides-Serdyukov algorithm. It holds that $w(C'_H)\leq \mbox{MST}(H)+\frac{1}{2}w(C_H)$.
\end{lemma}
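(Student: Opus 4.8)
The plan is to reproduce the classical argument of Christofides and Serdyukov. Recall the algorithm on graph $H$: first compute a minimum weight spanning tree $T$ of $H$, so that $w(T)=\mbox{MST}(H)$; let $O$ be the set of vertices having odd degree in $T$, which has even cardinality by the handshake lemma; compute a minimum weight perfect matching $M$ in the complete graph induced by $O$ (it exists since $|O|$ is even and $w$ is defined on all pairs); form the multigraph $T\cup M$, which is connected and in which every vertex has even degree, hence admits an Eulerian tour; finally, shortcut this Eulerian tour to a Hamiltonian cycle $C'_H$ of $H$, skipping already-visited vertices. Since $w$ is a metric, shortcutting never increases weight, so $w(C'_H)\le w(T)+w(M)=\mbox{MST}(H)+w(M)$. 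It therefore suffices to show $w(M)\le \frac12 w(C_H)$.

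For that inequality, I would start from the minimum weight Hamiltonian cycle $C_H$ of $H$ and restrict it to the vertex set $O$: walking along $C_H$ and bypassing the vertices outside $O$ yields, by the triangle inequality, a Hamiltonian cycle $C_O$ on $O$ with $w(C_O)\le w(C_H)$. Because $|O|$ is even, the edges of the cycle $C_O$ partition into two perfect matchings $M_1,M_2$ of $O$ (taking alternating edges), and $w(M_1)+w(M_2)=w(C_O)$. The lighter of the two has weight at most $\frac12 w(C_O)\le\frac12 w(C_H)$, and since $M$ is a minimum weight perfect matching on $O$ we conclude $w(M)\le\frac12 w(C_H)$. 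Combined with the previous paragraph, this gives $w(C'_H)\le \mbox{MST}(H)+\frac12 w(C_H)$, as desired.

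The only genuinely load-bearing points — and the reason the metric hypothesis is needed — are the two shortcutting steps: that restricting $C_H$ to $O$ does not increase its weight, and that shortcutting the Eulerian tour of $T\cup M$ to a Hamiltonian cycle does not increase its weight. Both follow by iterated use of the triangle inequality. Everything else is elementary bookkeeping: the handshake lemma to see $|O|$ is even, connectivity and even-degreeness of $T\cup M$ to invoke Euler's theorem, and the observation that an even cycle decomposes into two perfect matchings. I do not expect any real obstacle here; the lemma is standard and the write-up is mainly a matter of stating these steps precisely.
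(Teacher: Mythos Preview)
Your argument is the standard Christofides--Serdyukov proof and is correct. Note that the paper does not actually prove this lemma: it is stated as a cited result from~\cite{christofides1976worst,serdyukov1978some} with no accompanying proof, so your write-up simply supplies the classical argument that the paper takes for granted.
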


For ease of proofs in the rest of the section, we also define $S_{v,1}$ and $S_{v,2}$ that partition the vertices in $S\setminus\{v\}$, where $S_{v,1}$ consists of all vertices in $I^H_{v,1}$ except $v$, and $S_{v,2}$ consists of all vertices in $I^H_{v,2}$ except $v$. Our results are mostly simple counting arguments.

\begin{lemma}\label{lb-delta-}
It holds that $(\frac{k-1}{2}-\frac{1}{2}\gamma_v+\alpha_v\gamma_v)w(I^H_v)\geq(\frac{k-2}{2}+\alpha_v)\gamma_v w(I^H_v)+(\frac{k-3}{2}+\beta_v)(1-\gamma_v) w(I^H_v)\geq \deg_H(v)$.
\end{lemma}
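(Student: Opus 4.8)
The statement is a chain of two inequalities, and the plan is to establish each one separately. The left inequality, between the two coefficients of $w(I^H_v)$, is pure algebra: expanding the right-hand coefficient gives
\[
\Bigl(\tfrac{k-2}{2}+\alpha_v\Bigr)\gamma_v+\Bigl(\tfrac{k-3}{2}+\beta_v\Bigr)(1-\gamma_v)=\tfrac{k-3}{2}+\tfrac{1}{2}\gamma_v+\alpha_v\gamma_v+\beta_v(1-\gamma_v),
\]
so its difference from $\tfrac{k-1}{2}-\tfrac{1}{2}\gamma_v+\alpha_v\gamma_v$ equals $(1-\gamma_v)(1-\beta_v)\ge 0$, since $\gamma_v,\beta_v\in[0,1]$ and $w(I^H_v)\ge 0$. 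There is no obstacle here; this step only serves to produce a cleaner bound that no longer mentions $\beta_v$.

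The right inequality is the substantive one, and it is a counting argument over the cycles of $I^H_v$. Because $I^H_v$ is an itinerary on $H=G[S]$ in which every vertex of $S\setminus\{v\}$ has degree $2$, and all cycles of $I^H_v$ pass through $v$, the vertex sets of the cycles partition $S\setminus\{v\}$, so
\[
\deg_H(v)=\sum_{u\in S\setminus\{v\}}w(v,u)=\sum_{C}\ \sum_{x\in C,\,x\ne v}w(v,x),
\]
the outer sum ranging over the cycles $C$ of $I^H_v$. I would then bound each cycle's contribution separately.

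Fix a cycle $C$ through $v$ with exactly $\ell$ vertices other than $v$ (hence $\ell+1$ edges). The two neighbours of $v$ on $C$ are joined to $v$ precisely by the two edges of $C\cap\delta_H(v)$ (the home-edges of $C$), so those two vertices contribute exactly $w(C\cap\delta_H(v))$ to $\sum_{x\in C}w(v,x)$. For each of the remaining $\ell-2$ vertices $x$, the two arcs of $C$ between $v$ and $x$ have weights summing to $w(C)$, so shortcutting the lighter arc and applying the triangle inequality gives $w(v,x)\le\tfrac12 w(C)$; therefore
\[
\sum_{x\in C,\,x\ne v}w(v,x)\ \le\ w(C\cap\delta_H(v))+\tfrac{\ell-2}{2}\,w(C).
\]
By Lemma~\ref{2-cycle} I may take $I^H_v$ with no $2$-cycles, so each cycle of $I^H_{v,1}$ has $\ell=k$ and each cycle of $I^H_{v,2}$ has $\ell=k'-1$ with $3\le k'\le k$; in particular $\ell-2\ge0$ throughout, and $\tfrac{\ell-2}{2}=\tfrac{k'-3}{2}\le\tfrac{k-3}{2}$ on the cycles of $I^H_{v,2}$. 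Summing the last display over the cycles of $I^H_{v,1}$, then over those of $I^H_{v,2}$, and substituting $w(I^H_{v,1}\cap\delta_H(v))=\alpha_v\gamma_v\,w(I^H_v)$, $w(I^H_{v,2}\cap\delta_H(v))=\beta_v(1-\gamma_v)\,w(I^H_v)$, $w(I^H_{v,1})=\gamma_v\,w(I^H_v)$, $w(I^H_{v,2})=(1-\gamma_v)\,w(I^H_v)$ yields precisely $\bigl(\tfrac{k-2}{2}+\alpha_v\bigr)\gamma_v\,w(I^H_v)+\bigl(\tfrac{k-3}{2}+\beta_v\bigr)(1-\gamma_v)\,w(I^H_v)$, as claimed.

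There is no deep obstacle; the difficulty is entirely careful bookkeeping. The points that need attention are: invoking Lemma~\ref{2-cycle} so that the coefficients $\tfrac{k'-3}{2}$ are nonnegative and the relaxation to $\tfrac{k-3}{2}$ is legitimate; observing that the arc argument charges the two neighbours of $v$ \emph{exactly} the home-edge weight of the cycle (so $\alpha_v$ and $\beta_v$ enter with coefficient $1$) while charging each non-neighbour at most $\tfrac12 w(C)$; and checking that every cycle of $I^H_v$ contains $v$, which is what makes the decomposition of $\deg_H(v)$ over the cycles valid.
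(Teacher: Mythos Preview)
Your proof is correct and follows essentially the same approach as the paper: the first inequality is handled by the observation that $\beta_v\le 1$ (you compute the exact gap $(1-\gamma_v)(1-\beta_v)$, the paper just substitutes $\beta_v\mapsto 1$), and the second inequality is the same per-cycle counting argument in which the two home-edges contribute exactly $w(C\cap\delta_H(v))$ and each of the remaining $\ell-2$ non-neighbours contributes at most $\tfrac12 w(C)$ by the triangle inequality along the shorter arc. Your bookkeeping of $\ell=k$ on $(k+1)$-cycles and $\ell=k'-1\le k-1$ on $k'$-cycles, together with the invocation of Lemma~\ref{2-cycle} to ensure $k'\ge 3$, matches the paper's reasoning.
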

\begin{proof}
By the definition of $\beta_v$, we have $1\geq\beta_v$. Hence, we have
\begin{align*}
&\lrA{\frac{k-1}{2}-\frac{1}{2}\gamma_v+\alpha_v\gamma_v}w(I^H_v)\\
&=\lrA{\frac{k-2}{2}+\alpha_v}\gamma_v w(I^H_v)+\lrA{\frac{k-3}{2}+1}(1-\gamma_v) w(I^H_v)\\
&\geq\lrA{\frac{k-2}{2}+\alpha_v}\gamma_v w(I^H_v)+\lrA{\frac{k-3}{2}+\beta_v}(1-\gamma_v) w(I^H_v).
\end{align*}
The left inequality holds.
Now we consider the right inequality.

First, we can see $\deg_H(v)=\sum_{u\in S_{v,1}} w(u,v)+\sum_{u\in S_{v,2}}w(u,v)$.

Next, recall that $\alpha_v\gamma_vw(I^H_v) = w(I^H_{v,1}\cap \delta_H(v))$ is the total weight of all home-edges in $(k+1)$-cycles.
For any $(k+1)$-cycle $C=(v,v_1,\dots,v_k,v)$ in $I^H_{v,1}$, note that the edges $(v,v_1)$ and $(v,v_k)$ are home-edges which can be counted by $w(I^H_{v,1}\cap \delta_H(v))$, but the edges $(v,v_i)$ with $2\leq i\leq k-1$ are not home-edges.
By the triangle inequality, we know that $w(v,v_i)\leq \frac{1}{2}w(C)$. Therefore, the weight of the uncounted edges are at most $\frac{k-2}{2}w(C)$. Thus, we have
$$\sum_{u\in S_{v,1}} w(u,v)\leq \lrA{\frac{k-2}{2}+\alpha_v}\gamma_v w(I^H_v).$$

Similarly, $\beta_v(1-\gamma_v)w(I^H_v) = w(I^H_{v,2}\cap \delta_H(v))$ is the weight of all home-edges in $k'$-cycles. For any $k'$-cycle $C=(v,v_1,\dots,v_{k'-1},v)$ in $I^H_{v,2}$, the home-edges $(v,v_1)$ and $(v,v_{k'-1})$ can be counted by $w(I^H_{v,2}\cap \delta_H(v))$.
The weight of the uncounted edges are at most $\frac{k'-3}{2}w(C)\leq \frac{k-3}{2}w(C)$.
Thus, we have
$$\sum_{u\in S_{v,2}} w(u,v)\leq \lrA{\frac{k-3}{2}+\beta_v}\gamma_v w(I^H_v).$$

Then, we have
\begin{align*}
&\lrA{\frac{k-2}{2}+\alpha_v}\gamma_v w(I^H_v)+\lrA{\frac{k-3}{2}+\beta_v}(1-\gamma_v) w(I^H_v)\\
&\geq \sum_{u\in S_{v,1}} w(u,v)+\sum_{u\in S_{v,2}}w(u,v)=\deg_H(v).
\end{align*}
\end{proof}

\begin{lemma}\label{lb-tree-}
It holds that $(1-\frac{1}{2}\alpha_v)\gamma_v w(I^H_v)+(1-\frac{1}{2}\beta_v)(1-\gamma_v) w(I^H_v) \geq \mbox{MST}(H)$.
\end{lemma}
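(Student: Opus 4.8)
The plan is to produce, directly from the itinerary $I^H_v$, a concrete spanning tree $T$ of $H$ and then conclude via $\mbox{MST}(H)\le w(T)$. The first step is a bookkeeping identity: expanding the coefficients and recalling that $w(I^H_{v,1}\cap\delta_H(v))=\alpha_v\gamma_v w(I^H_v)$ and $w(I^H_{v,2}\cap\delta_H(v))=\beta_v(1-\gamma_v)w(I^H_v)$, and that $I^H_{v,1}$ and $I^H_{v,2}$ partition $I^H_v$, one gets
$$\lrA{1-\frac12\alpha_v}\gamma_v w(I^H_v)+\lrA{1-\frac12\beta_v}(1-\gamma_v)w(I^H_v)=w(I^H_v)-\frac12\,w(I^H_v\cap\delta_H(v)).$$
So it suffices to exhibit a spanning tree of $H$ whose weight is at most the itinerary weight minus half the total weight of the home-edges.

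Next I would describe the structure of $I^H_v$. Being a union of simple road trips through $v$ in which every vertex other than $v$ has degree $2$, the cycles of $I^H_v$ must meet pairwise only at $v$ (a non-$v$ vertex lying on two of them would have degree at least $4$), so they partition the vertices of $S\setminus\{v\}$. Each such cycle contributes exactly the two edges of $I^H_v$ incident to $v$ that it uses, i.e., exactly two home-edges. A short edge count — a bouquet of $c$ simple cycles sharing only $v$ and covering all $n_S$ vertices has exactly $n_S-1+c$ edges — shows that deleting one edge from each cycle leaves a connected subgraph with exactly $n_S-1$ edges spanning $S$, hence a spanning tree of $H$.

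For the weight, within each cycle I would delete the heavier of its two home-edges; then the weight removed from that cycle is at least half the combined weight of its two home-edges, and summing over all cycles, the total removed weight is at least $\frac12\,w(I^H_v\cap\delta_H(v))$. Denoting the resulting spanning tree by $T$, we get $w(T)\le w(I^H_v)-\frac12\,w(I^H_v\cap\delta_H(v))$, which by the identity above equals the left-hand side of the lemma, and $\mbox{MST}(H)\le w(T)$ completes the argument.

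This is essentially a counting argument, so I do not anticipate a real obstacle; the only points that need a careful (but easy) justification are that the cycles of $I^H_v$ intersect only at $v$ — which follows from the degree-$2$ condition on non-$v$ vertices together with the fact that $I^H_v$ is a union of simple road trips through $v$ — and the verification that deleting exactly one edge per cycle produces precisely a spanning tree rather than something disconnected or still containing a cycle.
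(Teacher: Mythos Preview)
Your proposal is correct and follows essentially the same approach as the paper: delete from each road trip the heavier of its two home-edges to obtain a spanning tree, then use that the removed weight is at least half the total home-edge weight. The only difference is cosmetic --- you first collapse the left-hand side to $w(I^H_v)-\tfrac12 w(I^H_v\cap\delta_H(v))$ and supply more explicit justification that the result is a spanning tree, whereas the paper keeps the $I^H_{v,1}$/$I^H_{v,2}$ split and states the spanning-tree claim directly.
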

\begin{proof}
In the optimal itinerary of $v$ on $H$, the total weight of all home-edges is $$\alpha_v\gamma_v w(I^H_v)+\beta_v(1-\gamma_v)w(I^H_v).$$
For each cycle, there are exactly two home-edges. We can delete the longer edge from each cycle from $I^H_v$,
and get a spanning tree with a total weight of at most
$$\lrA{1-\frac{1}{2}\alpha_v}\gamma_v w(I^H_v)+\lrA{1-\frac{1}{2}\beta_v}(1-\gamma_v) w(I^H_v).$$
Since the weight of a minimum spanning tree in graph $H$ is $\mbox{MST}(H)$, we have that
$$\lrA{1-\frac{1}{2}\alpha_v}\gamma_v w(I^H_v)+\lrA{1-\frac{1}{2}\beta_v}(1-\gamma_v)w(I^H_v)\geq \mbox{MST}(H).$$
The lemma holds.
\end{proof}

The new bounds in Lemmas \ref{lb-delta-} and \ref{lb-tree-} are stronger than that in \cite{yamaguchi2009improved}. Next, we propose two new lower bounds on the minimum weight perfect matching and $k$-cycle/path packing. The lower bound related to the minimum weight perfect matching works only for TTP-3.
\begin{lemma}\label{lb-matching-}
Let $M_H$ be a minimum weight perfect matching in graph $H$. For TTP-3, we have that $\frac{1}{2}\gamma_v w(I^H_v)+(1-\beta_v)(1-\gamma_v) w(I^H_v)\geq w(M_H)$.
\end{lemma}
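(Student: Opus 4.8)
The plan is to extract, from the optimal itinerary $I^H_v$, a perfect matching of $S$ whose weight is at most the left-hand side of the claimed inequality, and then conclude by minimality of $M_H$. Fix $k=3$. By Lemma~\ref{2-cycle} we may assume $I^H_v$ has no $2$-cycles, so $I^H_{v,1}$ is a union of $4$-cycles through $v$ and $I^H_{v,2}$ is a union of $3$-cycles through $v$, and $S_{v,1}$, $S_{v,2}$ partition $S\setminus\{v\}$.

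First I would handle the $3$-cycles, which is where the restriction $k=3$ is used crucially. In a $3$-cycle $(v,x,y,v)$ of $I^H_{v,2}$ the only edge that is not a home-edge is $(x,y)$. Since in an itinerary every vertex other than $v$ has degree $2$, each vertex of $S_{v,2}$ lies in exactly one such $3$-cycle, so the edges $(x,y)$ taken over all $3$-cycles are pairwise vertex-disjoint and form a perfect matching $M_2$ of $S_{v,2}$. Its weight equals the total weight of the non-home edges of $I^H_{v,2}$, that is, $w(M_2)=(1-\beta_v)w(I^H_{v,2})=(1-\beta_v)(1-\gamma_v)w(I^H_v)$.

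Next I would build a matching on the remaining vertex set $\{v\}\cup S_{v,1}$, which has even cardinality because $n_S$ and $|S_{v,2}|$ are both even. The graph $I^H_{v,1}$ is connected (all of its $4$-cycles pass through $v$) and every vertex has even degree, so arguing exactly as in the proof of Lemma~\ref{lb-tsp} we take an Eulerian tour of $I^H_{v,1}$ and shortcut it to a Hamiltonian cycle $\widetilde{C}$ on $\{v\}\cup S_{v,1}$ with $w(\widetilde{C})\le w(I^H_{v,1})=\gamma_v w(I^H_v)$. A Hamiltonian cycle on an even number of vertices is the edge-disjoint union of two perfect matchings of its vertex set; taking the lighter one yields a perfect matching $M_1$ of $\{v\}\cup S_{v,1}$ with $w(M_1)\le \frac{1}{2}w(\widetilde{C})\le \frac{1}{2}\gamma_v w(I^H_v)$. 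Then $M_1\cup M_2$ is a perfect matching of $S=\{v\}\cup S_{v,1}\cup S_{v,2}$, hence $w(M_H)\le w(M_1)+w(M_2)\le \frac{1}{2}\gamma_v w(I^H_v)+(1-\beta_v)(1-\gamma_v)w(I^H_v)$, which is the claim. The only delicate point is the parity bookkeeping ensuring that $\widetilde{C}$ has an even number of vertices (equivalently, that the number of $4$-cycles is odd, which follows from $n_S$ being even together with $|S_{v,2}|$ being even); apart from that, this is just the shortcutting argument already used for Lemma~\ref{lb-tsp}, so I do not expect a real obstacle.
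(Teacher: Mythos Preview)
Your proof is correct and is essentially the same as the paper's: both delete the two home-edges from each $3$-cycle to get a perfect matching on $S_{v,2}$ of weight $(1-\beta_v)(1-\gamma_v)w(I^H_v)$, then shortcut $I^H_{v,1}$ to a Hamiltonian cycle on $\{v\}\cup S_{v,1}$ (which has even size by the parity argument you gave) and take its lighter alternating matching. The paper records the same parity observation by noting $n_S=6m$ forces $|S_{v,1}|$ odd, so your justification is exactly what is needed.
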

\begin{proof}
We assume that $k=3$ and consider TTP-3. Note that $I^H_{v,1}$ is a set of 4-cycles and $I^H_{v,2}$ is a set of 3-cycles.
Since $n_S=2mk=6m$, we know that the number of vertices in $S_{v,1}$ is odd but even in $S_{v,2}$. Since any pair of 3-cycles only share one common vertex $v$, after we delete both of home-edges of each 3-cycle, we can get a perfect matching $M'_H$ in graph $H[S_{v,2}]$ with a total weight of $(1-\beta_v)(1-\gamma_v)w(I^H_v)$.

Then, by a similar argument in the proof of Lemma \ref{lb-tsp}, we know that $\gamma_v w(I^H_v)=w(I^H_{v,1})$ is greater than the weight of the minimum weight Hamiltonian cycle in graph $H[S_{v,1}\cup\{v\}]$. Since the number of vertices in this graph is even, any Hamiltonian cycle in this graph can be decomposed into two perfect matchings. Thus, we can get a perfect matching $M''_H$ in this graph with a total weight of at most $\frac{1}{2}\gamma_v w(I^H_v)$. Therefore, $M'_H\cup M''_H$ is a perfect matching in graph $H$, and then we have $w(M_H)\leq w(M'_H\cup M''_H)\leq \frac{1}{2}\gamma_v w(I^H_v)+(1-\beta_v)(1-\gamma_v)w(I^H_v)$.
\end{proof}

\begin{lemma}\label{lb-packing-}
For graph $H$, let $\P^*_k$ be a minimum weight $k$-path packing, and $\C^*_k$ be a minimum weight $k$-cycle packing. It holds that $\lrA{\frac{k-1}{k}+\frac{1}{k}\gamma_v-\alpha_v\gamma_v}w(I^H_v)=(1-\alpha_v)\gamma_v w(I^H_v)+\frac{k-1}{k}(1-\gamma_v)w(I^H_v)\geq w(\P^*_k)\geq \frac{1}{2}w(\C^*_k)$.
\end{lemma}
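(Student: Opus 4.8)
The plan is to prove both inequalities by explicit constructions. For the first, I will build a $k$-path packing of $H$ directly from the optimal itinerary $I^H_v$ and bound its weight; since $\P^*_k$ is a minimum-weight $k$-path packing, any such construction gives an upper bound on $w(\P^*_k)$. For the second, I will turn an arbitrary $k$-path packing into a $k$-cycle packing by \emph{completing} it, losing only a factor $2$ by the triangle inequality, so that $w(\C^*_k)\le 2w(\P^*_k)$.

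For the first inequality I would treat the two parts of the itinerary separately. By Lemma~\ref{2-cycle} we may assume $I^H_v$ has no $2$-cycles, so $I^H_{v,1}$ is a vertex-disjoint union of $(k+1)$-cycles through $v$ and $I^H_{v,2}$ is a vertex-disjoint union of $k'$-cycles ($3\le k'\le k$) through $v$. From each $(k+1)$-cycle $(v,v_1,\dots,v_k,v)$ delete the vertex $v$: the two deleted edges are exactly the home-edges of that cycle, and what remains is the $k$-path $(v_1,\dots,v_k)$. Over all $(k+1)$-cycles this produces vertex-disjoint $k$-paths covering $S_{v,1}$ of total weight $w(I^H_{v,1})-w(I^H_{v,1}\cap\delta_H(v))=(1-\alpha_v)\gamma_v w(I^H_v)$. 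For $I^H_{v,2}$: every vertex of $S_{v,2}$ has degree $2$ in it and $v$ has even degree, so the subgraph $I^H_{v,2}$ on vertex set $S_{v,2}\cup\{v\}$ is connected and Eulerian; shortcutting an Eulerian circuit yields a Hamiltonian cycle $C$ on $S_{v,2}\cup\{v\}$ with $w(C)\le w(I^H_{v,2})=(1-\gamma_v)w(I^H_v)$, exactly as in Lemma~\ref{lb-tsp}. The key observation is that $\size{S_{v,1}}$ is a multiple of $k$ (each $(k+1)$-cycle contributes $k$ vertices other than $v$), so $\size{S_{v,2}\cup\{v\}}=n_S-\size{S_{v,1}}$ is also a multiple of $k$; hence $C$, a cycle on $jk$ vertices, can be cut into $j$ vertex-disjoint $k$-paths by deleting one edge out of every $k$ consecutive edges. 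There are $k$ ``rotations'' for this choice, the deleted edge sets over the $k$ rotations partition $E(C)$, so some rotation deletes edges of weight at least $\tfrac1k w(C)$, leaving $k$-paths of weight at most $\tfrac{k-1}{k}w(C)\le\tfrac{k-1}{k}(1-\gamma_v)w(I^H_v)$. Taking the union with the paths obtained from $I^H_{v,1}$ gives a $k$-path packing of $H$ of weight at most $(1-\alpha_v)\gamma_v w(I^H_v)+\tfrac{k-1}{k}(1-\gamma_v)w(I^H_v)$, which rearranges to $\left(\tfrac{k-1}{k}+\tfrac1k\gamma_v-\alpha_v\gamma_v\right)w(I^H_v)$; hence $w(\P^*_k)$ is at most this quantity.

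For the second inequality I would complete $\P^*_k$: for each $k$-path $(v_1,\dots,v_k)$ in it add the edge $(v_1,v_k)$, forming a $k$-cycle. By the triangle inequality $w(v_1,v_k)\le\sum_{i=1}^{k-1}w(v_i,v_{i+1})$, so each new $k$-cycle has weight at most twice that of its $k$-path; summing over the packing, the completion is a $k$-cycle packing of weight at most $2w(\P^*_k)$, and minimality of $\C^*_k$ gives $w(\C^*_k)\le 2w(\P^*_k)$, i.e. $w(\P^*_k)\ge\tfrac12 w(\C^*_k)$. The only delicate points in the argument are the two middle observations: that $\size{S_{v,2}\cup\{v\}}\equiv 0\pmod k$, which is what makes the Hamiltonian cycle actually decompose into $k$-paths, and that averaging over the $k$ rotations is what produces the tight coefficient $\tfrac{k-1}{k}$ rather than something weaker; the remainder is routine use of the triangle inequality and Eulerian shortcutting, exactly in the spirit of Lemmas~\ref{lb-tsp} and~\ref{lb-tree-}.
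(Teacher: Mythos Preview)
Your proposal is correct and follows essentially the same approach as the paper: delete the two home-edges from each $(k+1)$-cycle of $I^H_{v,1}$ to obtain $k$-paths on $S_{v,1}$, shortcut $I^H_{v,2}$ to a Hamiltonian cycle on $S_{v,2}\cup\{v\}$ (whose order is divisible by $k$) and chop it into $k$-paths keeping a $\tfrac{k-1}{k}$ fraction of the weight, and for the last inequality complete $\P^*_k$ to a $k$-cycle packing via the triangle inequality. Your explicit averaging over the $k$ rotations is in fact more detailed than the paper, which simply asserts the $\tfrac{k-1}{k}$ bound.
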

\begin{proof}
First, we consider $w(\P^*_k)\geq \frac{1}{2}w(\C^*_k)$.
Let $\C'_k$ be the $k$-cycle packing obtained by completing the $k$-path packing.
For any $k$-path in $\P^*_k$, saying $P=(v_1,\dots,v_k)$, we obtain $w(P)\geq w(v_1,v_k)$ by the triangle inequality. This shows $2w(\P^*_k)\geq w(\C'_k)\geq w(\C^*_k)$, and we are done.

We note that the number of vertices is divisible by $k$ in $S_{v,1}$ but not in $S_{v,2}$. Since any pair of $(k+1)$-cycles only share one common vertex $v$, after we delete both of home-edges for each $(k+1)$-cycle, we can get a $k$-path packing $\P'_k$ in graph $H[S_{v,1}]$ such that $w(\P'_k) = (1-\alpha_v)\gamma_v w(I^H_v)$.

Then, by a similar argument in the proof of Lemma \ref{lb-tsp}, we know that $(1-\gamma_v)w(I^H_v) = w(I^H_{v,2}) \geq w(C)$ where $C$ is the minimum weight Hamiltonian cycle in graph $H[S_{v,2}\cup\{v\}]$. Since the number of vertices in this graph equals $n_S$ minus the number of vertices in $S_{v,1}$, which is divisible by $k$, it is easy to see that we can delete an edge every $k-1$ edges along $C$ to get a $k$-path packing $\P''_k$ such that $w(\P''_k)\leq \frac{k-1}{k} w(C)$.
Note that $\P'_k\cup \P''_k$ is a $k$-path packing in graph $H$. Thus, we have that
$(1-\alpha_v)\gamma_v w(I^H_v)+\frac{2}{3}(1-\gamma_v) w(I^H_v) \geq w(\P'_k\cup \P''_k) \geq w(\P^*_k)$.
\end{proof}

Then, we propose a new lower bound related to the minimum weight mod $k$-cycle packing, a stronger lower bound related to the minimum weight $k$-cycle packing, and a lower bound related to the minimum weight perfect matching for any $k\geq 3$. 

\begin{lemma}\label{lb-pack1-}
For graph $H$, let $\C^*_{\bmod k}$ be a minimum weight mod $k$-cycle packing. It holds that $\min\{2(1-\alpha_v),1\}\cdot\gamma_v w(I^H_v)+(1-\gamma_v)w(I^H_v)\geq w(\C^*_{\bmod k})$.
\end{lemma}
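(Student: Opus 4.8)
The plan is to mimic the structure of the proof of Lemma~\ref{lb-packing-}, but now exploiting that $\C^*_{\bmod k}$ is allowed to have components of \emph{any} length divisible by $k$, which removes the divisibility obstruction that forced the weaker $\frac{k-1}{k}$ factor on the small cycles there. As before, I would start from the optimal itinerary $I^H_v$ with no $2$-cycles (Lemma~\ref{2-cycle}) and its partition into $I^H_{v,1}$ (the $(k+1)$-cycles) and $I^H_{v,2}$ (the $k'$-cycles with $3\le k'\le k$), with vertex sets $S_{v,1}$ and $S_{v,2}$. The key arithmetic fact is that $|S_{v,1}|\equiv 0 \pmod k$ (each $(k+1)$-cycle through $v$ contributes $k$ vertices to $S_{v,1}$), and hence $|S_{v,2}| = n_S - |S_{v,1}| \equiv 0 \pmod k$ as well, since $n_S = 2mk$. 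So both vertex sets have cardinality a multiple of $k$, and I want to build a mod $k$-cycle packing separately on $H[S_{v,1}]$ and on $H[S_{v,2}]$ whose total weight is at most the claimed bound, then take the union.

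For the $(k+1)$-cycles: from each $(k+1)$-cycle $C=(v,v_1,\dots,v_k,v)$ in $I^H_{v,1}$, deleting the two home-edges $(v,v_1)$ and $(v,v_k)$ leaves the $k$-path $(v_1,\dots,v_k)$; completing it (adding edge $(v_1,v_k)$) gives a $k$-cycle of weight at most $w(C\setminus\delta_H(v)) + w(v_1,v_k) \le w(C\setminus\delta_H(v)) + \min\{w(v_1,v_k),\, w((v,v_1),(v,v_k))\}$. Summing, this yields a mod $k$-cycle packing (in fact a $k$-cycle packing) on $S_{v,1}$ of weight at most $\min\{2(1-\alpha_v),\,1\}\cdot\gamma_v w(I^H_v)$: the $2(1-\alpha_v)\gamma_v w(I^H_v)$ bound comes from bounding each $w(v_1,v_k)$ by $w(v,v_1)+w(v,v_k)$ (triangle inequality) so that the added edges cost at most the home-edge weight $\alpha_v\gamma_v w(I^H_v)$, giving $(1-\alpha_v)\gamma_v w(I^H_v) + \alpha_v\gamma_v w(I^H_v)$ for the non-home part plus added edges — wait, I must be careful: the non-home part is $(1-\alpha_v)\gamma_v w(I^H_v)$ and the added closing edges cost at most $\min\{\text{non-home part per cycle (via triangle ineq. within path)},\ \alpha_v\gamma_v w(I^H_v)\}$, so the total is at most $(1-\alpha_v)\gamma_v w(I^H_v) + \min\{(1-\alpha_v)\gamma_v w(I^H_v),\ \alpha_v\gamma_v w(I^H_v)\} = \min\{2(1-\alpha_v),\,1\}\cdot\gamma_v w(I^H_v)$, using that $w(v_1,v_k)$ is at most the weight of the path $(v_1,\dots,v_k)$ and also at most $w(v,v_1)+w(v,v_k)$.

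For the $k'$-cycles: exactly as in Lemma~\ref{lb-packing-}, $w(I^H_{v,2}) \ge w(C)$ where $C$ is a minimum-weight Hamiltonian cycle on $H[S_{v,2}\cup\{v\}]$; but now I only need a mod $k$-cycle packing on $H[S_{v,2}]$, not on $S_{v,2}\cup\{v\}$. Since $|S_{v,2}|\equiv 0\pmod k$, I can simply delete the two edges of $C$ incident to $v$ — this splits $C$ into a single path on the $|S_{v,2}|$ vertices of $S_{v,2}$ — and then close that path into one big cycle of length $|S_{v,2}|$, a valid mod $k$-cycle, of weight at most $w(C\setminus\delta(v)) + \min\{\text{endpoint edge}\} \le w(C) \le (1-\gamma_v)w(I^H_v)$ (the closing edge is bounded by the two deleted edges via triangle inequality, so the total does not exceed $w(C)$). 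Taking the union of the two packings gives a mod $k$-cycle packing of $H$ of weight at most $\min\{2(1-\alpha_v),\,1\}\cdot\gamma_v w(I^H_v) + (1-\gamma_v)w(I^H_v)$, hence at least $w(\C^*_{\bmod k})$.

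The main obstacle I anticipate is getting the $S_{v,1}$ bound tight, i.e.\ correctly arguing the $\min\{2(1-\alpha_v),1\}$ factor: one has to simultaneously use two different triangle-inequality estimates for the closing edge of each $k$-path (bounding it by the path weight in one branch of the min, and by the sum of the two home-edges in the other) and check that summing over all $(k+1)$-cycles indeed produces the clean expression $\min\{2(1-\alpha_v),1\}\cdot\gamma_v w(I^H_v)$ rather than a per-cycle quantity that does not telescope. Everything else — the parity/divisibility bookkeeping and the $S_{v,2}$ estimate — is a routine adaptation of Lemma~\ref{lb-packing-}.
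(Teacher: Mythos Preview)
Your approach on the $(k+1)$-cycle side is exactly the paper's: close each $k$-path $(v_1,\dots,v_k)$ into a $k$-cycle and bound the closing edge $w(v_1,v_k)$ both by the path weight and by $w(v,v_1)+w(v,v_k)$; summing and using $\sum_C\min\{a_C,b_C\}\le\min\{\sum_C a_C,\sum_C b_C\}$ (which is the easy direction) gives the clean $\min\{2(1-\alpha_v),1\}\cdot\gamma_v w(I^H_v)$ bound, so your ``telescoping'' worry is not an issue.

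The genuine gap is on the $S_{v,2}$ side. Your divisibility computation is off by one: $S_{v,1}$ and $S_{v,2}$ partition $S\setminus\{v\}$, not $S$, so $|S_{v,2}|=n_S-1-|S_{v,1}|\equiv -1\pmod k$, \emph{not} $0\pmod k$. Hence closing the path on $S_{v,2}$ into a single cycle does \emph{not} give a mod $k$-cycle, and moreover your union $S_{v,1}\cup S_{v,2}$ fails to cover $v$ at all, so you never get a mod $k$-cycle packing of $H$.

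The fix is exactly what the paper does: keep $v$ in the second piece. The Hamiltonian cycle $C''$ on $H[S_{v,2}\cup\{v\}]$ already has $|S_{v,2}|+1=n_S-|S_{v,1}|\equiv 0\pmod k$ vertices, so $\{C''\}$ is itself a mod $k$-cycle packing on $S_{v,2}\cup\{v\}$ with $w(C'')\le (1-\gamma_v)w(I^H_v)$, and its union with the $k$-cycle packing on $S_{v,1}$ covers all of $S$.
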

\begin{proof}
We first construct a $k$-cycle packing in $H[S_{v,1}]$.
For any $(k+1)$-cycle $C=(v,v_1,\dots,v_k,v)$ in $I^H_{v,1}$, we can get a $(k+1)$-cycle $C'=(v_1,\dots,v_k,v_1)$ with a weight of $\sum_{i=1}^{k-1}w(v_i, v_{i+1})+w(v_1, v_k)=w(C)-w(v,v_1)-w(v,v_k)+w(v_1,v_k)$. By the triangle inequality, we have
\[
w(v_1,v_k)\leq \min\{w(C)-w(v,v_1)-w(v,v_k),\  w(v,v_1)+w(v,v_k)\}.
\]
Hence, we have
\[
w(C')\leq \min\{2(w(C)-w(v,v_1)-w(v,v_k)),\ w(C)\}.
\]
Since $(v,v_1)$ and $(v,v_k)$ are home-edges, we can get a $k$-cycle packing $\C'_k$ in $H[S_{v,1}]$ such that
\begin{align*}
w(\C'_k)&\leq \min\{2(w(I^H_{v,1})-w(I^H_{v,1}\cap \delta_H(v))),\ w(I^H_{v,1})\}\\
&=\min\{2(1-\alpha_v)w(I^H_{v,1}),\ w(I^H_{v,1})\}\\
&=\min\{2(1-\alpha_v),1\}\cdot \gamma_vw(I^H_v),
\end{align*}
where the first equality follows from $w(I^H_{v,1}\cap \delta_H(v))=\alpha_vw(I^H_{v,1})$, and the second follows from $w(I^H_{v,1})=\gamma_vw(I^H_{v})$.

Then, by the proof of Lemma~\ref{lb-packing-}, we know that $(1-\gamma_v)w(I^H_v) \geq w(C'')$, where $C''$ is the minimum weight Hamiltonian cycle in graph $H[S_{v,2}\cup\{v\}]$ and the number of vertices in $C$ is divisible by $k$. Hence, we can get a mod $k$-cycle packing $\C''_{\bmod k}=\{C''\}$ in $H[S_{v,2}\cup\{v\}]$ such that
\[
w(\C''_{\bmod k}) \leq (1-\gamma_v)w(I^H_v).
\]

Since $\C'_k\cup\C''_{\bmod k}$ is a mod $k$-cycle packing in $H$ and $\C^*_{\bmod k}$ is a minimum weight mod $k$-cycle packing in $H$, $\C^*_{\bmod k}$ has a weight of at most $\min\{2(1-\alpha_v),1\}\cdot\gamma_v w(I^H_v)+(1-\gamma_v)w(I^H_v)$.
\end{proof}

\begin{lemma}\label{lb-pack2-}
For graph $H$, let $\C^*_k$ be a minimum weight $k$-cycle packing. It holds that $\min\{2(1-\alpha_v),1\}\cdot\gamma_v w(I^H_v)+\frac{2(k-1)}{k}(1-\gamma_v)w(I^H_v)\geq w(\C^*_k)$.
\end{lemma}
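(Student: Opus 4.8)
The plan is to combine the $(k+1)$-cycle shortcutting trick from the proof of Lemma~\ref{lb-pack1-} with a decomposition of a Hamiltonian cycle into a $k$-cycle packing. As before, partition $I^H_v$ into $I^H_{v,1}$ (the $(k+1)$-cycles) and $I^H_{v,2}$ (the remaining $k'$-cycles, $3\leq k'\leq k$), and recall that the number of vertices in $S_{v,1}$ is divisible by $k$, hence so is $n_S-|S_{v,1}|$, which is the number of vertices in $H[S_{v,2}\cup\{v\}]$.

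First I would treat $I^H_{v,1}$ exactly as in Lemma~\ref{lb-pack1-}: for each $(k+1)$-cycle $C=(v,v_1,\dots,v_k,v)$ in $I^H_{v,1}$, delete the two home-edges $(v,v_1),(v,v_k)$ and insert the chord $(v_1,v_k)$ to obtain a $k$-cycle $C'=(v_1,\dots,v_k,v_1)$. The triangle inequality gives $w(v_1,v_k)\leq \min\{w(C)-w(v,v_1)-w(v,v_k),\ w(v,v_1)+w(v,v_k)\}$, so $w(C')\leq\min\{2(w(C)-w(v,v_1)-w(v,v_k)),\ w(C)\}$. Summing over all $(k+1)$-cycles, and using $w(I^H_{v,1}\cap\delta_H(v))=\alpha_v w(I^H_{v,1})$ and $w(I^H_{v,1})=\gamma_v w(I^H_v)$, yields a $k$-cycle packing $\C'_k$ of $H[S_{v,1}]$ with $w(\C'_k)\leq\min\{2(1-\alpha_v),1\}\cdot\gamma_v w(I^H_v)$.

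Next I would treat $I^H_{v,2}$. By the Eulerian-shortcutting argument used in the proof of Lemma~\ref{lb-tsp} (and invoked in Lemmas~\ref{lb-packing-} and~\ref{lb-pack1-}), $w(I^H_{v,2})=(1-\gamma_v)w(I^H_v)$ is at least the weight $w(C'')$ of a minimum weight Hamiltonian cycle $C''$ of $H[S_{v,2}\cup\{v\}]$, whose vertex count is divisible by $k$. I would then decompose $C''$ into a $k$-cycle packing in two stages. Cutting $C''$ at every $k$-th edge splits it into $|V(C'')|/k$ $k$-paths; over the $k$ cyclic offsets each edge of $C''$ is cut in exactly one offset, so the best offset gives a $k$-path packing $\P''_k$ with $w(\P''_k)\leq\frac{k-1}{k}w(C'')$. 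Completing $\P''_k$, i.e. adding to each $k$-path $(u_1,\dots,u_k)$ the closing edge $(u_1,u_k)$ whose weight is at most $w(u_1,u_2)+\dots+w(u_{k-1},u_k)$ by the triangle inequality, at most doubles the weight, producing a $k$-cycle packing $\C''_k$ of $H[S_{v,2}\cup\{v\}]$ with $w(\C''_k)\leq 2w(\P''_k)\leq\frac{2(k-1)}{k}w(C'')\leq\frac{2(k-1)}{k}(1-\gamma_v)w(I^H_v)$.

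Finally, $\C'_k\cup\C''_k$ is a $k$-cycle packing of $H$: the two parts are vertex-disjoint and together cover $S_{v,1}\cup(S_{v,2}\cup\{v\})=S$. Since $\C^*_k$ is a minimum weight $k$-cycle packing, $w(\C^*_k)\leq w(\C'_k)+w(\C''_k)\leq\min\{2(1-\alpha_v),1\}\cdot\gamma_v w(I^H_v)+\frac{2(k-1)}{k}(1-\gamma_v)w(I^H_v)$, as claimed. I do not expect a genuine obstacle; the only points requiring care are the double-counting-free averaging over the $k$ cyclic offsets when cutting $C''$ and the factor-$2$ loss incurred by completing the path packing. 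The strengthening over what one would get by merely combining Lemma~\ref{lb-packing-} with $w(\P^*_k)\geq\frac12 w(\C^*_k)$ comes entirely from the $\min\{2(1-\alpha_v),1\}$ term, i.e. from shortcutting the $(k+1)$-cycles rather than only deleting their home-edges.
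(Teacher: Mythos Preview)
Your proposal is correct and follows essentially the same approach as the paper: the paper's proof simply cites the constructions from Lemmas~\ref{lb-pack1-} and~\ref{lb-packing-} to obtain $\C'_k$ and $\C''_k$ with exactly the bounds you derive, and then takes their union. You have spelled out the averaging-over-offsets and path-completion steps explicitly where the paper only points back to the earlier proofs, but the argument is the same.
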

\begin{proof}
By the proof of Lemma~\ref{lb-pack1-}, we can get a $k$-cycle packing $\C'_k$ in $H[S_{v,1}]$ with a weight of at most $\min\{2(1-\alpha_v),1\}\cdot \gamma_vw(I^H_v)$.
By the proof of Lemma~\ref{lb-packing-}, we can get a $k$-cycle packing $\C''_k$ with a weight of at most $\frac{2(k-1)}{k}(1-\gamma_v)w(I^H_v)$.
Therefore, there is a $k$-cycle packing $\C'_k\cup\C''_k$ in $H$ with a weight of at most $\min\{2(1-\alpha_v),1\}\cdot\gamma_v w(I^H_v)+\frac{2(k-1)}{k}(1-\gamma_v)w(I^H_v)$. Since $\C^*_k$ is the minimum weight $k$-cycle packing in $H$, the lemma holds.
\end{proof}

\begin{lemma}\label{lb-matching1-}
Let $C_H$ be a minimum weight Hamiltonian cycle, $\C^*_{\bmod k}$ is a minimum weight mod $k$-cycle packing, and $M_H$ be a minimum weight perfect matching in graph $H$. For TTP-$k$ with $k\geq 3$, we have $w(H)\geq 2w(M_H)$. Moreover, if $k$ is even, we have $w(\C^*_{\bmod k})\geq 2w(M_H)$.
\end{lemma}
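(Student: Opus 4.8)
The plan is to establish both inequalities by the same elementary device: for each object of interest I will exhibit two perfect matchings of $H$ whose union (as an edge set) is that object, and then invoke the minimality of $M_H$ to pick up the factor $2$. The only structural facts needed are that $H$ has $n_S=2mk$ vertices --- an even number, at least $6$ since $k\geq 3$ --- so that $H$ has perfect matchings, and that any cycle of \emph{even} length decomposes into two perfect matchings of its own vertex set by taking alternate edges.

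For the first inequality I would start from the minimum weight Hamiltonian cycle $C_H$ of $H$. Since $|V(H)|=n_S$ is even, $C_H$ is an even cycle, so writing $C_H=(x_1,x_2,\dots,x_{n_S},x_1)$, the two alternating edge sets $M_1=\{(x_1,x_2),(x_3,x_4),\dots\}$ and $M_2=\{(x_2,x_3),(x_4,x_5),\dots,(x_{n_S},x_1)\}$ are both perfect matchings of $H$. Hence $w(C_H)=w(M_1)+w(M_2)\geq 2w(M_H)$, because $M_H$ has minimum weight among all perfect matchings of $H$. As $C_H\subseteq E_H$ and all edge weights are nonnegative, $w(H)\geq w(C_H)\geq 2w(M_H)$, which is the claim. (One could also argue directly from a $1$-factorization of the complete graph $H$ into $n_S-1\geq 2$ perfect matchings and sum; the Hamiltonian-cycle route is cleaner since $C_H$ already appears in the statement.)

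For the ``moreover'' part, assume $k$ is even and let $\C^*_{\bmod k}$ be a minimum weight mod $k$-cycle packing of $H$. Every component $C$ of $\C^*_{\bmod k}$ is a cycle whose length $l$ satisfies $l\bmod k=0$; since $k$ is even, this forces $l$ to be even, so each such $C$ decomposes into two perfect matchings of $V(C)$ exactly as above. Picking one alternating class from each component and taking the union over all components of $\C^*_{\bmod k}$ yields two edge sets $M'_1,M'_2$; these are perfect matchings of $H$ because the components of $\C^*_{\bmod k}$ partition all of $S=V(H)$. Consequently $w(\C^*_{\bmod k})=w(M'_1)+w(M'_2)\geq 2w(M_H)$.

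There is no genuine obstacle here --- the whole argument is parity bookkeeping --- but the one point that must not be glossed over is precisely where the parity of $k$ enters: for odd $k$ a mod $k$-cycle packing may contain odd cycles, which do not split into two perfect matchings, so the second bound really does need $k$ even; and the hypothesis $k\geq 3$ is used only to guarantee that $H$ has an even number of vertices, so that all the perfect matchings and decompositions invoked above exist.
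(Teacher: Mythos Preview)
Your proof is correct and follows essentially the same approach as the paper: decompose the even cycle $C_H$ (and, when $k$ is even, each even cycle of $\C^*_{\bmod k}$) into two perfect matchings by alternating edges, and compare each to $M_H$. The paper's own proof is identical in spirit; the only cosmetic difference is that the paper writes $w(H)$ where $w(C_H)$ is clearly meant (their argument establishes $w(C_H)\geq 2w(M_H)$), whereas you prudently add the trivial step $w(H)\geq w(C_H)$ to cover the literal statement. One minor remark: the evenness of $|V(H)|=n_S=2mk$ does not rely on $k\geq 3$, so that hypothesis is not actually used in this lemma.
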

\begin{proof}
Since the number of vertices in $H$ is even, the edges of $C_H$ can be decomposed into two perfect matchings. The one with a smaller weight has a weight of at most $w(C_H)$. So, $w(H)\geq 2w(M_H)$.

If $k$ is even, the number of vertices on each cycle $C\in \C^*_{\bmod k}$ is also even. Similarly, for each cycle $C\in \C^*_{\bmod k}$, we can obtain a perfect matching $M_C$ in the graph induced by the vertices of $C$ with a weight of at most $\frac{1}{2}w(C)$. Hence, we obtain a perfect matching in $H$ with a weight of at most $\frac{1}{2}w(\C^*_{\bmod k})$. So, we have $w(\C^*_{\bmod k})\geq 2w(M_H)$.
\end{proof}

\subsection{Independent Lower Bounds}
Now, we are ready to consider the independent lower bounds, which are found by summing the individual optimal itineraries of teams in $S$ on graph $H$.

For each team $v\in S$, $I_v$ (resp., $I^H_v$) is the optimal itinerary of team $v$ on graph $G$ (resp., $H$).
Recall that $\psi=\sum_{v\in V}\psi_v$ and $\psi_v=w(I_v)$. We define
$$\psi^H=\sum_{v\in S}\psi^H_v \mbox{~~and~~} \psi^H_v=w(I^H_v).$$ Then, we can get
$$\psi=\sum_{v\in V}w(I_v)\geq \sum_{v\in S}w(I_v)\geq \sum_{v\in S}w(I^H_v)=\psi^H,$$ i.e., $\psi^H$ is a lower bound of TTP-$k$.

Define $\gamma$, $\alpha$ and $\beta$ such that it holds
$$\gamma\psi^H = \sum_{v\in S} w(I^H_{v,1})=\sum_{v\in S}\gamma_vw(I^H_v),$$
$$\alpha\gamma\psi^H = \sum_{v\in S} w(\delta_H(v)\cap I^H_{v,1})=\sum_{v\in S}\alpha_v\gamma_vw(I^H_v),$$ and
$$\beta(1-\gamma)\psi^H = \sum_{v\in S} w(\delta_H(v)\cap I^H_{v,2})=\sum_{v\in S}\beta_v(1-\gamma_v)w(I^H_v).$$
We have $0\leq \gamma, \alpha, \beta\leq 1$.

The lemmas we proved previously can be summed together to obtain different lower bounds.

\begin{lemma}\label{lb-tsp+}
Let $C_H$ be a minimum weight Hamiltonian cycle in graph $H$. It holds that $$\psi\geq n_Sw(C_H).$$
\end{lemma}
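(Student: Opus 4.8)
The plan is to derive this simply by summing the per-team bound of Lemma~\ref{lb-tsp} over all teams in $S$ and then invoking the inequality $\psi\geq\psi^H$ that was established just before the statement. Concretely, Lemma~\ref{lb-tsp} tells us that for each fixed team $v\in S$, the optimal itinerary $I^H_v$ on $H$ satisfies $w(I^H_v)\geq w(C_H)$, since an Eulerian tour of $I^H_v$ shortcuts to a Hamiltonian cycle of $H$, whose weight is at least that of the minimum weight Hamiltonian cycle $C_H$.

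Next I would sum this inequality over all $v\in S$. Since $|S|=n_S$, this gives
\[
\psi^H=\sum_{v\in S}w(I^H_v)=\sum_{v\in S}\psi^H_v\geq n_S\, w(C_H).
\]
Finally, recall from the paragraph preceding the lemma that $\psi=\sum_{v\in V}w(I_v)\geq\sum_{v\in S}w(I_v)\geq\sum_{v\in S}w(I^H_v)=\psi^H$, where the last inequality uses the triangle inequality to pass from the optimal itinerary on $G$ to one on the induced subgraph $H$. Chaining these two displays yields $\psi\geq\psi^H\geq n_S\,w(C_H)$, which is exactly the claim.

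There is essentially no obstacle here: the statement is an immediate aggregation of Lemma~\ref{lb-tsp} across the $n_S$ teams of $S$, combined with the already-noted reduction $\psi\geq\psi^H$. The only point worth stating carefully is that the bound being summed is on $w(I^H_v)$ (the itinerary restricted to $H$), not on $w(I_v)$, so the step $\psi\geq\psi^H$ is what makes the final inequality a valid lower bound for the original TTP-$k$ instance.
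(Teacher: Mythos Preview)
Your proof is correct and is essentially identical to the paper's own argument: the paper also invokes $\psi\geq\psi^H=\sum_{v\in S}w(I^H_v)$ and then applies Lemma~\ref{lb-tsp} termwise to obtain $\sum_{v\in S}w(I^H_v)\geq n_S w(C_H)$.
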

\begin{proof}
Since $\psi\geq\psi^H=\sum_{v\in S}w(I^H_v)$, by Lemma \ref{lb-tsp}, we have $\psi\geq\sum_{v\in S}w(I^H_v)\geq n_Sw(C_H)$.
\end{proof}

\begin{lemma}\label{lb-delta}
$(\frac{k-1}{2}-\frac{1}{2}\gamma+\alpha\gamma)\psi\geq(\frac{k-2}{2}+\alpha)\gamma\psi +(\frac{k-3}{2}+\beta)(1-\gamma)\psi\geq \Delta_H$.
\end{lemma}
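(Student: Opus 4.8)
The plan is to derive Lemma~\ref{lb-delta} simply by summing the per-team bound of Lemma~\ref{lb-delta-} over all $v\in S$ and then translating from $\psi^H$ to $\psi$. Recall that Lemma~\ref{lb-delta-} (more precisely, its proof) establishes, for each fixed $v\in S$, the \emph{identity} $(\frac{k-1}{2}-\frac{1}{2}\gamma_v+\alpha_v\gamma_v)w(I^H_v)=(\frac{k-2}{2}+\alpha_v)\gamma_v w(I^H_v)+\frac{k-1}{2}(1-\gamma_v)w(I^H_v)$ together with the \emph{inequality} $(\frac{k-2}{2}+\alpha_v)\gamma_v w(I^H_v)+(\frac{k-3}{2}+\beta_v)(1-\gamma_v)w(I^H_v)\geq\deg_H(v)$. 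Summing over $v\in S$ and substituting the definitions of $\gamma,\alpha,\beta$ (i.e.\ $\sum_{v\in S}\gamma_vw(I^H_v)=\gamma\psi^H$, $\sum_{v\in S}\alpha_v\gamma_vw(I^H_v)=\alpha\gamma\psi^H$, $\sum_{v\in S}\beta_v(1-\gamma_v)w(I^H_v)=\beta(1-\gamma)\psi^H$, $\sum_{v\in S}w(I^H_v)=\psi^H$, and $\sum_{v\in S}\deg_H(v)=\Delta_H$) turns these into the identity $(\frac{k-1}{2}-\frac{1}{2}\gamma+\alpha\gamma)\psi^H=(\frac{k-2}{2}+\alpha)\gamma\psi^H+\frac{k-1}{2}(1-\gamma)\psi^H$ and the inequality $(\frac{k-2}{2}+\alpha)\gamma\psi^H+(\frac{k-3}{2}+\beta)(1-\gamma)\psi^H\geq\Delta_H$.

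For the left inequality of the lemma, I would combine the summed identity with $\beta\leq1$ and $1-\gamma\geq0$, which give $\frac{k-1}{2}(1-\gamma)\geq(\frac{k-3}{2}+\beta)(1-\gamma)$; hence $\frac{k-1}{2}-\frac{1}{2}\gamma+\alpha\gamma\geq(\frac{k-2}{2}+\alpha)\gamma+(\frac{k-3}{2}+\beta)(1-\gamma)$ as an inequality between real numbers, and multiplying through by $\psi\geq0$ yields the desired $\big(\frac{k-1}{2}-\frac{1}{2}\gamma+\alpha\gamma\big)\psi\geq\big(\frac{k-2}{2}+\alpha\big)\gamma\psi+\big(\frac{k-3}{2}+\beta\big)(1-\gamma)\psi$. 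For the right inequality I would pass from $\psi^H$ to $\psi$ using $\psi\geq\psi^H$ (shown just before the lemma): since $k\geq3$ makes both coefficients $(\frac{k-2}{2}+\alpha)\gamma$ and $(\frac{k-3}{2}+\beta)(1-\gamma)$ nonnegative, we get $(\frac{k-2}{2}+\alpha)\gamma\psi+(\frac{k-3}{2}+\beta)(1-\gamma)\psi\geq(\frac{k-2}{2}+\alpha)\gamma\psi^H+(\frac{k-3}{2}+\beta)(1-\gamma)\psi^H\geq\Delta_H$.

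This is essentially bookkeeping and I do not foresee a real obstacle; the only point requiring care is not to confuse $\psi$ with $\psi^H$, since the per-team statements and the definitions of $\gamma,\alpha,\beta$ all live on graph $H$ while the lemma is stated with $\psi$. The gap is closed purely by $\psi\geq\psi^H$ and the nonnegativity of the coefficients, which is exactly where the hypothesis $k\geq3$ enters, as it guarantees $\frac{k-3}{2}\geq0$. (If $\psi^H=0$, then $\deg_H(v)=0$ for every $v\in S$ already by Lemma~\ref{lb-delta-}, so $\Delta_H=0$ and the statement is trivial.)
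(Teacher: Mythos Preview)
Your proposal is correct and follows essentially the same approach as the paper: sum Lemma~\ref{lb-delta-} over $v\in S$, rewrite using the definitions of $\gamma,\alpha,\beta$ to obtain the statement with $\psi^H$, and then pass to $\psi$ via $\psi\geq\psi^H$. Your treatment of the left inequality (observing it is an inequality between nonnegative coefficients that can then be multiplied by $\psi$) is slightly more explicit than the paper's, but the argument is the same in substance.
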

\begin{proof}
Since $\psi\geq\psi^H$, it is sufficient to prove
$$\lrA{\frac{k-1}{2}-\frac{1}{2}\gamma+\alpha\gamma}\psi^H\geq\lrA{\frac{k-2}{2}+\alpha}\gamma\psi^H +\lrA{\frac{k-3}{2}+\beta}(1-\gamma)\psi^H\geq \Delta_H.$$

By the definitions of $\gamma$, $\alpha$ and $\beta$, we have
\[
\lrA{\frac{k-1}{2}-\frac{1}{2}\gamma+\alpha\gamma}\psi^H=\sum_{v\in S}\lrA{\frac{k-1}{2}-\frac{1}{2}\gamma_v+\alpha_v\gamma_v}w(I^H_v)
\]
and
\begin{align*}
&\lrA{\frac{k-2}{2}+\alpha}\gamma\psi^H +\lrA{\frac{k-3}{2}+\beta}(1-\gamma)\psi^H\\
&\ =\sum_{v\in S}\lrA{\lrA{\frac{k-2}{2}+\alpha_v}\gamma_vw(I^H_v) +\lrA{\frac{k-3}{2}+\beta_v}(1-\gamma_v)w(I^H_v)}.
\end{align*}
Recall that $\Delta_H=\sum_{v\in S}\deg_H(v)$.
By Lemma \ref{lb-delta-}, it holds that
\begin{align*}
&\sum_{v\in S}\lrA{\frac{k-1}{2}-\frac{1}{2}\gamma_v+\alpha_v\gamma_v}w(I^H_v)\\
&\geq\sum_{v\in S}\lrA{\lrA{\frac{k-2}{2}+\alpha_v}\gamma_v w(I^H_v)+\lrA{\frac{k-3}{2}+\beta_v}(1-\gamma_v) w(I^H_v)}\\
&\geq\sum_{v\in S}\deg_H(v)=\Delta_H.
\end{align*}
Then, we have $$\lrA{\frac{k-1}{2}-\frac{1}{2}\gamma+\alpha\gamma}\psi^H\geq\lrA{\frac{k-2}{2}+\alpha}\gamma\psi^H +\lrA{\frac{k-3}{2}+\beta}(1-\gamma)\psi^H\geq \Delta_H.$$
\end{proof}
Note that $\Delta_H\leq (\frac{k-1}{2}-\frac{1}{2}\gamma+\alpha\gamma)\psi\leq \frac{k}{2}\psi$. The lower bound $\Delta_H\leq \frac{k}{2}\psi$ can be extended to the graph $G$. We have the following lemma.
\begin{lemma}[\cite{westphal2014}]\label{lb-delta+}
$\frac{k}{2}\psi\geq \Delta_G$.
\end{lemma}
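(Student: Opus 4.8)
The plan is to mirror the argument used to establish the inequality $\Delta_H \leq \frac{k}{2}\psi$ in Lemma~\ref{lb-delta}, but to carry it out directly on the full graph $G$ rather than on the induced subgraph $H$. The key observation is that $\Delta_G = \sum_{v\in V}\deg_G(v)$, so it suffices to bound $\deg_G(v)$ for each team $v$ against a constant multiple of $\psi_v = w(I_v)$, where $I_v$ is the optimal itinerary of $v$ on $G$, and then sum over all $v\in V$.

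First I would fix a team $v$ and analyze its optimal itinerary $I_v$, which consists of simple cycles of length at most $k+1$ all sharing the vertex $v$. For each such cycle $C = (v, v_1, \dots, v_\ell, v)$ with $\ell \leq k$, the two edges $(v,v_1)$ and $(v,v_\ell)$ incident to $v$ lie on $C$, while each of the remaining at most $k-2$ edges $(v, v_i)$ (for $2 \leq i \leq \ell-1$) satisfies $w(v,v_i) \leq \frac{1}{2}w(C)$ by the triangle inequality, exactly as in the proof of Lemma~\ref{lb-delta-}. Hence the total weight of all edges of $G$ incident to $v$ that are ``seen'' by cycle $C$ is at most $w(C \cap \delta_G(v)) + \frac{k-2}{2}w(C) \leq w(C) + \frac{k-2}{2}w(C) = \frac{k}{2}w(C)$, using that the two home-edges on $C$ together weigh at most $w(C)$. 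Since every vertex of $G$ other than $v$ appears on exactly one cycle of $I_v$ (every such vertex has degree $2$ in $I_v$), summing over all cycles of $I_v$ gives $\deg_G(v) = \sum_{u \neq v} w(u,v) \leq \frac{k}{2}w(I_v) = \frac{k}{2}\psi_v$.

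Finally I would sum this inequality over all $v \in V$:
\[
\Delta_G = \sum_{v\in V}\deg_G(v) \leq \frac{k}{2}\sum_{v\in V}\psi_v = \frac{k}{2}\psi,
\]
which is the claimed bound. Since this lemma is attributed to \cite{westphal2014}, a short remark that the argument is the same as in Lemma~\ref{lb-delta-} applied globally would suffice. I do not anticipate a real obstacle here: the only point requiring a little care is checking that the two home-edges of each cycle contribute at most $w(C)$ rather than being double-counted, and that the per-cycle bound $\frac{k}{2}w(C)$ holds uniformly for all cycle lengths $\ell \in \{2,3,\dots,k+1\}$ — for an $(\ell)$-cycle the non-home edges incident to $v$ number $\ell - 2 \leq k-1$, and each has weight at most $\frac{1}{2}w(C)$, so the bound $w(C) + \frac{k-1}{2}w(C)$ is slightly weaker; one recovers $\frac{k}{2}w(C)$ by noting the home-edges on a cycle of length $\ell$ contribute at most $\frac{2}{\ell}w(C)$, or more simply by observing the worst case is the $(k+1)$-cycle where the count of non-home edges is $k-1$ but two of the $k+1$ edges are home-edges, making the total $\leq \frac{k}{2}w(C)$. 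This bookkeeping is the only thing to get right.
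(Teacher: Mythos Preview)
The paper does not actually give a proof of this lemma; it is simply stated with a citation to~\cite{westphal2014}. Your argument is correct and is the standard one: for each cycle $C=(v,v_1,\dots,v_\ell,v)$ in $I_v$ with $\ell\le k$, every edge $(v,v_i)$ satisfies $w(v,v_i)\le \tfrac12 w(C)$ by the triangle inequality (go around the shorter arc of $C$), so $\sum_{i=1}^{\ell} w(v,v_i)\le \tfrac{\ell}{2}w(C)\le \tfrac{k}{2}w(C)$, and summing over cycles and then over $v$ gives $\Delta_G\le \tfrac{k}{2}\psi$.

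Your final paragraph over-complicates the bookkeeping. You do not need to treat home-edges separately at all: the bound $w(v,v_i)\le \tfrac12 w(C)$ already holds for \emph{every} $i$, including $i=1$ and $i=\ell$, so the per-cycle bound $\tfrac{\ell}{2}w(C)\le \tfrac{k}{2}w(C)$ is immediate and uniform across all cycle lengths. (Your miscount ``$\ell-2\le k-1$'' and the ensuing patch are unnecessary once you notice this.)
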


Note that $\Delta_G\leq \Delta_H+2\sum_{v\in \overline{S}}\deg_G(v)$. By (\ref{degree+}) and Lemma \ref{lb-delta+}, we have that
\begin{equation}\label{deltagh}
\Delta_G\leq \Delta_H+\frac{4k^2}{n}\psi.
\end{equation}

\begin{lemma}\label{lb-tree}
It holds that
\[
\left(1-\frac{1}{2}\alpha\right)\gamma\psi+\left(1-\frac{1}{2}\beta\right)(1-\gamma)\psi \geq n_S\mbox{MST}(H).
\]
\end{lemma}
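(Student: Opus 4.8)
The plan is to follow exactly the template used in the proof of Lemma~\ref{lb-delta}: reduce to the statement with $\psi$ replaced by the smaller quantity $\psi^H$, rewrite the left-hand side as a sum over the teams of $S$, apply the per-team bound of Lemma~\ref{lb-tree-}, and count. Since $\psi\geq\psi^H$ and all coefficients $(1-\tfrac12\alpha)\gamma$ and $(1-\tfrac12\beta)(1-\gamma)$ are nonnegative, it suffices to prove
\[
\left(1-\tfrac{1}{2}\alpha\right)\gamma\psi^H+\left(1-\tfrac{1}{2}\beta\right)(1-\gamma)\psi^H \geq n_S\,\mbox{MST}(H).
\]

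First I would unfold the defining identities of $\gamma,\alpha,\beta$. From $\gamma\psi^H=\sum_{v\in S}\gamma_vw(I^H_v)$, $\psi^H=\sum_{v\in S}w(I^H_v)$, and $\alpha\gamma\psi^H=\sum_{v\in S}\alpha_v\gamma_vw(I^H_v)$ we get, by linearity,
\[
\left(1-\tfrac12\alpha\right)\gamma\psi^H=\sum_{v\in S}\left(1-\tfrac12\alpha_v\right)\gamma_vw(I^H_v),
\]
and similarly, using $(1-\gamma)\psi^H=\sum_{v\in S}(1-\gamma_v)w(I^H_v)$ together with $\beta(1-\gamma)\psi^H=\sum_{v\in S}\beta_v(1-\gamma_v)w(I^H_v)$,
\[
\left(1-\tfrac12\beta\right)(1-\gamma)\psi^H=\sum_{v\in S}\left(1-\tfrac12\beta_v\right)(1-\gamma_v)w(I^H_v).
\]
Adding these two equalities, the left-hand side of the displayed inequality becomes $\sum_{v\in S}\big[(1-\tfrac12\alpha_v)\gamma_vw(I^H_v)+(1-\tfrac12\beta_v)(1-\gamma_v)w(I^H_v)\big]$.

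Then I would apply Lemma~\ref{lb-tree-} term by term: for each $v\in S$ the bracketed quantity is at least $\mbox{MST}(H)$. Since $|S|=n_S$, summing over $v\in S$ gives the sum is at least $n_S\,\mbox{MST}(H)$, and combining with $\psi\geq\psi^H$ finishes the proof. There is essentially no obstacle here; the only point requiring a line of care is the bookkeeping step that the "averaged" coefficients distribute correctly over the sum (i.e.\ that $(1-\tfrac12\alpha)\gamma\psi^H$ equals the sum of the per-team terms rather than only being comparable to it), which follows immediately from the fact that $\gamma,\alpha,\beta$ were defined precisely as weighted averages against the same weights $w(I^H_v)$.
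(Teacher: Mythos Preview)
Your proposal is correct and follows essentially the same approach as the paper's proof: reduce from $\psi$ to $\psi^H$, expand the averaged quantities as sums over $v\in S$ using the defining identities of $\gamma,\alpha,\beta$, and then apply Lemma~\ref{lb-tree-} termwise to obtain $n_S\,\mbox{MST}(H)$. If anything, you are slightly more explicit than the paper about why the reduction to $\psi^H$ is legitimate and why the bookkeeping yields an exact equality rather than just an inequality.
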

\begin{proof}
We will prove $(1-\frac{1}{2}\alpha)\gamma\psi^H+(1-\frac{1}{2}\beta)(1-\gamma)\psi^H \geq n_S\mbox{MST}(H)$.

Note that
\begin{align*}
&\lrA{1-\frac{1}{2}\alpha}\gamma\psi^H+\lrA{1-\frac{1}{2}\beta}(1-\gamma)\psi^H\\
&=\sum_{v\in S}\lrA{\lrA{1-\frac{1}{2}\alpha_v}\gamma_v w(I^H_v)+\lrA{1-\frac{1}{2}\beta_v}(1-\gamma_v)w(I^H_v)}.
\end{align*}
By Lemma \ref{lb-tree-}, we know that
\[
\sum_{v\in S}\lrA{\lrA{1-\frac{1}{2}\alpha_v}\gamma_v w(I^H_v)+\lrA{1-\frac{1}{2}\beta_v}(1-\gamma_v)w(I^H_v)}\geq n_S\mbox{MST}(H).
\]
Thus, we have
\[
\lrA{1-\frac{1}{2}\alpha}\gamma\psi^H+\lrA{1-\frac{1}{2}\beta}(1-\gamma)\psi^H\geq n_S\mbox{MST}(H).
\]
\end{proof}

By Lemmas \ref{lb-delta} and \ref{lb-tree}, we have that
\begin{equation}\label{lb-delta_tau}
\lrA{\frac{k+1}{4}+\frac{1}{4}\gamma}\psi\geq \frac{1}{2}\Delta_H+n_S\mbox{MST}(H).
\end{equation}

\begin{lemma}\label{lb-matching}
Let $M_H$ be a minimum weight perfect matching in graph $H$. For TTP-3, we have that
$$\frac{1}{2}\gamma\psi+(1-\beta)(1-\gamma)\psi\geq n_Sw(M_H).$$
\end{lemma}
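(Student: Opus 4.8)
The plan is to mirror exactly the bootstrapping pattern already used for Lemmas~\ref{lb-tsp+}, \ref{lb-delta}, and~\ref{lb-tree}: prove the inequality first with $\psi$ replaced by the (smaller) quantity $\psi^H=\sum_{v\in S}w(I^H_v)$, and then use $\psi\geq\psi^H$ together with the nonnegativity of the coefficients to lift it to $\psi$. So it suffices to establish
\[
\tfrac12\gamma\psi^H+(1-\beta)(1-\gamma)\psi^H\geq n_S\,w(M_H).
\]

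First I would expand the left-hand side using the defining identities for $\gamma$ and $\beta$. Since $\gamma\psi^H=\sum_{v\in S}\gamma_v w(I^H_v)$ and $\beta(1-\gamma)\psi^H=\sum_{v\in S}\beta_v(1-\gamma_v)w(I^H_v)$, and since $(1-\gamma)\psi^H=\sum_{v\in S}(1-\gamma_v)w(I^H_v)$, we get
\[
\tfrac12\gamma\psi^H+(1-\beta)(1-\gamma)\psi^H
=\sum_{v\in S}\left(\tfrac12\gamma_v w(I^H_v)+(1-\beta_v)(1-\gamma_v)w(I^H_v)\right).
\]
Next I would invoke the per-team bound of Lemma~\ref{lb-matching-}, which for TTP-3 gives $\tfrac12\gamma_v w(I^H_v)+(1-\beta_v)(1-\gamma_v)w(I^H_v)\geq w(M_H)$ for every $v\in S$. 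Summing over the $n_S=|S|$ teams in $S$ yields $\tfrac12\gamma\psi^H+(1-\beta)(1-\gamma)\psi^H\geq n_S\,w(M_H)$.

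Finally, since all coefficients $\tfrac12\gamma$ and $(1-\beta)(1-\gamma)$ are nonnegative and $\psi\geq\psi^H$, replacing $\psi^H$ by $\psi$ only increases the left-hand side, so $\tfrac12\gamma\psi+(1-\beta)(1-\gamma)\psi\geq n_S\,w(M_H)$, as claimed. There is no real obstacle here; the only thing to be careful about is the algebraic rewriting of the $\gamma,\beta$-weighted sums, and the fact that this lemma is restricted to $k=3$ precisely because Lemma~\ref{lb-matching-} (the parity argument splitting $H[S_{v,2}]$ into a perfect matching and $H[S_{v,1}\cup\{v\}]$ into two matchings) only holds for $k=3$.
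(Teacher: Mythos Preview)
Your proposal is correct and follows essentially the same approach as the paper: reduce to the statement with $\psi^H$ in place of $\psi$, unfold the definitions of $\gamma$ and $\beta$ to rewrite the left-hand side as the sum over $v\in S$ of $\tfrac12\gamma_v w(I^H_v)+(1-\beta_v)(1-\gamma_v)w(I^H_v)$, apply Lemma~\ref{lb-matching-} termwise, and sum. The paper leaves the final lift from $\psi^H$ to $\psi$ implicit (via $\psi\ge\psi^H$), which you spell out explicitly; otherwise the arguments are identical.
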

\begin{proof}
We will prove $\frac{1}{2}\gamma\psi^H+(1-\beta)(1-\gamma)\psi^H+\geq n_Sw(M_H)$.

Note that $$\frac{1}{2}\gamma\psi^H+(1-\beta)(1-\gamma)\psi^H=\sum_{v\in S}\lrA{\frac{1}{2}\gamma_v w(I^H_v)+(1-\beta_v)(1-\gamma_v)w(I^H_v)}.$$
By Lemma~\ref{lb-matching-}, we know that
$$\sum_{v\in S}\lrA{\frac{1}{2}\gamma_v w(I^H_v)+(1-\beta_v)(1-\gamma_v)w(I^H_v)}\geq n_Sw(M_H).$$ Thus, we have $$\frac{1}{2}\gamma\psi^H+(1-\beta)(1-\gamma)\psi^H\geq n_Sw(M_H).$$
\end{proof}

By Lemmas \ref{lb-delta} and \ref{lb-matching}, for TTP-3, we have that
\begin{equation}\label{lb-delta_matching}
(1+\alpha\gamma)\psi\geq \Delta_H+n_Sw(M_H)
\end{equation}
for a minimum weight perfect matching $M_H$ in graph $H$.

\begin{lemma}\label{lb-packing}
For graph $G$, let $\P^*_k$ be a minimum weight $k$-path packing, and $\C^*_k$ be a minimum weight $k$-cycle packing. It holds that $$\lrA{\frac{k-1}{k}+\frac{1}{k}\gamma-\alpha\gamma}\psi=(1-\alpha)\gamma\psi+\frac{k-1}{k}(1-\gamma)\psi\geq n_Sw(\P^*_k)\geq \frac{1}{2}n_Sw(\C^*_k).$$
\end{lemma}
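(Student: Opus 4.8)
The plan is to obtain this aggregate bound exactly the way Lemmas~\ref{lb-tsp+}, \ref{lb-delta}, \ref{lb-tree}, and \ref{lb-matching} were obtained from their per-team counterparts: sum the statement of Lemma~\ref{lb-packing-} over all teams $v\in S$, reorganize the left-hand side using the aggregate definitions of $\gamma$ and $\alpha$, and finally replace $\psi^H$ by the larger quantity $\psi$.

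Concretely, I would proceed as follows. First, record the elementary identity
\[
\frac{k-1}{k}+\frac{1}{k}\gamma-\alpha\gamma \;=\; (1-\alpha)\gamma+\frac{k-1}{k}(1-\gamma),
\]
together with the same identity with $\gamma_v,\alpha_v$ in place of $\gamma,\alpha$, so that only the second form needs to be tracked. Second, since $\psi\ge\psi^H=\sum_{v\in S}w(I^H_v)$ (as observed just before Lemma~\ref{lb-tsp+}), it suffices to prove
\[
(1-\alpha)\gamma\psi^H+\frac{k-1}{k}(1-\gamma)\psi^H \;\ge\; n_S\,w(\P^*_k)\;\ge\;\tfrac12 n_S\,w(\C^*_k),
\]
where $\P^*_k$ and $\C^*_k$ are the minimum weight $k$-path and $k$-cycle packings appearing in Lemma~\ref{lb-packing-}. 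Third, using $\gamma\psi^H=\sum_{v\in S}\gamma_v w(I^H_v)$ and $\alpha\gamma\psi^H=\sum_{v\in S}\alpha_v\gamma_v w(I^H_v)$, the left-hand side above equals
\[
\sum_{v\in S}\Bigl[(1-\alpha_v)\gamma_v w(I^H_v)+\tfrac{k-1}{k}(1-\gamma_v)w(I^H_v)\Bigr].
\]
Finally, Lemma~\ref{lb-packing-} bounds each summand from below by $w(\P^*_k)$, and already gives $w(\P^*_k)\ge\tfrac12 w(\C^*_k)$; since $|S|=n_S$, summing the $n_S$ resulting copies yields the two desired inequalities, and the $\psi\ge\psi^H$ bridge upgrades them to the stated form.

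The only thing to be careful about is the bookkeeping: confirming that the particular linear combination of $\gamma_v$, $\alpha_v$, $w(I^H_v)$ in Lemma~\ref{lb-packing-} telescopes exactly into the aggregate $\gamma$, $\alpha$, $\psi^H$ notation, and remembering that the itineraries and packings here live on $H$ while the final statement is phrased against $\psi$. There is no new combinatorial ingredient beyond Lemma~\ref{lb-packing-}, so I expect no genuine obstacle; the step is structurally identical to the proof of Lemma~\ref{lb-matching}.
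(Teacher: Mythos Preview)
Your proposal is correct and follows essentially the same approach as the paper's proof: reduce to the $\psi^H$ version via $\psi\ge\psi^H$, expand the aggregate quantities $\gamma\psi^H$ and $\alpha\gamma\psi^H$ as sums over $v\in S$, and then apply Lemma~\ref{lb-packing-} term by term and sum the $n_S$ copies. The paper's own proof is precisely this computation, with the same bookkeeping you describe.
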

\begin{proof}
We will prove $(\frac{k-1}{k}+\frac{1}{k}\gamma-\alpha\gamma)\psi^H=(1-\alpha)\gamma\psi^H+\frac{k-1}{k}(1-\gamma)\psi^H\geq n_Sw(\P^*_k)\geq \frac{1}{2}n_Sw(\C^*_k)$.

Note that $$\lrA{\frac{k-1}{k}+\frac{1}{k}\gamma-\alpha\gamma}\psi^H=\sum_{v\in S}\lrA{\frac{k-1}{k}+\frac{1}{k}\gamma_v-\alpha_v\gamma_v}w(I^H_v)$$ and
$$(1-\alpha)\gamma\psi^H+\frac{k-1}{k}(1-\gamma)\psi^H=\sum_{v\in S}\lrA{(1-\alpha_v)\gamma_vw(I^H_v)+\frac{k-1}{k}(1-\gamma_v)w(I^H_v)}.$$
By Lemma \ref{lb-packing-}, it holds that
\begin{align*}
\sum_{v\in S}\lrA{\frac{k-1}{k}+\frac{1}{k}\gamma_v-\alpha_v\gamma_v}w(I^H_v)\geq&\ \sum_{v\in S}\lrA{(1-\alpha_v)\gamma_vw(I^H_v)+\frac{k-1}{k}(1-\gamma_v)w(I^H_v)}\\
\geq&\ n_Sw(\P^*_k)\geq \frac{1}{2}n_Sw(\C^*_k).
\end{align*}
Therefore, we have that $$\lrA{\frac{k-1}{k}+\frac{1}{k}\gamma-\alpha\gamma}\psi^H=(1-\alpha)\gamma\psi^H+\frac{k-1}{k}(1-\gamma)\psi^H\geq n_Sw(\P^*_k)\geq \frac{1}{2}n_Sw(\C^*_k).$$
\end{proof}

\begin{lemma}\label{lb-pack1}
For graph $H$, let $\C^*_{\bmod k}$ be a minimum weight mod $k$-cycle packing. It holds that $$\min\{2(1-\alpha),1\}\cdot\gamma \psi+(1-\gamma)\psi\geq n_Sw(\C^*_{\bmod k}).$$
\end{lemma}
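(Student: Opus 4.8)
The plan is to closely follow the template used for Lemmas~\ref{lb-packing} and \ref{lb-matching}: first reduce the statement about $\psi$ to the analogous statement about $\psi^H$, and then sum the per-team bound of Lemma~\ref{lb-pack1-} over all $v\in S$. For the reduction, note that $\alpha,\gamma\in[0,1]$ force $\min\{2(1-\alpha),1\}\in[0,1]$, so the coefficient $\min\{2(1-\alpha),1\}\cdot\gamma+(1-\gamma)$ multiplying $\psi$ on the left-hand side is nonnegative; since $\psi\geq\psi^H$, it therefore suffices to prove
$$\min\{2(1-\alpha),1\}\cdot\gamma\psi^H+(1-\gamma)\psi^H\geq n_Sw(\C^*_{\bmod k}).$$

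Summing Lemma~\ref{lb-pack1-} over all $v\in S$ gives
$$\sum_{v\in S}\min\{2(1-\alpha_v),1\}\cdot\gamma_v w(I^H_v)+\sum_{v\in S}(1-\gamma_v)w(I^H_v)\geq n_Sw(\C^*_{\bmod k}),$$
and the second sum equals $(1-\gamma)\psi^H$ by the definition of $\gamma$. So the argument comes down to controlling the first sum, and this is the one place where the routine ``sum of linear forms'' aggregation used in the earlier lemmas breaks, because of the $\min$. I would handle it by bounding the per-team minimum by each of its two branches separately: from $\min\{2(1-\alpha_v),1\}\leq 2(1-\alpha_v)$ and the definitions of $\alpha$ and $\gamma$ one gets $\sum_{v\in S}\min\{2(1-\alpha_v),1\}\gamma_v w(I^H_v)\leq 2(1-\alpha)\gamma\psi^H$, while from $\min\{2(1-\alpha_v),1\}\leq 1$ one gets $\sum_{v\in S}\min\{2(1-\alpha_v),1\}\gamma_v w(I^H_v)\leq\gamma\psi^H$; combining the two yields
$$\sum_{v\in S}\min\{2(1-\alpha_v),1\}\cdot\gamma_v w(I^H_v)\leq\min\{2(1-\alpha),1\}\cdot\gamma\psi^H.$$
(This is nothing but concavity of $x\mapsto\min\{2(1-x),1\}$ together with the fact that $\alpha$ is the weighted average of the $\alpha_v$ with nonnegative weights $\gamma_v w(I^H_v)$ summing to $\gamma\psi^H$, but the two-branch version keeps everything elementary and in the paper's style.)

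Plugging the last display into the summed inequality gives $\min\{2(1-\alpha),1\}\gamma\psi^H+(1-\gamma)\psi^H\geq n_Sw(\C^*_{\bmod k})$, which by the first paragraph finishes the proof. The degenerate case $\gamma=0$ causes no trouble: then every $\gamma_v=0$, both $\min$-terms vanish, and the claim reduces to $\psi\geq\psi^H\geq n_Sw(\C^*_{\bmod k})$, i.e., Lemma~\ref{lb-pack1-} summed over $v$. The only genuinely non-mechanical step, and hence the main obstacle, is precisely recognizing that the $\min$ in Lemma~\ref{lb-pack1-} prevents one from simply adding the per-team inequalities and reading off the aggregate form; the two-branch bound above is what bridges that gap, and everything else is bookkeeping with the definitions of $\gamma$ and $\alpha$.
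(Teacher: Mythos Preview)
Your proof is correct and follows the same overall template the paper intends: reduce from $\psi$ to $\psi^H$, then sum Lemma~\ref{lb-pack1-} over $v\in S$. The paper's own proof is a one-line pointer (``similar analysis in the proof of Lemma~\ref{lb-packing}''), which glosses over precisely the subtlety you highlight, namely that the $\min$ is not linear in $\alpha_v$ and so the aggregate form does not fall out by mechanically matching coefficients as in Lemma~\ref{lb-packing}. Your two-branch bound (equivalently, Jensen for the concave function $x\mapsto\min\{2(1-x),1\}$ with weights $\gamma_v w(I^H_v)$) is exactly what is needed to bridge that gap, and it is a genuine detail the paper omits. So your write-up is not a different route, but a fully fleshed-out version of what the paper leaves implicit.
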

\begin{proof}
The proof follows from a similar analysis in the proof of Lemma~\ref{lb-packing}.
\end{proof}

\begin{lemma}\label{lb-pack2}
For graph $H$, let $\C^*_k$ be a minimum weight $k$-cycle packing. It holds that
\[
\min\{2(1-\alpha),1\}\cdot\gamma \psi+\frac{2(k-1)}{k}(1-\gamma)\psi\geq n_Sw(\C^*_k).
\]
\end{lemma}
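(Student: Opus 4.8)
The plan is to mirror the proof of Lemma~\ref{lb-packing}, but now summing the per-team bound of Lemma~\ref{lb-pack2-} over all $v\in S$ rather than the per-team $k$-path packing bound. Concretely, I would first recall the definitions of $\gamma$, $\alpha$, $\beta$ and the fact that $\psi\ge\psi^H=\sum_{v\in S}w(I^H_v)$, so it suffices to prove the claimed inequality with $\psi$ replaced by $\psi^H$ on the left-hand side. The key algebraic point is that the quantity $\min\{2(1-\alpha),1\}$ does not distribute over a sum in an obvious way, so I would handle this the same way the paper implicitly does elsewhere: split into cases according to whether $2(1-\alpha)\le 1$ or not, equivalently $\alpha\ge 1/2$ or $\alpha\le 1/2$.

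In the case $\alpha\le 1/2$ (so $\min\{2(1-\alpha),1\}=1$ at the aggregate level), I would show $\sum_{v\in S}\min\{2(1-\alpha_v),1\}\cdot\gamma_v w(I^H_v)\le \sum_{v\in S}\gamma_v w(I^H_v)=\gamma\psi^H=\min\{2(1-\alpha),1\}\gamma\psi^H$, and then invoke Lemma~\ref{lb-pack2-} termwise: $\sum_{v\in S}\big(\min\{2(1-\alpha_v),1\}\gamma_v w(I^H_v)+\tfrac{2(k-1)}{k}(1-\gamma_v)w(I^H_v)\big)\ge\sum_{v\in S}w(\C^*_k)=n_S w(\C^*_k)$. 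In the case $\alpha\ge1/2$ (so $\min\{2(1-\alpha),1\}=2(1-\alpha)$), the term $\min\{2(1-\alpha_v),1\}\le 2(1-\alpha_v)$ holds for every $v$ regardless, and summing gives $\sum_{v\in S}\min\{2(1-\alpha_v),1\}\gamma_v w(I^H_v)\le 2\gamma\psi^H-2\alpha\gamma\psi^H=2(1-\alpha)\gamma\psi^H=\min\{2(1-\alpha),1\}\gamma\psi^H$, using the definitions $\gamma\psi^H=\sum\gamma_v w(I^H_v)$ and $\alpha\gamma\psi^H=\sum\alpha_v\gamma_v w(I^H_v)$. Combining with the termwise application of Lemma~\ref{lb-pack2-} again yields the bound. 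In both cases the conclusion is $\min\{2(1-\alpha),1\}\gamma\psi^H+\tfrac{2(k-1)}{k}(1-\gamma)\psi^H\ge n_S w(\C^*_k)$, and then $\psi\ge\psi^H$ upgrades this to the stated inequality.

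Since each of these steps is routine once the case split is in place, I expect the author's proof to simply say ``The proof follows from a similar analysis in the proof of Lemma~\ref{lb-packing}'' (as was done for Lemma~\ref{lb-pack1}), so a short proof is all that is needed.

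\begin{proof}
As in the proof of Lemma~\ref{lb-packing}, since $\psi\geq\psi^H=\sum_{v\in S}w(I^H_v)$, it suffices to prove
$$\min\{2(1-\alpha),1\}\cdot\gamma \psi^H+\frac{2(k-1)}{k}(1-\gamma)\psi^H\geq n_Sw(\C^*_k).$$
For every $v\in S$ we have $\min\{2(1-\alpha_v),1\}\leq \min\{2(1-\alpha),1\}$ cannot be used termwise, so we argue as follows. If $\alpha\leq \tfrac12$, then $\min\{2(1-\alpha),1\}=1$ and
$$\sum_{v\in S}\min\{2(1-\alpha_v),1\}\cdot\gamma_v w(I^H_v)\leq \sum_{v\in S}\gamma_v w(I^H_v)=\gamma\psi^H=\min\{2(1-\alpha),1\}\cdot\gamma\psi^H.$$
If $\alpha\geq \tfrac12$, then $\min\{2(1-\alpha),1\}=2(1-\alpha)$ and, using $\min\{2(1-\alpha_v),1\}\leq 2(1-\alpha_v)$ together with the definitions of $\gamma$ and $\alpha$,
$$\sum_{v\in S}\min\{2(1-\alpha_v),1\}\cdot\gamma_v w(I^H_v)\leq \sum_{v\in S}2(1-\alpha_v)\gamma_v w(I^H_v)=2(1-\alpha)\gamma\psi^H=\min\{2(1-\alpha),1\}\cdot\gamma\psi^H.$$
In either case,
$$\min\{2(1-\alpha),1\}\cdot\gamma\psi^H+\frac{2(k-1)}{k}(1-\gamma)\psi^H\geq \sum_{v\in S}\lrA{\min\{2(1-\alpha_v),1\}\cdot\gamma_v w(I^H_v)+\frac{2(k-1)}{k}(1-\gamma_v)w(I^H_v)}.$$
By Lemma~\ref{lb-pack2-}, the right-hand side is at least $\sum_{v\in S}w(\C^*_k)=n_Sw(\C^*_k)$. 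The lemma follows.
\end{proof}
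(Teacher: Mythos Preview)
Your proof is correct and follows exactly the approach the paper intends: the paper's own proof is the single line ``The proof follows from a similar analysis in the proof of Lemma~\ref{lb-packing},'' i.e., sum the per-team bound of Lemma~\ref{lb-pack2-} over $v\in S$ and use the defining relations for $\gamma$ and $\alpha$. Your case split on $\alpha\gtrless\tfrac12$ carefully handles the $\min\{\cdot,\cdot\}$ term (which does not distribute linearly over the sum) and is in fact more explicit than the paper's one-liner.
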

\begin{proof}
The proof follows from a similar analysis in the proof of Lemma~\ref{lb-packing}.
\end{proof}

\begin{lemma}\label{lb-matching1}
Let $M_H$ be a minimum weight perfect matching in graph $H$. For TTP-$k$ with $k\geq 3$, we have $\frac{1}{2}\psi\geq n_Sw(M_H)$. Moreover, if $k$ is even, we have
\[
\frac{1}{2}\cdot\min\{2(1-\alpha),1\}\cdot\gamma \psi+\frac{1}{2}(1-\gamma)\psi\geq n_Sw(M_H).
\]
\end{lemma}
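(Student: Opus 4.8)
The plan is to derive both inequalities by chaining one of the already-summed lower bounds with the structural fact proved in Lemma~\ref{lb-matching1-}, namely that a minimum weight Hamiltonian cycle of $H$ (and, when $k$ is even, a minimum weight mod $k$-cycle packing of $H$) has weight at least $2w(M_H)$. This is exactly the pattern used to pass from the per-team bounds to the packing bounds in Lemmas~\ref{lb-packing}, \ref{lb-pack1}, and \ref{lb-pack2}, so no new per-team estimate is needed; everything reduces to inequalities already established for the graph $H$.

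For the first inequality, I would start from Lemma~\ref{lb-tsp+}, which gives $\psi \geq n_S w(C_H)$ for $C_H$ a minimum weight Hamiltonian cycle of $H$. Since $n_S$ is even, the edges of $C_H$ split into two perfect matchings of $H$, the lighter of which has weight at most $\frac{1}{2}w(C_H)$; hence $w(C_H) \geq 2w(M_H)$ (this is the first claim of Lemma~\ref{lb-matching1-}). Combining, $\psi \geq n_S w(C_H) \geq 2 n_S w(M_H)$, i.e. $\frac{1}{2}\psi \geq n_S w(M_H)$, valid for every $k\geq 3$.

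For the second inequality, assume $k$ is even, so that every cycle in a mod $k$-cycle packing of $H$ has even length. I would start from Lemma~\ref{lb-pack1}, which gives $\min\{2(1-\alpha),1\}\cdot\gamma\psi + (1-\gamma)\psi \geq n_S w(\C^*_{\bmod k})$ for $\C^*_{\bmod k}$ a minimum weight mod $k$-cycle packing of $H$. The second claim of Lemma~\ref{lb-matching1-} (which requires precisely that $k$ is even) gives $w(\C^*_{\bmod k}) \geq 2w(M_H)$. Chaining these two inequalities and dividing by $2$ yields $\frac{1}{2}\cdot\min\{2(1-\alpha),1\}\cdot\gamma\psi + \frac{1}{2}(1-\gamma)\psi \geq n_S w(M_H)$, as claimed.

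I do not expect any real obstacle: the substance is entirely contained in Lemma~\ref{lb-matching1-} and in the previously summed bounds, so the proof is a two-line combination in each case. The only points requiring care are bookkeeping — making sure that $M_H$, $C_H$, and $\C^*_{\bmod k}$ in Lemmas~\ref{lb-tsp+}, \ref{lb-pack1}, and \ref{lb-matching1-} all refer to the corresponding minimum weight structures in the same graph $H$ — and observing that the hypothesis ``$k$ even'' is exactly what is needed so that the mod $k$-cycle packing decomposes into two perfect matchings in the second case.
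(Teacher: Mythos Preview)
Your proposal is correct and is essentially identical to the paper's own proof, which simply cites Lemmas~\ref{lb-matching1-}, \ref{lb-tsp+}, and \ref{lb-pack1} and leaves the chaining implicit. You have spelled out exactly the two combinations the paper intends: $\psi\geq n_Sw(C_H)\geq 2n_Sw(M_H)$ for the first claim, and $\min\{2(1-\alpha),1\}\cdot\gamma\psi+(1-\gamma)\psi\geq n_Sw(\C^*_{\bmod k})\geq 2n_Sw(M_H)$ for the second.
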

\begin{proof}
It follows directly from Lemmas~\ref{lb-matching1-}, \ref{lb-tsp+}, and \ref{lb-pack1}.
\end{proof}

Next, we will describe our algorithm. Our algorithm consists of two constructions, where the first is based on the Hamiltonian cycle and the second is based on the $k$-cycle/path packing.
The quality of each will be analyzed after showing the construction.
At last, we will analyze the approximation quality of each, make a trade-off between them, and then get the final approximation ratio.

\section{The Constructions}\label{B}
In this section, we introduce the methods to construct feasible schedules for TTP-$k$. 
We have two construction methods.
The first construction is based on a given Hamiltonian cycle of the graph. This method has been widely used in the literature and we will not repeat all the details.
The second construction is based on a given cycle packing of the graph. This can be considered as a new method.
We need to use the two methods to do some trade-offs in the analysis.

\subsection{The Hamiltonian Cycle Construction}\label{B.1}
The idea of Hamiltonian cycle construction is to make use of the canonical schedule~\cite{kirkman1847problem,de1981scheduling} and a Hamiltonian cycle.
For TTP-$k$, many previous schedules were constructed by using this method~\cite{yamaguchi2009improved,westphal2014,imahori2010approximation}. In our algorithm, we will also directly use the well-analyzed schedule in~\cite{yamaguchi2009improved}. 
Next, we give a brief introduction to this construction.

Roughly speaking, this schedule is generated by a rotation scheme that can make sure that almost all road trips of each team are $(k+1)$-cycles and in each road trip, the team visits a set of consecutive teams along the Hamiltonian cycle used.
%
%
For the quality of this construction, we have the following lemma.
\begin{lemma}[\cite{yamaguchi2009improved}]\label{lb-distance}
Let $C$ be a Hamiltonian cycle in graph $G$. For TTP-$k$ with any constant $k\geq3$, there is a polynomial-time algorithm that can generate a solution with a total weight of at most
$$\frac{k-1}{k}n w(C)+\frac{2}{k}\Delta_G+O(k/n)\psi.$$
\end{lemma}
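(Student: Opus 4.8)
The plan is to reconstruct the rotation-based canonical schedule associated to a Hamiltonian cycle $C=(t_{\sigma(1)},t_{\sigma(2)},\dots,t_{\sigma(n)},t_{\sigma(1)})$ and then to bound its total cost by accounting separately for the ``regular'' travel along the cycle and the ``correction'' travel that the no-repeat and bounded-by-$k$ constraints force. First I would set up the canonical schedule of Kirkman/de Werra on the $n$ teams placed in the cyclic order given by $C$: fixing one team and rotating the remaining $n-1$, one obtains $2(n-1)$ slots in which, apart from $O(1)$ exceptional rounds per team, every road trip of a team consists of visiting $k$ consecutive teams along $C$ before returning home. The key structural observation (already implicit in~\cite{yamaguchi2009improved}) is that if a team's road trip visits the consecutive block $t_{\sigma(i)},t_{\sigma(i+1)},\dots,t_{\sigma(i+k-1)}$, then its cost is the length of the sub-path of $C$ plus at most the two ``closing'' edges from the team's home to the two endpoints of the block, and each such closing edge is, by the triangle inequality, bounded by the sum of weighted degrees contribution it generates.

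Next I would do the global counting. Summing over all teams and all of their $\Theta(n)$ road trips, each edge of $C$ is traversed a bounded number of times proportional to $\frac{k-1}{k}\cdot n$ (this is the standard fact that a length-$k$ sliding window over the cycle uses each cycle edge with frequency $\frac{k-1}{k}$ up to lower-order terms), which yields the $\frac{k-1}{k}n\,w(C)$ term. The closing edges and the $O(1)$ irregular road trips per team each contribute a quantity bounded by a constant times $\sum_{u}\deg_G(u)=\Delta_G$; a careful bookkeeping of how many times each vertex appears as a road-trip endpoint gives the coefficient $\frac{2}{k}$ on $\Delta_G$. The residual slack — the finitely many exceptional rounds, the teams near the ``seam'' of the rotation, and the boundary effects at the start/end of the season — is where the $O(k/n)\psi$ error term comes from: each such contribution is bounded by the optimal itinerary cost $\psi_v$ of the team involved (since any road trip of team $v$ costs at least a $\Theta(1/n)$ fraction of $\psi_v$ when amortized, or conversely the number of exceptional trips is $O(k)$ per team and their cost is absorbed into $O(k/n)\psi$ via Lemma~\ref{lb-delta+} / the independent lower bound), and summing over the $n$ teams gives $O(k/n)\psi$.

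The main obstacle is the precise accounting of the closing edges and the exceptional road trips to get exactly the coefficients $\frac{k-1}{k}$ and $\frac{2}{k}$ rather than just $O(1)$ constants: one must track, for each team and each round, exactly which vertices serve as the endpoints of its current block, show that over the whole schedule each vertex serves as an endpoint the right number of times, and verify that the triangle-inequality replacement of a home-to-endpoint edge by degree contributions does not double-count. Since this schedule and its analysis are exactly the one constructed and analyzed in~\cite{yamaguchi2009improved}, I would not redo the full calculation; instead I would cite that construction directly and simply note that feeding the Hamiltonian cycle $C$ into their algorithm produces a feasible TTP-$k$ schedule whose total weight is at most $\frac{k-1}{k}n\,w(C)+\frac{2}{k}\Delta_G+O(k/n)\psi$, which is the statement of the lemma. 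The only thing to double-check is that their error term, originally stated in terms of $n$ and $k$ alone, is at most $O(k/n)\psi$ here — this follows because $\psi=\Omega(n\cdot\mathrm{MST}(G))=\Omega(\Delta_G)$ by the independent lower bounds (e.g. $\frac{k}{2}\psi\ge\Delta_G$ of Lemma~\ref{lb-delta+}), so any additive term of the form $O(k)\cdot(\text{per-team cost})$ is absorbed into $O(k/n)\psi$.
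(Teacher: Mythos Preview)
Your proposal is correct and matches the paper's treatment: the paper does not prove this lemma at all but simply quotes it from~\cite{yamaguchi2009improved}, exactly as you conclude one should do. Your surrounding sketch of the rotation schedule and the counting argument is accurate background, but the paper itself omits it entirely and just cites the result.
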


We cannot use this result directly since most of our previous lower bounds are based on the graph $H$.
To get a similar result based on a given Hamiltonian cycle in graph $H$, we prove the following lemma.
\begin{lemma}\label{lb-distance+}
Let $C_H$ be a Hamiltonian cycle in graph $H$ such that $n_Sw(C_H)\leq O(1)\psi$. For TTP-$k$ with any constant $k\geq3$, there is a polynomial-time algorithm that can generate a solution with a total weight of at most
\[
\frac{k-1}{k}n_S w(C_H)+\frac{2}{k}\Delta_H+O(k^2/n)\psi.
\]
\end{lemma}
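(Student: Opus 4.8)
The plan is to reduce the statement about the subgraph $H$ to the known Hamiltonian-cycle construction of Lemma~\ref{lb-distance}, applied not to the original instance but to a smaller TTP-$k$ instance whose team set is $S$. First I would observe that since $n_S = 2mk$ is even and the problem size on $S$ is $n_S$ teams, we can run the construction of Lemma~\ref{lb-distance} on the complete graph $H = G[S]$ with the given Hamiltonian cycle $C_H$; this yields a feasible double round-robin schedule for the $n_S$ teams of $S$ with total weight at most $\frac{k-1}{k} n_S w(C_H) + \frac{2}{k}\Delta_H + O(k/n_S)\psi^H_{\text{scaled}}$, where the error term comes from the $O(k/n)$ factor in Lemma~\ref{lb-distance} with $n$ replaced by $n_S$. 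Using \eqref{n_S}, namely $1/n_S < 2/n$, the error term $O(k/n_S)$ becomes $O(k/n)$ times a constant, and since the ``$\psi$'' of the sub-instance is at most $\psi$ (itineraries on $H$ cost no more than on $G$, and there are fewer teams), this contributes $O(k/n)\psi$; I would need to be slightly careful to get $O(k^2/n)\psi$ rather than something smaller, but that is a safe over-estimate.

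The second step handles the teams in $\overline S = V \setminus S$, of which there are $n_{\overline S} < 4k$, a constant. Here I would schedule the games involving $\overline S$ teams crudely: the $n_{\overline S}$ teams among themselves play a double round-robin (constant number of teams, so this can be handled in $O(1)$ time or by any fixed feasible pattern), and each team in $\overline S$ must also play home-and-away against every team in $S$, and vice versa. The key point is to bound the extra travel this induces. Each such inter-set trip has cost governed by edges incident to $\overline S$, and the total weight of all edges incident to $\overline S$ is $\frac{1}{2}\sum_{v\in\overline S}\deg_G(v) \le \frac{2k}{n}\Delta_G$ by \eqref{degree+}. Since each of the constantly many $\overline S$-teams participates in $O(n)$ games each crossing into $S$, and each such game is ``paid for'' by shortcutting along edges whose total weight is controlled by $\deg_G(v)$ for $v\in\overline S$, the total additional cost is $O(k)\cdot\frac{4k}{n}\Delta_G = O(k^2/n)\Delta_G$; combining with Lemma~\ref{lb-delta+}, $\Delta_G \le \frac{k}{2}\psi$, gives $O(k^3/n)\psi$ — here I would want to be more economical in the counting to land at $O(k^2/n)\psi$, exploiting that $k$ is a constant so the distinction is immaterial to the final approximation ratios but the paper's bookkeeping wants $O(k^2/n)\psi$. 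The hypothesis $n_S w(C_H) \le O(1)\psi$ is what lets us also absorb any $w(C_H)$-dependent slack (e.g. from splicing the two partial schedules together at the ``seam'' days) into the $O(k^2/n)\psi$ term.

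The third step is the stitching: merge the schedule on $S$ (over $2(n_S-1)$ days) and the schedule involving $\overline S$ into a single feasible schedule on $2(n-1)$ days, inserting the cross games and the $\overline S$-internal games into the extra $2(n - n_S) = 2n_{\overline S}$ days (plus interleaving), while preserving no-repeat and bounded-by-$k$. Because $n_{\overline S}$ is a constant, this costs only a constant number of additional ``home stands'' per team, each of bounded length, and the feasibility constraints can be repaired locally; the travel overhead of these $O(1)$ extra trips per team is again $O(n)\cdot(\text{edge weights incident to }\overline S) = O(k^2/n)\psi$ after invoking \eqref{degree+} and Lemma~\ref{lb-delta+}. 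I expect the main obstacle to be exactly this stitching/feasibility argument: one must argue carefully that the canonical-schedule structure used in Lemma~\ref{lb-distance} on $S$ is compatible with appending the constant-size $\overline S$ block without violating bounded-by-$k$, and that all the ``extra'' edges used are incident to $\overline S$ so that \eqref{degree+} applies. Everything else — the arithmetic with \eqref{n_S}, \eqref{degree+}, Lemma~\ref{lb-delta+}, and the substitution $n \mapsto n_S$ in Lemma~\ref{lb-distance} — is routine. In writing this up I would likely avoid re-deriving the canonical schedule and instead cite Lemma~\ref{lb-distance} as a black box on the $n_S$-team instance, then add a short lemma-internal paragraph bounding the $\overline S$ overhead.
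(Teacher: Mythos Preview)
Your approach differs from the paper's and has a genuine gap in the stitching step. The fundamental obstacle is that in TTP every team plays exactly one game on each of the $2(n-1)$ days. If you run Lemma~\ref{lb-distance} as a black box on the $n_S$-team instance over $2(n_S-1)$ days, the $n_{\overline S}$ teams in $\overline S$ must also be playing on each of those days. But they can exhaust all $\overline S$-internal games in at most $2(n_{\overline S}-1)<8k$ days; for the remaining $2(n_S-1)-2(n_{\overline S}-1)$ days they would need opponents from $S$, all of whom are occupied with one another. So ``insert the cross games into the extra $2n_{\overline S}$ days (plus interleaving)'' cannot be done without modifying the $S$-schedule itself, at which point you are no longer using Lemma~\ref{lb-distance} as a black box and the weight bound you quoted for the $S$-part is no longer justified. (The paper does handle precisely this kind of integration in its cycle-packing construction in Section~\ref{B.2}, and it requires several pages of tailored super-game design; it is not a constant-size local repair.)

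The paper's proof sidesteps stitching entirely by going in the other direction: instead of shrinking the instance to $H$, it extends $C_H$ to a Hamiltonian cycle $C$ on all of $G$. For each $u\in\overline S$ it attaches $u$ to its nearest neighbor in $S$ via two parallel edges; the result is a connected Eulerian graph on $V$ whose weight is at most $w(C_H)+\frac{2}{n_S}\sum_{u\in\overline S}\deg_G(u)\le w(C_H)+\frac{16k}{n^2}\Delta_G$ by \eqref{degree+} and \eqref{n_S}, and shortcutting gives $w(C)\le w(C_H)+\frac{8k^2}{n^2}\psi$. Then Lemma~\ref{lb-distance} is applied once, to the full $n$-team instance with cycle $C$, yielding weight at most $\frac{k-1}{k}n\,w(C)+\frac{2}{k}\Delta_G+O(k/n)\psi$. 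The remaining work is pure arithmetic: replace $n$ by $n_S+n_{\overline S}$ with $n_{\overline S}<4k$, use the hypothesis $n_Sw(C_H)\le O(1)\psi$ to absorb the $n_{\overline S}w(C_H)$ term, and convert $\Delta_G$ to $\Delta_H$ via \eqref{deltagh}; every residual term lands in $O(k^2/n)\psi$. This route is both shorter and avoids any feasibility argument beyond what is already inside Lemma~\ref{lb-distance}.
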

\begin{proof}
Given a Hamiltonian cycle $C_H$ in graph $H$, we can construct a graph by adding two copies of edge $(u,v)$ for each $u\in \overline{S}$, where $v=\arg\min_{v'\in S} w(u,v')$. So, it holds that $$w(u,v)\leq \frac{1}{n_S}\sum_{v'\in S}w(u,v').$$ The new graph is a connected Eulerian graph on vertex set $V$, and the total weight of edges in this graph is at most
\begin{equation}\label{HG}
w(C_H)+\sum_{u\in \overline{S}}\lrA{\frac{2}{n_S}\sum_{v\in S}w(u,v)}\leq
w(C_H)+\frac{2}{n_S}\sum_{u\in \overline{S}}\deg_G(u)\leq
w(C_H)+\frac{16k}{n^2}\Delta_G,
\end{equation}
where the first inequality follows from $\sum_{v\in S}w(u,v)\leq \sum_{v\in V}w(u,v)=\deg_G(u)$, and the second follows from Lemma~\ref{degree+} and (\ref{n_S}). 
By shortcutting this Eulerian graph, we can get a Hamiltonian cycle $C$ in graph $G$ with a weight of at most
\[
w(C_H)+\frac{16k}{n^2}\Delta_G\leq w(C_H)+\frac{8k^2}{n^2}\psi
\]
by Lemma~\ref{lb-delta+} and (\ref{HG}). Using this cycle, by Lemma~\ref{lb-distance}, we can generate a solution with a total weight of at most
\begin{align*}
&\frac{k-1}{k}n\lrA{w(C_H)+\frac{8k^2}{n^2}\psi}+\frac{2}{k}\Delta_G+O(k/n)\psi\\
&\leq \frac{k-1}{k}nw(C_H)+\frac{2}{k}\Delta_G+\frac{8k^2}{n}\psi+O(k/n)\psi.
\end{align*}

Note that
\[
\frac{k-1}{k}nw(C_H)\leq \frac{k-1}{k}n_Sw(C_H)+n_{\overline{S}}w(C_H)\leq\frac{k-1}{k}n_Sw(C_H)+\frac{8k}{n}n_Sw(C_H),
\]
where the first inequality follows from $n=n_S+n_{\overline{S}}$, and the second follows from $n_{\overline{S}}<4k$ and $1/n_S<2/n$ by (\ref{n_S}).
By (\ref{deltagh}), we also have
\[
\frac{2}{k}\Delta_G=\frac{2}{k}\Delta_H+\frac{8k}{n}\psi.
\]
Based on the condition that $n_Sw(C_H)\leq O(1)\psi$, we know that the total weight of the solution is at most
\begin{align*}
&\frac{k-1}{k}nw(C_H)+\frac{2}{k}\Delta_G+\frac{8k^2}{n}\psi+O(k/n)\psi\\
&\ \leq \frac{k-1}{k}n_Sw(C_H)+\frac{2}{k}\Delta_H+\frac{8k}{n}n_Sw(C_H)+\frac{8k}{n}\psi+\frac{8k^2}{n}\psi+O(k/n)\psi\\
&\ \leq \frac{k-1}{k}n_Sw(C_H)+\frac{2}{k}\Delta_H+O(k^2/n)\psi.
\end{align*}
\end{proof}

Note that the condition $n_Sw(C_H)\leq O(1)\psi$ in Lemma~\ref{lb-distance+} is easy to satisfy.
For example, if the Hamiltonian cycle $C_H$ is obtained by the Christofides-Serdyukov algorithm (a $3/2$-approximation algorithm for TSP)~\cite{christofides1976worst,serdyukov1978some}, then we have $n_Sw(C_H)\leq \frac{3}{2}\psi=O(1)\psi$ by Lemma~\ref{lb-tsp+}.

\subsection{The Cycle Packing Construction}\label{B.2}
In this subsection, we will construct a feasible schedule based on a given $k$-cycle packing of the graph.
We will also show that the construction works for a given $k$-path packing.
One challenge for the  construction is to build a feasible schedule for $n$ not being divisible by $k$. We introduce the construction in detail.

\subsubsection{The construction}
Given a $k$-cycle packing $\C_k=\{C_1,\dots,C_{2m}\}$ in graph $H$. Recall that the $k$-cycles were arbitrarily labeled. There are $2m$ cycle-teams $\{u_1,\dots, u_{2m}\}$ and $m$ super-teams $\{U_1,\dots,U_m\}$, where $U_i=u_{2i-1}\cup u_{2i}$ and $u_i=\{t_{k(i-1)+1}, \dots, t_{k(i-1)+k}\}$. There are still $n_{\overline{S}}$ normal teams in $\overline{S}$, which will be arbitrarily divided into $n_{\overline{S}}/2$ pairs, denoted by $r$ \emph{team-pairs} $\{R_1,\dots,R_r\}$, where $R_{i}=\{t_{n_S+2i-1}, t_{n_S+2i}\}$ and $r=n_{\overline{S}}/2$.
The packing construction is to first arrange \emph{super-games} for super-teams (including the $r$ team-pairs) and then extend the super-games into a double round-robin for normal teams.

Recall that $n\geq 8k^2$, $n_{S}=2mk$, and $n_{\overline{S}}=n-n_{S}$.
Hence, we have $m=2\floor{\frac{n}{4k}}\geq4k>n_{\overline{S}}=2r$.
Each super-team will first attend $m-1$ super-games in $m-1$ time slots.
Each super-game will be extended to normal games between normal teams that span $4k$ days.
So each normal team will attend $4k(m-1)$ normal games.
Note that each normal team should attend $2n-2$ normal games in TTP-$k$.
Recall that $n=2mk+2r$.
Hence, there are still $2n-2-4k(m-1)=4k+4r-2$ days of normal games, called \emph{unarranged normal games}, which will be designed after the super-games in the last time slot.

We will construct the schedule for super-teams from the first time slot to the last time slot.
In each of the $m-1$ time slots, we have $m/2$ super-games.
The schedule in the last time slot, which also contains the design of unarranged normal games, is different from the schedules in the first $m-2$ time slots.

For the first time slot, the $m/2$ super-games are arranged as shown in Figure~\ref{figa}. The most left super-game including two super-teams is called \emph{left super-game}, it contains two super-teams, and we put a letter `L' on the edge to indicate it.
The most right $r$ super-games are called \emph{right super-games}, each of them contains two super-teams and one team-pair, and we put a letter `R' on the edge to indicate it.
The middle $m/2-r-1$ super-games are called \emph{normal super-games} and each of them contains two super-teams.
Each super-game contains a directed edge between the two super-teams, where a directed edge from $U_i$ to $U_j$ means a super-game between them at the home of $U_j$.

\begin{figure}[ht]
    \centering
    \includegraphics[scale=0.7]{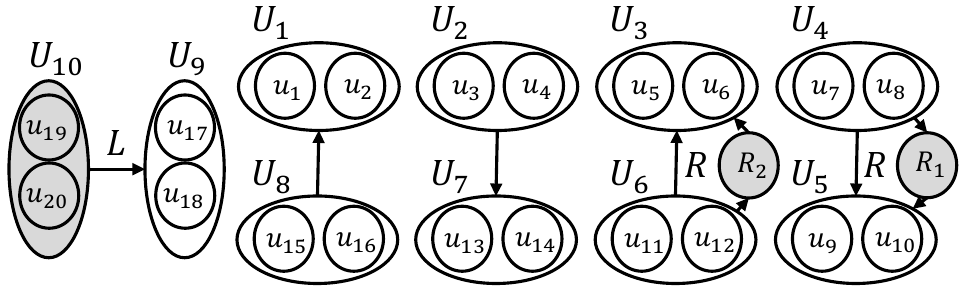}
    \caption{The schedule in the first time slot for an instance with $m=10$ and $r=2$}
    \label{figa}
\end{figure}

In Figure~\ref{figa}, we can see that the last super-team $U_m$ and the $r$ team-pairs are denoted as dark nodes and all other super-teams
$\{U_1, \dots, U_{m-1}\}$ are denoted as white nodes which form a cycle $(U_1, \dots, U_{m-1}, U_1)$.
In the second time slot, we fix the positions of dark nodes and change the positions of white super-teams in the cycle by moving one position in the clockwise direction, and we also change the direction of each edge.
In the second time slot, there is still $1$ left super-game, $m/2-r-1$ normal super-games and $r$ right super-games.
An illustration of the schedule in the second time slot is shown in Figure~\ref{figb}.

 \begin{figure}[ht]
    \centering
    \includegraphics[scale=0.7]{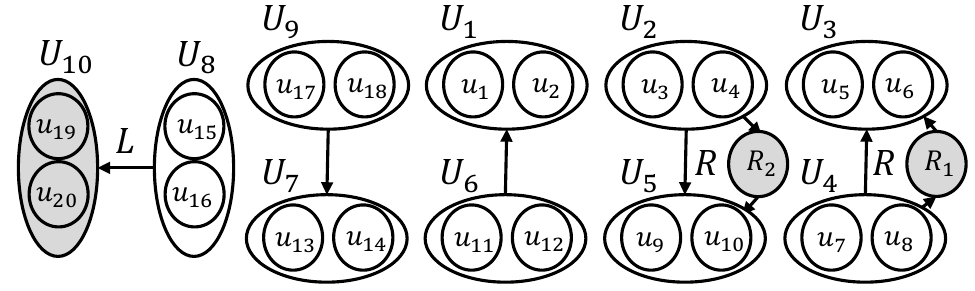}
    \caption{The schedule in the second time slot for an instance with $m=10$ and $r=2$}
    \label{figb}
\end{figure}

The schedules for the other middle slots are derived analogously.
Before we introduce the super-games in the last time slot, we first explain how to extend the super-games in the first $m-2$ time slots to normal games.
In these time slots, we have three different kinds of super-games: normal super-games, left super-games and right super-games. We first consider normal super-games.

\textbf{Normal super-games}:
Assume that there is a normal super-game $U_i\rightarrow U_j$, i.e., a normal super-game between super-teams $U_i$ and $U_j$ at the home of $U_j$. Recall that $U_{i}$ represents cycle-teams $u_{2i-1}\cup u_{2i}$ and $U_{j}$ represents cycle-teams $u_{2j-1}\cup u_{2j}$. The normal super-game will be extended to \emph{cycle-games} in two \emph{sub-time slots}: in the first sub-time slot, there are two \emph{normal cycle-games}: $\{u_{2i-1}\rightarrow u_{2j-1},\  u_{2i}\rightarrow u_{2j}\}$, where $u_{i'}\rightarrow u_{i''}$ means a normal cycle-game between cycle-teams $u_{i'}$ and $u_{i''}$ at the home of $u_{i''}$; in the second sub-time slot, there are still two normal cycle-games: $\{u_{2i-1}\rightarrow u_{2j},\  u_{2i}\rightarrow u_{2j-1}\}$.

Next, we only need to explain the design of normal cycle-games.

Assume that there is a normal cycle-game $u_i\rightarrow u_j$, i.e., a normal cycle-game between cycle-teams $u_i$ and $u_j$ at the home of $u_j$. Recall that $u_{i}$ represents normal teams $\{t_{k(i-1)+1}, \dots, t_{k(i-1)+k}\}$ and $u_{j}$ represents normal teams $\{t_{k(j-1)+1}, \dots, t_{k(j-1)+k}\}$. For the sake of presentation, we let $t_{k(i-1)+i'}=x_{i'}$ and $t_{k(j-1)+i'}=y_{i'}$, i.e., $u_i=\{x_1,\dots, x_k\}$ and $u_j=\{y_1,\dots, y_k\}$.
Let $p_l$ denote the $k$ normal games in $\{x_{i}\rightarrow y_{(i+l-2)\bmod k)+1}\}_{i=1}^{k}$, where $x_{i'}\rightarrow y_{i''}$ means a normal game between normal teams $x_{i'}$ and $y_{i''}$ at the home of $y_{i''}$. In the following, we also use $\overline{p_l}$ to denote the $k$ normal games of $p_l$, but with all venues reversed.
The normal cycle-game will be extended to normal games on $2k$ days, which can
can be presented by
\[
\lrA{p_1\cdots p_{k-2}\cdot p_{k-1}\cdot p_k}\cdot\lrA{\overline{p_1\cdots  p_{k-2}\cdot p_{k-1} \cdot p_k}}.
\]

An example of a normal super-game with $k=3$ is shown in Table~\ref{expand_normal}.
We can see that each normal super-game will be extended to normal games on $4k$ days. There are at most $k$-consecutive home/away normal games. The normal games between normal teams in $U_i$ and normal teams in $U_j$, which can be seen as a directed complete bipartite graph $B(U_i,U_j)$, are all arranged.

\begin{table}[ht]
\centering
\begin{tabular}{c|cccccccccccc}
  & $1$ & $2$ & $3$ & $4$ & $5$ & $6$ & $7$ & $8$ & $9$ & $10$ & $11$ & $12$\\
\hline
  $x_1$ & $y_1$ & $y_2$ & $y_3$ & $\pmb{y_1}$ & $\pmb{y_2}$ & $\pmb{y_3}$
        & $y_4$ & $y_5$ & $y_6$ & $\pmb{y_4}$ & $\pmb{y_5}$ & $\pmb{y_6}$\\
  $x_2$ & $y_2$ & $y_3$ & $y_1$ & $\pmb{y_2}$ & $\pmb{y_3}$ & $\pmb{y_1}$
        & $y_5$ & $y_6$ & $y_4$ & $\pmb{y_5}$ & $\pmb{y_6}$ & $\pmb{y_4}$\\
  $x_3$ & $y_3$ & $y_1$ & $y_2$ & $\pmb{y_3}$ & $\pmb{y_1}$ & $\pmb{y_2}$
        & $y_6$ & $y_4$ & $y_5$ & $\pmb{y_6}$ & $\pmb{y_4}$ & $\pmb{y_5}$\\
  $x_4$ & $y_4$ & $y_5$ & $y_6$ & $\pmb{y_4}$ & $\pmb{y_5}$ & $\pmb{y_6}$
        & $y_1$ & $y_2$ & $y_3$ & $\pmb{y_1}$ & $\pmb{y_2}$ & $\pmb{y_3}$\\
  $x_5$ & $y_5$ & $y_6$ & $y_4$ & $\pmb{y_5}$ & $\pmb{y_6}$ & $\pmb{y_4}$
        & $y_2$ & $y_3$ & $y_1$ & $\pmb{y_2}$ & $\pmb{y_3}$ & $\pmb{y_1}$\\
  $x_6$ & $y_6$ & $y_4$ & $y_5$ & $\pmb{y_6}$ & $\pmb{y_4}$ & $\pmb{y_5}$
        & $y_3$ & $y_1$ & $y_2$ & $\pmb{y_3}$ & $\pmb{y_1}$ & $\pmb{y_2}$\\
\hline
  $y_1$ & $\pmb{x_1}$ & $\pmb{x_3}$ & $\pmb{x_2}$ & $x_1$ & $x_3$ & $x_2$
        & $\pmb{x_4}$ & $\pmb{x_6}$ & $\pmb{x_5}$ & $x_4$ & $x_6$ & $x_5$\\
  $y_2$ & $\pmb{x_2}$ & $\pmb{x_1}$ & $\pmb{x_3}$ & $x_2$ & $x_1$ & $x_3$
        & $\pmb{x_5}$ & $\pmb{x_4}$ & $\pmb{x_6}$ & $x_5$ & $x_4$ & $x_6$\\
  $y_3$ & $\pmb{x_3}$ & $\pmb{x_2}$ & $\pmb{x_1}$ & $x_3$ & $x_2$ & $x_1$
        & $\pmb{x_6}$ & $\pmb{x_5}$ & $\pmb{x_4}$ & $x_6$ & $x_5$ & $x_4$\\

  $y_4$ & $\pmb{x_4}$ & $\pmb{x_6}$ & $\pmb{x_5}$ & $x_4$ & $x_6$ & $x_5$
        & $\pmb{x_1}$ & $\pmb{x_3}$ & $\pmb{x_2}$ & $x_1$ & $x_3$ & $x_2$\\
  $y_5$ & $\pmb{x_5}$ & $\pmb{x_4}$ & $\pmb{x_6}$ & $x_5$ & $x_4$ & $x_6$
        & $\pmb{x_2}$ & $\pmb{x_1}$ & $\pmb{x_3}$ & $x_2$ & $x_1$ & $x_3$\\
  $y_6$ & $\pmb{x_6}$ & $\pmb{x_5}$ & $\pmb{x_4}$ & $x_6$ & $x_5$ & $x_4$
        & $\pmb{x_3}$ & $\pmb{x_2}$ & $\pmb{x_1}$ & $x_3$ & $x_2$ & $x_1$\\

\end{tabular}
\caption{Extending the normal super-game $U_i\rightarrow U_j$ between super-teams $u_i$ and $U_j$, where $U_i=u_{2i-1}\cup u_{2i}=\{x_1,x_2,x_3\}\cup\{x_4,x_5,x_6\}$, $U_j=u_{2j-1}\cup u_{2j}=\{y_1,y_2,y_3\}\cup\{y_4,y_5,y_6\}$, and home games are marked in bold}
\label{expand_normal}
\end{table}

\textbf{Left super-games}:
Assume that there is a left super-game $U_i\xrightarrow{L} U_j$, i.e., a left super-game between super-teams $U_i$ and $U_j$ at the home of $U_j$. Recall that $U_{i}$ represents cycle-teams $u_{2i-1}\cup u_{2i}$ and $U_{j}$ represents cycle-teams $u_{2j-1}\cup u_{2j}$. The left super-game will be extended to cycle-games in two sub-time slots: in the first sub-time slot, there are two \emph{left cycle-games}: $\{u_{2i-1}\xrightarrow{L} u_{2j-1},\  u_{2i}\xrightarrow{L} u_{2j}\}$, where $u_{i'}\xrightarrow{L} u_{i''}$ means a left cycle-game between cycle-teams $u_{i'}$ and $u_{i''}$ at the home of $u_{i''}$; in the second sub-time slot, there are still two left cycle-games: $\{u_{2i-1}\xrightarrow{L} u_{2j},\  u_{2i}\xrightarrow{L} u_{2j-1}\}$.

Next, we only need to explain the design of left cycle-games.

Assume that there is a left cycle-game $u_i\xrightarrow{L} u_j$, i.e., a left cycle-game between cycle-teams $u_i$ and $u_j$ at the home of $u_j$. Let $u_i=\{x_1,\dots, x_k\}$ and $u_j=\{y_1,\dots, y_k\}$.
Let $p_l$ denote the $k$ normal games in $\{x_{i}\rightarrow y_{(i+l-2)\bmod k)+1}\}_{i=1}^{k}$.
The left cycle-game will be extended to normal games on $2k$ days, which can
can be presented by
\[
\lrA{p_1\cdots p_{k-2}\cdot p_{k-1}\cdot\overline{p_k}}\cdot\lrA{\overline{p_1\cdots p_{k-2}\cdot p_{k-1}\cdot\overline{p_k}}}.
\]

An example of a left super-game with $k=3$ is shown in Table~\ref{expand_left}.
We can see that each left super-game will be extended to normal games on $4k$ days. There are at most $k$-consecutive home/away normal games. The normal games between normal teams in $U_i$ and normal teams in $U_j$, which can be seen as a directed complete bipartite graph $B(U_i,U_j)$, are all arranged.

\begin{table}[ht]
\centering
\begin{tabular}{c|cccccccccccc}
  & $1$ & $2$ & $3$ & $4$ & $5$ & $6$ & $7$ & $8$ & $9$ & $10$ & $11$ & $12$\\
\hline
  $x_1$ & $y_1$ & $y_2$ & $\pmb{y_3}$ & $\pmb{y_1}$ & $\pmb{y_2}$ & $y_3$
        & $y_4$ & $y_5$ & $\pmb{y_6}$ & $\pmb{y_4}$ & $\pmb{y_5}$ & $y_6$\\
  $x_2$ & $y_2$ & $y_3$ & $\pmb{y_1}$ & $\pmb{y_2}$ & $\pmb{y_3}$ & $y_1$
        & $y_5$ & $y_6$ & $\pmb{y_4}$ & $\pmb{y_5}$ & $\pmb{y_6}$ & $y_4$\\
  $x_3$ & $y_3$ & $y_1$ & $\pmb{y_2}$ & $\pmb{y_3}$ & $\pmb{y_1}$ & $y_2$
        & $y_6$ & $y_4$ & $\pmb{y_5}$ & $\pmb{y_6}$ & $\pmb{y_4}$ & $y_5$\\
  $x_4$ & $y_4$ & $y_5$ & $\pmb{y_6}$ & $\pmb{y_4}$ & $\pmb{y_5}$ & $y_6$
        & $y_1$ & $y_2$ & $\pmb{y_3}$ & $\pmb{y_1}$ & $\pmb{y_2}$ & $y_3$\\
  $x_5$ & $y_5$ & $y_6$ & $\pmb{y_4}$ & $\pmb{y_5}$ & $\pmb{y_6}$ & $y_4$
        & $y_2$ & $y_3$ & $\pmb{y_1}$ & $\pmb{y_2}$ & $\pmb{y_3}$ & $y_1$\\
  $x_6$ & $y_6$ & $y_4$ & $\pmb{y_5}$ & $\pmb{y_6}$ & $\pmb{y_4}$ & $y_5$
        & $y_3$ & $y_1$ & $\pmb{y_2}$ & $\pmb{y_3}$ & $\pmb{y_1}$ & $y_2$\\
\hline
  $y_1$ & $\pmb{x_1}$ & $\pmb{x_3}$ & $x_2$ & $x_1$ & $x_3$ & $\pmb{x_2}$
        & $\pmb{x_4}$ & $\pmb{x_6}$ & $x_5$ & $x_4$ & $x_6$ & $\pmb{x_5}$\\
  $y_2$ & $\pmb{x_2}$ & $\pmb{x_1}$ & $x_3$ & $x_2$ & $x_1$ & $\pmb{x_3}$
        & $\pmb{x_5}$ & $\pmb{x_4}$ & $x_6$ & $x_5$ & $x_4$ & $\pmb{x_6}$\\
  $y_3$ & $\pmb{x_3}$ & $\pmb{x_2}$ & $x_1$ & $x_3$ & $x_2$ & $\pmb{x_1}$
        & $\pmb{x_6}$ & $\pmb{x_5}$ & $x_4$ & $x_6$ & $x_5$ & $\pmb{x_4}$\\

  $y_4$ & $\pmb{x_4}$ & $\pmb{x_6}$ & $x_5$ & $x_4$ & $x_6$ & $\pmb{x_5}$
        & $\pmb{x_1}$ & $\pmb{x_3}$ & $x_2$ & $x_1$ & $x_3$ & $\pmb{x_2}$\\
  $y_5$ & $\pmb{x_5}$ & $\pmb{x_4}$ & $x_6$ & $x_5$ & $x_4$ & $\pmb{x_6}$
        & $\pmb{x_2}$ & $\pmb{x_1}$ & $x_3$ & $x_2$ & $x_1$ & $\pmb{x_3}$\\
  $y_6$ & $\pmb{x_6}$ & $\pmb{x_5}$ & $x_4$ & $x_6$ & $x_5$ & $\pmb{x_4}$
        & $\pmb{x_3}$ & $\pmb{x_2}$ & $x_1$ & $x_3$ & $x_2$ & $\pmb{x_1}$\\
\end{tabular}
\caption{Extending the left super-game $U_i\xrightarrow{L} U_j$ between super-teams $u_i$ and $U_j$, where $U_i=u_{2i-1}\cup u_{2i}=\{x_1,x_2,x_3\}\cup\{x_4,x_5,x_6\}$, $U_j=u_{2j-1}\cup u_{2j}=\{y_1,y_2,y_3\}\cup\{y_4,y_5,y_6\}$, and home games are marked in bold}
\label{expand_left}
\end{table}

\textbf{Right super-games}:
Assume that there is a right super-game $\{U_i\xrightarrow{R} U_j\}$, i.e., a right super-game between super-teams $U_i$ and $U_j$ at the home of $U_j$. There also should be a team-pair $R_{i'}$. Recall that $U_{i}$ represents cycle-teams $u_{2i-1}\cup u_{2i}$, $U_{j}$ represents cycle-teams $u_{2j-1}\cup u_{2j}$, and $R_{i}=\{t_{n_S+2i'-1}, t_{n_S+2i'}\}$. The super-game will be directly extended to normal-games on $4k$ days.
Let $U_i=\{x_1,\dots, x_k\}\cup\{x_{k+1},\dots,x_{2k}\}$ and $U_j=\{y_1,\dots, y_k\}\cup\{y_{k+1},\dots,y_{2k}\}$. In the schedule, if the position of $U_i$ is above $U_j$ (for example, the position of $U_4$ is above $U_5$ while the position of $U_6$ is below $U_3$ in Figure~\ref{figa}), we let $t_{n_S+2i'-1}=x_{2k+1}$ and $t_{n_S+2i'}=y_{2k+1}$, otherwise, we let $t_{n_S+2i'-1}=y_{2k+1}$ and $t_{n_S+2i'}=x_{2k+1}$. 
Moreover, let $p_l$ denote the $2k+1$ normal games in $\{x_{i}\rightarrow y_{(2k+2l-i)\bmod (2k+1)+1}\}_{i=1}^{2k+1}$.
The right super-game will be extended to normal games on $2k$ days, which can
can be presented by
\begin{align*}
&\lrA{p_1\cdots p_{k-2}\cdot \overline{p_{k-1}}\cdot p_k}\cdot\lrA{\overline{p_1\cdots p_{k-2}\cdot \overline{p_{k-1}}\cdot p_k}}\\
&\cdot \lrA{p_{k+1}\cdots p_{2k-2}\cdot \overline{p_{2k-1}}\cdot p_{2k}}\cdot\lrA{\overline{p_{k+1}\cdots p_{2k-2}\cdot \overline{p_{2k-1}}\cdot p_{2k}}}.
\end{align*}

An example of a right super-game with $k=3$ is shown in Table~\ref{expand_right}.
We can see that each right super-game will be extended to normal games on $4k$ days. There are at most $k$-consecutive home/away normal games. Comparing with normal/left super-games, there are two more normal teams in $R_{i'}$, and hence two days of normal games, $p_{2k+1}\cup\overline{p_{2k+1}}$, are not arranged. We will arrange them after the super-games in the last time slot.

\begin{table}[ht]
\centering
\begin{tabular}{c|cccccccccccc}
  & $1$ & $2$ & $3$ & $4$ & $5$ & $6$ & $7$ & $8$ & $9$ & $10$ & $11$ & $12$\\
\hline
  $x_1$ & $y_1$ & $\pmb{y_3}$ & $y_5$ & $\pmb{y_1}$ & $y_3$ & $\pmb{y_5}$ & $y_7$ & $\pmb{y_2}$ & $y_4$ & $\pmb{y_7}$ & $y_2$ & $\pmb{y_4}$\\
  $x_2$ & $y_7$ & $\pmb{y_2}$ & $y_4$ & $\pmb{y_7}$ & $y_2$ & $\pmb{y_4}$ & $y_6$ & $\pmb{y_1}$ & $y_3$ & $\pmb{y_6}$ & $y_1$ & $\pmb{y_3}$\\
  $x_3$ & $y_6$ & $\pmb{y_1}$ & $y_3$ & $\pmb{y_6}$ & $y_1$ & $\pmb{y_3}$ & $y_5$ & $\pmb{y_7}$ & $y_2$ & $\pmb{y_5}$ & $y_7$ & $\pmb{y_2}$\\
  $x_4$ & $y_5$ & $\pmb{y_7}$ & $y_2$ & $\pmb{y_5}$ & $y_7$ & $\pmb{y_2}$ & $y_4$ & $\pmb{y_6}$ & $y_1$ & $\pmb{y_4}$ & $y_6$ & $\pmb{y_1}$\\
  $x_5$ & $y_4$ & $\pmb{y_6}$ & $y_1$ & $\pmb{y_4}$ & $y_6$ & $\pmb{y_1}$ & $y_3$ & $\pmb{y_5}$ & $y_7$ & $\pmb{y_3}$ & $y_5$ & $\pmb{y_7}$\\
  $x_6$ & $y_3$ & $\pmb{y_5}$ & $y_7$ & $\pmb{y_3}$ & $y_5$ & $\pmb{y_7}$ & $y_2$ & $\pmb{y_4}$ & $y_6$ & $\pmb{y_2}$ & $y_4$ & $\pmb{y_6}$\\
  $x_7$ & $y_2$ & $\pmb{y_4}$ & $y_6$ & $\pmb{y_2}$ & $y_4$ & $\pmb{y_6}$ & $y_1$ & $\pmb{y_3}$ & $y_5$ & $\pmb{y_1}$ & $y_3$ & $\pmb{y_5}$\\
\hline
  $y_1$ & $\pmb{x_1}$ & $x_3$ & $\pmb{x_5}$ & $x_1$ & $\pmb{x_3}$ & $x_5$ & $\pmb{x_7}$ & $x_2$ & $\pmb{x_4}$ & $x_7$ & $\pmb{x_2}$ & $x_4$\\
  $y_2$ & $\pmb{x_7}$ & $x_2$ & $\pmb{x_4}$ & $x_7$ & $\pmb{x_2}$ & $x_4$ & $\pmb{x_6}$ & $x_1$ & $\pmb{x_3}$ & $x_6$ & $\pmb{x_1}$ & $x_3$\\
  $y_3$ & $\pmb{x_6}$ & $x_1$ & $\pmb{x_3}$ & $x_6$ & $\pmb{x_1}$ & $x_3$ & $\pmb{x_5}$ & $x_7$ & $\pmb{x_2}$ & $x_5$ & $\pmb{x_7}$ & $x_2$\\
  $y_4$ & $\pmb{x_5}$ & $x_7$ & $\pmb{x_2}$ & $x_5$ & $\pmb{x_7}$ & $x_2$ & $\pmb{x_4}$ & $x_6$ & $\pmb{x_1}$ & $x_4$ & $\pmb{x_6}$ & $x_1$\\
  $y_5$ & $\pmb{x_4}$ & $x_6$ & $\pmb{x_1}$ & $x_4$ & $\pmb{x_6}$ & $x_1$ & $\pmb{x_3}$ & $x_5$ & $\pmb{x_7}$ & $x_3$ & $\pmb{x_5}$ & $x_7$\\
  $y_6$ & $\pmb{x_3}$ & $x_5$ & $\pmb{x_7}$ & $x_3$ & $\pmb{x_5}$ & $x_7$ & $\pmb{x_2}$ & $x_4$ & $\pmb{x_6}$ & $x_2$ & $\pmb{x_4}$ & $x_6$\\
  $y_7$ & $\pmb{x_2}$ & $x_4$ & $\pmb{x_6}$ & $x_2$ & $\pmb{x_4}$ & $x_6$ & $\pmb{x_1}$ & $x_3$ & $\pmb{x_5}$ & $x_1$ & $\pmb{x_3}$ & $x_5$\\
\end{tabular}
\caption{Extending the right super-game $U_i\xrightarrow{R} U_j$ between super-teams $u_i$ and $U_j$ with a team-pair $R_{i'}$, where $U_i=u_{2i-1}\cup u_{2i}=\{x_1,x_2,x_3\}\cup\{x_4,x_5,x_6\}$, $U_j=u_{2j-1}\cup u_{2j}=\{y_1,y_2,y_3\}\cup\{y_4,y_5,y_6\}$, $R_{i'}=\{x_{2k+1},y_{2k+1}\}$, and home games are marked in bold}
\label{expand_right}
\end{table}

The first $m-2$ time slots will be extended to $4k(m-2)$ days. Recall that $n=2km+2r$. Each normal team will have $2n-2-4k(m-2)=8k+4r-2$ remaining normal games, which correspond to the normal games in the last time slot.

\textbf{The last time slot.}
In the last time slot, we will first design $m/2$ super-games, and then we will design the unarranged normal games.
Figure~\ref{figc} shows the $m/2$ super-games of our schedule in the last time slot.
Except for the $r$ right super-games, the rest super-games are all left super-games.

\begin{figure}[ht]
    \centering
    \includegraphics[scale=0.7]{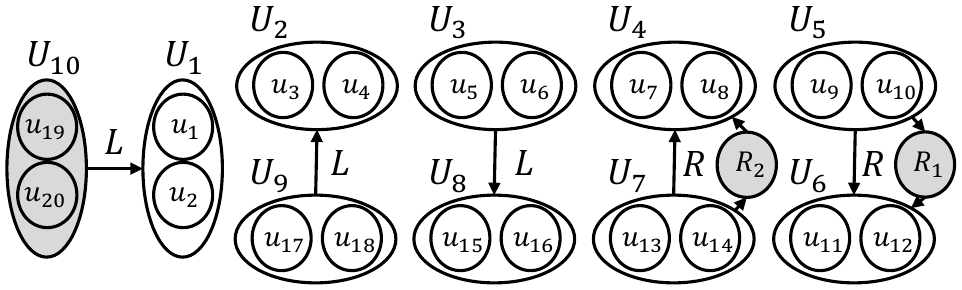}
    \caption{The schedule in the last time slot for an instance with $m=10$ and $r=2$}
    \label{figc}
\end{figure}

We use $H$ and $A$ to denote a home and an away normal game, respectively. An advantage of left/right super-games is that: by the design of left/right super-games as shown in Tables~\ref{expand_left} and~\ref{expand_right}, all normal teams play $AH$ or $HA$ on the last two days of the super-games. It can help us to combine the unarranged normal games without creating more than $k$-consecutive home/away normal games.

Next, we will design the unarranged normal games. Since the $m/2$ super-games in the last time slot have already spanned $4k$ days, the unarranged normal games will span $4k+4r-2$ days.
To design unarranged normal games, we will consider two cases of normal teams.

\textbf{Case~1:} We first consider the normal teams in super-team $U_{m}$ and the $r$ team-pairs $\{R_1,\dots,R_r\}$.
By the design of left, normal, and right super-games, we have not arranged the normal games between normal teams in each super-team and team-pair.
Note that the positions of super-team $U_{m}$ and the $r$ team-pairs are fixed, and hence they never meet each other in the schedule. The normal games between normal teams in $U_m$ and normal teams in $R_i$, and the normal games between normal teams in $R_{i}$ and normal teams in $R_{i'}$, are unarranged.
There are $2k$ normal teams in $U_m$ and $2$ normal teams in each team-pair.
The normal games between normal teams in $U_m\cup R_1\cup\cdots R_r$, which can seen as a directed complete graph, form a double round-robin on these $2k+2r$ (even) normal teams.
Hence, we can take $2\cdot(2k+2r-1)=4k+4r-2$ days to arrange a double-round robin for them. That is exactly the number of left days.
To design a double round-robin, we need to make sure that it can combine well with the super-games in the last time-slot, i.e., without creating more than $k$-consecutive home/away normal games.
Since all of them play $AH$ or $HA$ on the last two days of the super-games in the last time slot, we can call an algorithm of TTP-2 to satisfy the bounded-by-$k$ and no-repeat constraints~\cite{thielen2012approximation,DBLP:conf/cocoon/ZhaoX21}. Since there are at most two consecutive home/away normal games for TTP-2, we can put the TTP-2 double-round robin behind the super-games in the last time slot, which will not create more than three consecutive home/away normal games.

\textbf{Case~2:} Next, we consider the normal teams in super-teams $U_1\cup\cdots U_{m-1}$.
The unarranged normal games can be summarized as two parts. Recall that there are two days of normal games $p_{2k+1}\cup\overline{p_{2k+1}}$ unarranged in each right super-game. These unarranged normal games are regarded as the first part. The normal games between normal teams in each super-team are regarded as the second part.

\textbf{The first part.}
Since there are $r$ team-pairs $\{R_1,\dots,R_r\}$, we have $r$ kinds of right super-games. The right super-game including team-pair $R_{i'}$ is called the $i'$-th right super-game.
We use an super-edge $(U_i, U_j)_{i'}$ to denote an $i'$-th right super-game between super-teams $U_i$ and $U_j$.
In our schedule, each super-team has exactly two $i'$-th right super-games. Hence, the super-edges representing the $i'$-th right super-games is a set of cycles, denoted by $\C_{<i'>}$. Whether there exists an $i'$-th right super-game between super-teams $U_i$ and $U_j$ is due to their relative positions. It is easy to see that $\C_{<i'>}$ is a cycle packing, i.e., each cycle in  $\C_{<i'>}$ has the same length. Note that there are $m-1$ super-teams in $\C_{<i'>}$. Then, the length of each cycle in $\C_{<i'>}$ is $(m-1)/\size{\C_{<i'>}}$, which is odd since $m-1$ is odd. Hence, $\C_{<i'>}$ is an odd cycle packing.
For example, when $m=10$ and $r=2$ as shown in Figure~\ref{figa}, $\C_{<1>}$ contains one (odd) cycle $(U_1,U_2,\dots, U_9, U_1)_1$, and $\C_{<2>}$ contains three (odd) cycles $(U_1,U_4,U_7,U_1)_2$, $(U_2,U_5,U_8,U_2)_2$ and $(U_3,U_6,U_9,U_3)_2$.

Then, we analyze the unarranged normal games in the $i'$-right super-games. Take an arbitrary odd cycle in $\C_{<i'>}$, denoted by $C=(U_{i'_1},U_{i'_2},\dots, U_{i'_p}, U_{i'_1})_{i'}$, where $p$ is odd.
Let $U_{i'_i}=\{x^i_1,\dots, x^i_k\}\cup\{x^i_{k+1},\dots,x^i_{2k}\}$.
Assume that there is an $i'$-th right super-game between super-teams $U_{i'_i}$ and $U_{i'_j}$, and let the team-pair $R_{i'}=\{x^i_{2k+1},x^j_{2k+1}\}$. By the design of right super-games, the unarranged normal games are $p_{2k+1}\cup\overline{p_{2k+1}}=\{x^i_{i''}\leftrightarrow x^j_{(2k-i'')\bmod (2k+1)+1}\}_{i''=1}^{2k}\cup\{x^i_{2k+1}\leftrightarrow x^j_{2k+1}\}$. Note that the normal games $\{x^i_{2k+1}\leftrightarrow x^j_{2k+1}\}$ between normal teams in team-pair $R_{i'}$ have been arranged when we consider the super-team $U_m$ and the $r$ team-pairs in {Case~1}. Hence, we need to arrange the normal games $\{x^i_{i''}\leftrightarrow x^j_{(2k-i'')\bmod (2k+1)+1}\}_{i''=1}^{2k}$. An example of these unarranged normal games for $k=3$ is shown in Figure~\ref{figr1}.

\begin{figure}[ht]
    \centering
    \includegraphics[scale=1.2]{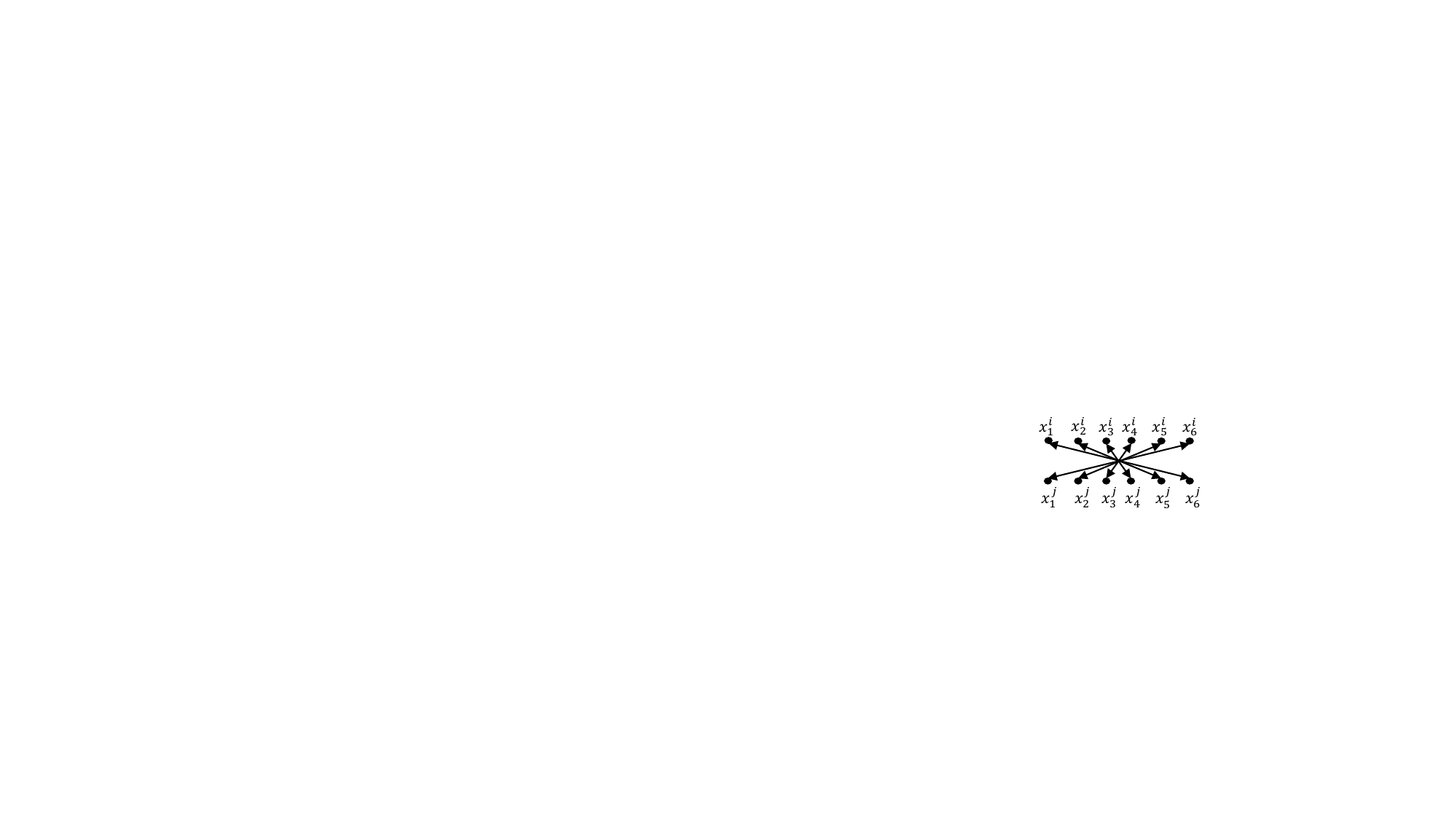}
    \caption{The unarranged normal games between super-teams $U_i=\{x^i_1,x^i_2,x^i_3\}\cup\{x^i_4,x^i_5,x^i_6\}$, $U_j=\{x^j_1,x^j_2,x^j_3\}\cup\{x^j_4,x^j_5,x^j_6\}$}
    \label{figr1}
\end{figure}

We can see that the structure has a symmetric property. In the following, we only consider two normal teams $\{x^i_1,x^i_{2k}\}$ for each super-team $U_{i'_i}$. The unarranged normal games on the cycle $C=(U_{i'_1},U_{i'_2},\dots, U_{i'_p}, U_{i'_1})_{i'}$ can be presented by a bi-directed cycle
\[
x^1_0\leftrightarrow x^2_{2k}\leftrightarrow x^3_0\cdots\leftrightarrow x^{p}_{0}\leftrightarrow x^1_{2k}\leftrightarrow x^2_0\leftrightarrow x^3_{2k}\cdots x^{p}_{2k}\leftrightarrow x^1_0.
\]
An example of these $4p$ unarranged normal games for $k=3$ and $p=9$ is shown in Figure~\ref{figr2}.

\begin{figure}[ht]
    \centering
    \includegraphics[scale=1.2]{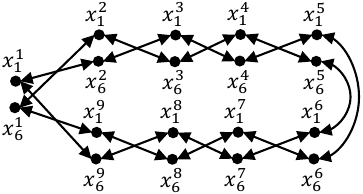}
    \caption{The unarranged normal games on the cycle $C=(U_{i'_1},U_{i'_2},\dots, U_{i'_p}, U_{i'_1})_{i'}$, where we only consider two normal teams $\{x^i_1,x^i_{2k}\}$ for each super-team $U_{i'_i}$, $p=9$, and $k=3$}
    \label{figr2}
\end{figure}
Let $s_1$ and $s_2$ be the $p$ normal games $\{x^i_1\rightarrow x^{(i+p)\bmod p+1}_{2k}\}_{i=1}^{i=p}$ and $\{x^i_{2k}\rightarrow x^{(i+p)\bmod p+1}_{1}\}_{i=1}^{i=p}$, respectively. Then, the $4p$ unarranged normal games can be arranged as $(s_1\cdot s_2)\cdot(\overline{s_1\cdot s_2})$.
An example of the $p$ normal games of $s_1$ and $s_2$ for $k=3$ and $p=9$ is shown in Figure~\ref{figr3}. Note that we can design the unarranged normal games over the cycles in $\C_{<i'>}$ simultaneously. Moreover, all normal teams in $U_1\cup U_2\cup \cdots U_{m-1}$ play $AHHA$ or $HAAH$.

\begin{figure}[ht]
    \centering
    \includegraphics[scale=1.2]{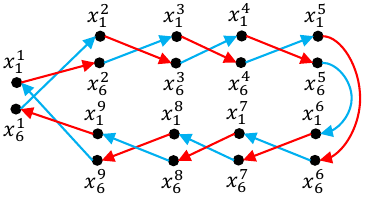}
    \caption{The unarranged normal games on the cycle $C=(U_{i'_1},U_{i'_2},\dots, U_{i'_p}, U_{i'_1})_{i'}$, where we only consider two normal teams $\{x^i_1,x^i_{2k}\}$ for each super-team $U_{i'_i}$, $p=9$, and $k=3$}
    \label{figr3}
\end{figure}

Since there are $r$ kinds of right super-games, we can design the unarranged normal games related to the cycle packing $\C_{<i'>}$, respectively, and hence the unarranged normal games in the first part need to span $4r$ days. Recall that all normal teams play $HA$ or $AH$ on the last two days of the super-games in the last time slot. Putting the first part behind the super-games will not create more than three consecutive home/away normal games.

\textbf{The second part.} Recall that the second part contains the normal games between normal teams in each super-team $U_i$ ($i<m$). For each super-team $U_i$, there are $2k$ (even) normal teams, and the unarranged normal games form a double round robin for them. Similarly, we can call an algorithm of TTP-2 to arrange a double round-robin for them. The second part spans $4k-2$ days.

For normal teams in $U_1\cup U_2\cup \cdots U_{m-1}$, the normal games in these two parts span $4k+4r-2$ days, which is exactly the number of remaining days.
Note that all normal teams in $U_1\cup U_2\cup \cdots U_{m-1}$ play $AH$ or $HA$ on the last two days of the first part. Hence, putting the TTP-2 double round-robins behind the first part will not create more than three consecutive home/away normal games.

Note that there are at most $k$-consecutive home/away normal games in the super-games of the last time slot. Putting the unarranged normal games behind the super-games will not create more than three consecutive home/away normal games.

\begin{theorem}~\label{feasible1}
For TTP-$k$ with any $k\geq 3$, when $n\geq 8k^2$, the $k$-cycle packing construction can generate a feasible schedule.
\end{theorem}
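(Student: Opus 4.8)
The plan is to verify directly that the schedule spelled out by the construction (the rotation scheme of Figures~\ref{figa}--\ref{figc}, the expansion Tables~\ref{expand_normal}--\ref{expand_right}, and the two cases for the last time slot) meets the definition of TTP-$k$: it is a double round-robin on the $n$ normal teams, every team plays exactly once on each of the $2n-2$ days, no two teams meet on two consecutive days, and no team ever has more than $k$ consecutive home games or more than $k$ consecutive away games; the direct-traveling condition is automatic since it only prescribes how distances are computed. I would carry out the verification block by block and then at the block boundaries.

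For the double round-robin and the one-game-per-day property, the key observation is that the super-team schedule of Figures~\ref{figa}--\ref{figc} is a single round-robin on $\{U_1,\dots,U_m\}$: the cycle $(U_1,\dots,U_{m-1},U_1)$ rotates by one position in each of the $m-1$ time slots while $U_m$ stays fixed, and each time slot contains exactly $m/2$ super-games. Next, each super-game $U_i\to U_j$ expands---via the normal, left, or right table---into the complete bipartite digraph on the normal teams of $U_i$ and of $U_j$ (together with the relevant team-pair, and minus only the single game $x_{2k+1}\leftrightarrow y_{2k+1}$ in the case of a right super-game), each game appearing exactly once. Hence a single round-robin on the super-teams already yields a double round-robin on every pair of normal teams lying in different super-teams, and the team-pair games against white super-teams are handled by the right super-games and the odd cycle packings $\C_{<i'>}$. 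The games not yet covered---those inside a super-team $U_i$ with $i<m$, all games incident to $U_m$, all games among the team-pairs, and the games $p_{2k+1}\cup\overline{p_{2k+1}}$ deferred by the right super-games---are exactly the ones scheduled in the last time slot: Case~1 builds a double round-robin on the $2k+2r$ teams of $U_m\cup R_1\cup\dots\cup R_r$, the ``second part'' of Case~2 builds a double round-robin inside each $U_i$ with $i<m$, and the ``first part'' arranges the deferred $p_{2k+1}$-games cycle-by-cycle through the pattern $(s_1 s_2)(\overline{s_1 s_2})$. A direct count then shows no game is repeated and that the number of days equals $4k(m-2)+4k+(4k+4r-2)=4km+4r-2=2n-2$, with each team having exactly one game on each day of each block.

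For no-repeat and bounded-by-$k$, I would read each block off its table and then check each junction between consecutive blocks. Inside any super-game expansion, consecutive days have distinct opponents, and every team's home/away string over the $4k$ days is a concatenation of runs of length at most $k$ (the groups $p_1\cdots p_k$ and their reverses have length $k$, and the $\overline{p_{k-1}},\overline{p_k}$ twists in the left and right expansions only break runs shorter and, crucially, force every team to leave a left or right super-game having just played $HA$ or $AH$). Inside the unarranged-games blocks the no-repeat and bounded-by-$2$ properties come from the TTP-2 subroutine invoked in the two cases and from the explicit $(s_1 s_2)(\overline{s_1 s_2})$ pattern of the first part. At a junction between two consecutive time slots the opponents change because the super-games change; for the home/away runs one uses the orientation rule of the rotation scheme (every super-game edge reverses direction from one time slot to the next) together with the fact that the leftmost position always carries a left super-game, so that the tail of one block and the head of the next never combine into a run longer than $k$. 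Finally, at the junction between the last-time-slot super-games and the unarranged games, and between the first and second parts of the latter, every team is coming off an $HA$ or $AH$, so appending a schedule with at most two consecutive equal venues creates a run of length at most three, which is $\le k$ since $k\ge 3$. The hardest part is precisely this boundary analysis for bounded-by-$k$: one must track, for every team and every junction, the end of the preceding block against the start of the following one, treating left, normal, and right super-games and the two cases of the last time slot separately; everything else reduces to routine inspection of the explicitly given tables. The hypothesis $n\ge 8k^2$ enters only to guarantee $m=2\floor{\frac{n}{4k}}\geq 4k>2r$, so that each time slot genuinely contains one left, $m/2-r-1$ normal, and $r$ right super-games; under this hypothesis the construction goes through and produces a feasible schedule.
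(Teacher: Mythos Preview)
Your proposal is correct and follows essentially the same three-part structure as the paper's proof: verify completeness of the double round-robin, verify no-repeat, and verify bounded-by-$k$ block by block and across block boundaries. Two small points are worth flagging. First, you write that a right super-game produces the full bipartite digraph ``minus only the single game $x_{2k+1}\leftrightarrow y_{2k+1}$''; in fact the entire matching $p_{2k+1}\cup\overline{p_{2k+1}}$ (one game per team, not just the one between the two team-pair members) is deferred --- you use the correct description later, so this is just an internal slip. Second, your junction argument for bounded-by-$k$ leans on ``every super-game edge reverses direction from one time slot to the next,'' which is a statement about positional edges; the relevant consequence for a specific white super-team (which also rotates one position) is that it sees the \emph{same} direction across consecutive time slots until it reaches the leftmost position, where the left super-game's pattern $A^{k-1}H^kA^kH^kA$ (resp.\ $H^{k-1}A^kH^kA^kH$) absorbs the direction switch. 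The paper makes this explicit by listing, for each super-game type and direction, the exact $H$/$A$ string over the $4k$ days and checking every concatenation; your outline names the right ingredients but would need this tabulation to be airtight.
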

\begin{proof}
First, we show that our schedule is a complete double round-robin.

We use $(X,Y)$ to denote the normal games between normal teams in $X$ and normal teams in $Y$, and $(X)$ to denote the normal games between normal teams in $X$, where $X,Y\subseteq \{t_1,\dots,t_n\}$ and $X\cap Y=\emptyset$.
For each team-pair $R_{i'}$, it plays two right super-games with each super-team $U_i\in\{U_1,\dots,U_{m-1}\}$. Then, the normal games $(R_{i'},U_i)$ are arranged.
For super-teams $U_i$ and $U_j$, there is a normal/left/right super-game between them, and then the normal games $(U_i,U_j)$ are arranged (for the right super-game, the two days of unarranged normal games are designed in Case~2).
The normal games $(R_1\cup\cdots R_{r}\cup U_m)$
are arranged in Case~1. Hence, the schedule is complete.

Next, we show that the no-repeat property holds.

It is easy to see that the two normal games between any pair of normal teams are designed in a super-game or a TTP-2 double round-robin (see Case~1 and the second part in Case~2) or the first part in Case~2. By the design of super-games and the first part in Case~2, it is easy to see that there are no-repeat normal games. For a TTP-2 double round-robin, it is clear that the normal games satisfy the no-repeat property. So, the no-repeat property holds.

Then, we prove the bounded-by-$k$ property.

Recall that putting the unarranged normal games behind the super-games in the last time slot will not create more than three consecutive home/away normal games. Hence, we can assume that the schedule in the last time slot only contains the $m/2$ super-games.
For super-team $U_m$, it plays away and home left super-games alternatively.
For super-team $U_i$ ($i<m$), it always play away (resp., home) normal/right super-games until it plays the first away (resp., a home) left super-game. After playing the first away (resp., home) left super-game, it will always play home (resp., away) normal/left/right super-games. Note that a super-team can have at most two left super-games, and the second left super-game must appear in the last time slot.
For each team-pair $R_{i'}$, there are two normal teams $\{t_{n_S+2i'-1}, t_{n_S+2i'}\}$. In the right super-games, the normal team $t_{n_S+2i'-1}$ always belongs to the upper super-team, while the normal team $t_{n_S+2i'}$ always belongs to the lower super-team. Hence, both of them play home and away right super-games alternatively.
For the sake of presentation, we use $H^{l}$ (resp., $A^{l}$) to denote $l$-consecutive home (resp., away) normal games.
If a super-team plays an away (resp., a home) normal super-game, we know that the normal teams in it will play $A^{k}\cdot H^k\cdot A^k\cdot H^k$ (resp., $H^{k}\cdot A^k\cdot H^k\cdot A^k$).
If a super-team plays an away (resp., a home) left super-game, we know that the normal teams in it will play $A^{k-1}\cdot H^k\cdot A^k\cdot H^k\cdot A$ (resp., $H^{k-1}\cdot A^k\cdot H^k\cdot A^k\cdot H$).
If a super-team plays an away (resp., a home) right super-game, we know that the normal teams in it will play $(A^{k-2}\cdot H\cdot A\cdot H^{k-2} \cdot A \cdot H)\cdot (A^{k-2}\cdot H\cdot A\cdot H^{k-2} \cdot A \cdot H)$ (resp., $(H^{k-2}\cdot A\cdot H\cdot A^{k-2} \cdot H \cdot A)\cdot (H^{k-2}\cdot A\cdot H\cdot A^{k-2} \cdot H \cdot A)$). We can see that there are no more than $k$-consecutive home/away normal games.
\end{proof}

\subsubsection{Analyzing the quality of the construction}
Recall the three complete graphs $H=(S,E_H)$, $\G=(\V,\E)$, and $\H=(\U,\F)$, where $S=\{t_1,\dots,t_{n_S}\}$ representing the $n_S$ normal teams,
$\V=\{u_1,\dots,u_{2m}\}$ representing the $2m$ cycle-teams, and $\U=\{U_1,\dots,U_m\}$ representing the $m$ super-teams.

Given a $k$-cycle packing $\C_k=\{C_1,\dots,C_{2m}\}$, where the indies are arbitrarily labeled. The construction can generate a feasible solution, but it may not be good enough. Hence, to get a schedule with a small total traveling distance, we need to find a good way to label the cycles. To do this, we use the random method: cycles are labeled using $\{C_1,\dots,C_{2m}\}$ uniformly at random. Note that the $n_{\overline{S}}$ normal teams in $\overline{S}$ are still divided into $r$ team-pairs arbitrarily.
We will calculate the expected weight of our schedule.

Note that $\E$ contains $\frac{2m(2m-1)}{2}$ edges, where $\frac{2m(2m-1)}{2}\geq m^2$ since $m=2\floor{\frac{n}{4k}}>1$.
For an edge $(u_i,u_j)\in\E$, by (\ref{eqn_GH}), we have that
\begin{equation}\label{eq1}
\EE{w(u_i,u_j)}=\frac{2}{2m(2m-1)}w(\E)\leq \frac{1}{m^2}w(\E)=\frac{1}{2m^2}\Delta_H.
\end{equation}
Recall that $w(U_i,U_j)=\sum_{u_{i'}\in U_i\&u_{j'}\in U_j}w(u_{i'},u_{j'})$ for $i\neq j$. For an edge $(U_i,U_j)\in \F$, we have that
\begin{equation}\label{eq2}
\EE{w(U_i,U_j)}=\sum_{u_{i'}\in U_i\&u_{j'}\in U_j}\EE{w(u_{i'},u_{j'})}\leq\sum_{u_{i'}\in U_i\&u_{j'}\in U_j}\frac{1}{2m^2}\Delta_H=\frac{2}{m^2}\Delta_H.
\end{equation}

\begin{lemma}\label{extra}
Assume that there is a normal cycle-game between cycle-teams $u_i$ and $u_j$. There is an $O(k^2)$-time algorithm ensuring that the total weight of all normal games in the cycle game is at most $(4/k)w(u_i,u_j)+(k-1)(w(C_i)+w(C_j))$.
\end{lemma}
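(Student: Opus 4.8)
The plan is to track, for the $2k$ normal teams $x_1,\dots,x_k$ of $u_i$ and $y_1,\dots,y_k$ of $u_j$, the travel incurred over the $2k$ days of the normal cycle-game, charging each team's block as if it departs from and returns to its own home (summed over all cycle-games, this quantity dominates the actual travel by the triangle inequality, so it is the right thing to bound). Recall the block is $(p_1\cdots p_k)\cdot(\overline{p_1\cdots p_k})$ with $p_l=\{x_i\rightarrow y_{((i+l-2)\bmod k)+1}\}_{i=1}^{k}$. During the first $k$ days every $x_i$ is on the road and visits $y_i,y_{i+1},\dots,y_{i-1}$ in cyclic order along $C_j=(y_1,\dots,y_k,y_1)$, so its internal travel equals $w(C_j)-w(y_{i-1},y_i)$, and it pays in addition $w(x_i,y_i)$ to leave home and $w(y_{i-1},x_i)$ to come back; during the last $k$ days it stays home. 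Symmetrically, $y_j$ stays home during the first $k$ days and, during the last $k$ days, visits $x_j,x_{j-1},\dots,x_{j+1}$ in cyclic order along $C_i$, with internal travel $w(C_i)-w(x_j,x_{j+1})$ and extra edges $w(y_j,x_j)$ and $w(x_{j+1},y_j)$.

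Next I would add these up. As $i$ runs over $1,\dots,k$, the omitted edge $(y_{i-1},y_i)$ runs over all $k$ edges of $C_j$, so the $x$-teams' internal travel sums to $k\,w(C_j)-w(C_j)=(k-1)w(C_j)$; likewise the $y$-teams contribute $(k-1)w(C_i)$ internally. Setting $D_s:=\sum_{i=1}^{k}w(x_i,y_{i+s})$ with the $y$-index read modulo $k$, the ``leave home'' and ``return home'' edges of both groups sum to $2(D_0+D_{-1})$. Hence the block costs exactly $(k-1)\bigl(w(C_i)+w(C_j)\bigr)+2(D_0+D_{-1})$.

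Finally I would use the only remaining degree of freedom: which vertex of the fixed cycle $C_j$ is named $y_1$. Re-indexing $u_j$ by a cyclic shift $t$ leaves $C_i$, $C_j$, and therefore both internal terms unchanged, while it replaces $2(D_0+D_{-1})$ by $2(D_t+D_{t-1})$. Since $\sum_{t=0}^{k-1}(D_t+D_{t-1})=2\sum_{s=0}^{k-1}D_s=2\,w(u_i,u_j)$, some shift achieves $2(D_t+D_{t-1})\le \frac{4}{k}w(u_i,u_j)$; the algorithm simply evaluates all $k$ shifts, each in $O(k)$ time, and keeps the best, for a total of $O(k^2)$ time and the bound $(4/k)w(u_i,u_j)+(k-1)(w(C_i)+w(C_j))$.

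The part I expect to require the most care is the bookkeeping: verifying that the visiting sequences really are the cyclic orders claimed, so that consecutive hops are genuine cycle edges, and that each cycle edge is the skipped one for exactly one team, which is what produces the $(k-1)$ rather than $k$ in front of $w(C_i)+w(C_j)$. The averaging step that yields the $4/k$ factor is then routine.
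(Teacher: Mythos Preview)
Your proposal is correct and follows essentially the same approach as the paper: decompose each road trip into cycle-internal hops plus the two home-edges, observe the internal hops sum to $(k-1)(w(C_i)+w(C_j))$ because each cycle edge is skipped by exactly one traveler, and average the home-edge cost over the $k$ cyclic relabelings of $C_j$ to get the $(4/k)w(u_i,u_j)$ term. Your bookkeeping is in fact a bit more explicit than the paper's (you separately track the $y$-teams' home-edges and see the factor $2$ in $2(D_0+D_{-1})$ directly), but the argument and the $O(k^2)$ search over shifts are the same.
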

\begin{proof}
Assume w.l.o.g. that the normal cycle-game is at the home of $u_j$. By the design of normal cycle-games, all normal teams in $u_i$ (resp., $u_j$) play $A^k\cdot H^k$ (resp., $H^k\cdot A^k$). Hence, each normal team in $u_i\cup u_j$ has exactly one road trip visiting $k$ normal teams. Recall that $u_i=\{t_{k(i-1)+1}, \dots, t_{k(i-1)+k}\}$ and $u_j=\{t_{k(j-1)+1}, \dots, t_{k(j-1)+k}\}$. For the sake of presentation, we let $t_{k(i-1)+i'}=x_{i'}$ and $t_{k(j-1)+i'}=y_{i'}$, i.e., $u_i=\{x_1,\dots, x_k\}$ and $u_j=\{y_1,\dots, y_k\}$.
By the design of normal cycle-games, the weight of these $2k$ road trips are
\begin{align*}
\sum_{i'=1}^k(w(x_{i'},y_{i'})+w(x_{i'},y_{(i'+k-2)\bmod k+1}))+(k-1)(w(C_i)+w(C_j)).
\end{align*}

Before extending the normal cycle-game into normal games, we may change the labels of normal teams in the $k$-cycle $C_j$ by setting $y^l_{i'}=y_{(i'+l-2)\bmod k+1}$, where $l\in\{1,2,\dots,k\}$. Note that after extending the normal cycle-game, we can recover the previous labels of normal teams in $C_j$, which does not effect the feasibility of teams since all normal teams in $u_i$ (resp., $u_j$) play $A^k\cdot H^k$ (resp., $H^k\cdot A^k$). Note that
\begin{align*}
&\sum_{l=1}^{k}\sum_{i'=1}^k(w(x_{i'},y^l_{i'})+w(x_{i'},y^l_{(i'+k-2)\bmod k+1}))=2w(u_i,u_j).
\end{align*}
Hence, for all $k$ cases of $l$, there must be one case $l=l_0$ such that
\[
\sum_{i'=1}^k(w(x_{i'},y^{l_0}_{i'})+w(x_{i'},y^{l_0}_{(i'+k-2)\bmod k+1}))\leq (4/k)w(u_i,u_j).
\]
The best label of the $k$ cases can be found in $O(k^2)$ time.
\end{proof}

Note that the algorithm in Lemma~\ref{extra} runs in constant time if $k=O(1)$.

Next, we are ready to analyze the total weight of schedule.
For the sake of analysis, we make the following assumption.
\begin{assumption}\label{AS1}
We assume that all normal teams return home before and after every normal game in left/right super-games and the unarranged normal games in the last time slot.
\end{assumption}

We will consider the following five parts of our schedule:
\begin{itemize}
\item $W_a$: the weight of normal games involving normal teams in $R_1\cup\cdots R_r$;
\item $W_b$: the weight of normal games between normal teams in each super-team of $\{U_1,\dots,U_m\}$;
\item $W_c$: the weight of normal games in left super-games;
\item $W_d$: the weight of the rest normal games in right super-games with the unarranged normal games in each right super-game;
\item $W_e$: the weight of normal games in normal super-games.
\end{itemize}

\textbf{Case~1: $W_a$.} The normal games appear in right super-games and Case~1 of the unarranged normal games. Recall that $R_1\cup\cdots R_r=\overline{S}$. By Assumption~\ref{AS1}, the weight of normal games involving the normal team $v\in \overline{S}$ is $4\deg_G(v)$. Thus, we have that
\begin{equation}\label{W_a}
\EE{W_a}=W_a\leq \sum_{v\in \overline{S}}4\deg_G(v)\leq (16k/n)\Delta_G\leq (8k^2/n)\psi,
\end{equation}
where the second inequality follows from (\ref{degree+}) and the last follows from Lemma \ref{lb-delta+}.

\textbf{Case~2: $W_b$.} The normal games appear in the second part of Case~2 of the unarranged normal games. By Assumption~\ref{AS1}, the weight of normal games between normal teams in super-team $U_i\in\U$ is $4w(U_i)$. Recall that $U_i=u_{2i-1}\cup u_{2i}$.
By the definition, $w(U_i)=w(u_{2i-1})+w(u_{2i})+w(u_{2i-1},u_{2i})$. By the triangle inequality, we have $w(u_{2i-1})\leq w(u_{2i-1},u_{2i})$ and $w(u_{2i})\leq w(u_{2i-1},u_{2i})$. Hence, we have $w(U_i)\leq 3w(u_{2i-1},u_{2i})$. Thus, we have that
\begin{equation}\label{W_b}
\EE{W_b}=\sum_{U_i\in\U}4\EE{w(U_i)}\leq\sum_{U_i\in\U}12\EE{w(u_{2i-1},u_{2i})}\leq (6/m)\Delta_H\leq (12k^2/n)\psi,
\end{equation}
where the third inequality follows from (\ref{eq1}), and the last follows from $m=2\floor{\frac{n}{4k}}>2(\frac{n}{4k}-1)\geq \frac{n}{4k}$ (since $n\geq 8k^2$) and Lemma \ref{lb-delta}.

\textbf{Case~3: $W_c$.} The normal games appear in left super-games. Assume that there is a left super-game between super-teams $U_i$ and $U_j$. By Assumption~\ref{AS1}, the weight of normal games in the left super-game is $4w(U_i,U_j)$. By (\ref{eq2}), the expected weight is $4w(U_i,U_j)=(8/m^2)\Delta_H$. There are $(m-2)+(m/2-r)\leq (3/2)m$ left super-games in our schedule. Thus, we have that
\begin{equation}\label{W_c}
\EE{W_c}\leq (12/m)\Delta_H=(24k^2/n)\psi.
\end{equation}

\textbf{Case~4: $W_d$.} The normal games appear in right super-games and the first part of Case~1. There should be a team-pair. Note that the weight of normal games related to the team-pair has been analyzed in $W_a$. We only consider the rest normal games. Assume that there is a right super-game between super-teams $U_i$ and $U_j$. By Assumption~\ref{AS1}, the weight of the normal games in the right super-game with the unarranged normal games is $4w(U_i,U_j)$. The expected weight is $(8/m^2)\Delta_H$.
There are $(m-1)r\leq mr\leq 2mk$ right super-games in our schedule since $r<2k$ by our setting. Thus, we have that
\begin{equation}\label{W_d}
\EE{W_d}\leq (16k/m)\Delta_H=(32k^3/n)\psi.
\end{equation}

\textbf{Case~5: $W_e$.}
Assume that there is a normal super-game between super-teams $U_{i}$ and $U_{j}$.
There are four normal cycle-games. By Lemma~\ref{extra}, the total weight of these four normal cycle-games are $\sum_{u_{i'}\in U_{i}\& u_{j'}\in U_{j}}(\frac{4}{k}w(u_{i'},u_{j'}))+(k-1)(w(C_{i'})+w(C_{j'}))$.
It is easy to see that $\EE{w(C_{i'})}=\EE{w(C_{j'})}=\frac{1}{2m}w(\C_k)$.
By (\ref{eq1}) and $(\ref{eqn_GH})$, we have that $\EE{w(u_i,u_j)}=\frac{2}{2m(2m-1)}w(\E)\leq\frac{1}{2m(2m-1)}\Delta_H\leq\frac{1}{4m(m-1)}\Delta_H$.
Hence, the expected weight of the normal super-game is $\frac{4}{m(m-1)k}\Delta_H+\frac{4(k-1)}{m}w(\C_k)$.
Note that there are at most $m(m-1)/2$ normal super-games since there are $m$ super-teams and there is exactly one super-game between each pair super-team.
The total expected weight of all normal super-games are $\frac{2}{k}\Delta_H+2(k-1)(m-1)w(\C_k)\leq \frac{2}{k}\Delta_H+2(k-1)mw(\C_k)$.
Since $n_S=2mk$, we have that
\begin{equation}\label{W_e}
\EE{W_e}\leq (2/k)\Delta_H+(1-1/k)n_Sw(\C_k).
\end{equation}

Thus, the expected weight of our schedule is bounded by
\[
(2/k)\Delta_H+(1-1/k)n_Sw(\C_k)+(44k^2/n)\psi+(32k^3/n)\psi.
\]
Recall that our schedule requires $n\geq 8k^2$. If $n\leq 8k^2=O(1)$, we can find an optimal solution by brute force, which takes constant time. Hence, we can get the following theorem.
\begin{theorem}\label{cycle-packing}
Let $\C_k$ be a $k$-cycle packing in graph $H$. For TTP-$k$ with any constant $k\geq3$, there is a randomized polynomial-time algorithm that can generate a solution with a total weight of at most $\frac{k-1}{k}n_S w(\C_k)+\frac{2}{k}\Delta_H+O(k^3/n)\psi$.
\end{theorem}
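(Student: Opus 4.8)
The plan is to instantiate the $k$-cycle packing construction of Section~\ref{B.2} on the given packing $\C_k=\{C_1,\dots,C_{2m}\}$, but with the $2m$ cycles assigned the labels $C_1,\dots,C_{2m}$ by a \emph{uniformly random} permutation (the $n_{\overline{S}}$ teams in $\overline{S}$ may be split into the $r$ team-pairs arbitrarily). That the resulting double round-robin is feasible for TTP-$k$ is exactly Theorem~\ref{feasible1}, which applies because $n\geq 8k^2$; so the remaining task is purely quantitative: bound the \emph{expected} total traveling distance, which is what a randomized algorithm is asked to guarantee. If instead $n\leq 8k^2$, then since $k$ is constant we may simply compute an optimal schedule by brute force in $O(1)$ time, so the algorithm is defined for every $n$.

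For the expectation bound I would first appeal to Assumption~\ref{AS1}: forcing every team to return home immediately before and after each normal game in a left/right super-game and each unarranged game of the last time slot can only increase the total distance, by the triangle inequality, so it suffices to bound the weight of this (possibly worse) itinerary. Under the assumption the schedule's distance decomposes into the five disjoint pieces $W_a,\dots,W_e$ defined just above the statement, and I would bound their expectations individually, reusing the estimates already set up: $\EE{W_a}=O(k^2/n)\psi$ from (\ref{degree+}) and Lemma~\ref{lb-delta+}; $\EE{W_b}=O(k^2/n)\psi$ from the triangle inequality $w(U_i)\leq 3w(u_{2i-1},u_{2i})$, (\ref{eq1}) and Lemma~\ref{lb-delta}; $\EE{W_c}=O(k^2/n)\psi$ and $\EE{W_d}=O(k^3/n)\psi$ from (\ref{eq2}) together with the counts of left/right super-games and $r<2k$; and, crucially, $\EE{W_e}\leq \frac{2}{k}\Delta_H+\frac{k-1}{k}n_Sw(\C_k)$, obtained by applying Lemma~\ref{extra} to each of the four normal cycle-games inside a normal super-game, using $\EE{w(C_i)}=\frac{1}{2m}w(\C_k)$, the bounds (\ref{eq1}) and (\ref{eqn_GH}) on $\EE{w(u_i,u_j)}$, and the fact that there are at most $m(m-1)/2$ normal super-games. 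Summing the five bounds and using $1-1/k=\frac{k-1}{k}$ yields expected weight at most $\frac{2}{k}\Delta_H+\frac{k-1}{k}n_Sw(\C_k)+O(k^3/n)\psi$, which is the claimed inequality (the $O(k^3/n)$ term absorbing the $O(k^2/n)$ contributions).

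The main obstacle --- already handled by the development preceding the statement --- is the analysis of $\EE{W_e}$, and within it two points are delicate. First, the random labeling is what makes each edge $(u_i,u_j)$ of $\G$ carry only an expected $\frac{1}{m^2}$-fraction of $w(\E)\leq\frac{1}{2}\Delta_H$, which is essential to keep the $\Delta_H$-coefficient at $\frac{2}{k}$ rather than something larger. Second, the cost of the $(k+1)$-cycle road trips produced inside each normal cycle-game must be pinned to exactly $(k-1)(w(C_i)+w(C_j))$ plus a cross term, and that cross term must be driven down to $\frac{4}{k}w(u_i,u_j)$ by the internal relabeling trick of Lemma~\ref{extra}; it is precisely this that brings the coefficient of $n_Sw(\C_k)$ down to $\frac{k-1}{k}$, since the $w(\C_k)$-contributions of the at most $m(m-1)/2$ normal super-games sum to $2(k-1)(m-1)w(\C_k)\leq \frac{k-1}{k}n_Sw(\C_k)$ using $n_S=2mk$. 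Everything else is routine summation and the $n\leq 8k^2$ escape hatch.
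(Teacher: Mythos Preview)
Your proposal is correct and follows essentially the same approach as the paper: random labeling of the cycles, Assumption~\ref{AS1} to upper-bound the cost, the five-part decomposition $W_a,\dots,W_e$ with the same individual estimates (via (\ref{degree+}), (\ref{eq1}), (\ref{eq2}), Lemma~\ref{extra}, and the counts of left/right/normal super-games), and the brute-force escape for $n\le 8k^2$. The only addition is that you explicitly justify Assumption~\ref{AS1} by the triangle inequality, which the paper leaves implicit.
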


Given a $k$-path packing $\P_k$ in graph, we can obtain a $k$-cycle packing $\C_k$ by completing the $k$-paths in the packing, such that $w(\C_k)\leq 2w(\P_k)$ by the triangle inequality. By Theorem~\ref{cycle-packing}, we can use this packing $\C_k$ to get a solution of TTP-$k$. Hence, we can get the following corollary.
\begin{corollary}\label{path-packing}
Let $\P_k$ be a $k$-path packing in graph $H$. For TTP-$k$ with any constant $k\geq3$, there is a randomized polynomial-time algorithm that can generate a solution with a total weight of at most $\frac{2k-2}{k}n_S w(\P_k)+\frac{2}{k}\Delta_H+O(k^3/n)\psi$.
\end{corollary}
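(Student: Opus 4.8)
The plan is to reduce Corollary~\ref{path-packing} directly to Theorem~\ref{cycle-packing} by converting the given $k$-path packing into a $k$-cycle packing of comparable weight. Concretely, I would start from $\P_k$ and \emph{complete} it in the sense defined in Section~\ref{sec_pre}: for every $k$-path $(v_1,\dots,v_k)$ appearing in $\P_k$, add the edge $(v_1,v_k)$ to close it into the $k$-cycle $(v_1,\dots,v_k,v_1)$. Since $\P_k$ already covers all vertices of $H$ and its components are vertex-disjoint $k$-paths, the resulting collection $\C_k$ is a genuine $k$-cycle packing of $H$. This construction is purely combinatorial and runs in linear time, and $k$ is constant, so nothing here affects polynomiality.

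Next I would bound $w(\C_k)$ in terms of $w(\P_k)$. Fix a path $P=(v_1,\dots,v_k)$ in $\P_k$ together with the cycle $C$ obtained by completing it. By the triangle inequality applied along $P$, $w(v_1,v_k)\le \sum_{i=1}^{k-1}w(v_i,v_{i+1})=w(P)$, so $w(C)=w(P)+w(v_1,v_k)\le 2w(P)$. Summing over all $k$-paths in the packing gives $w(\C_k)\le 2w(\P_k)$. Then I would invoke Theorem~\ref{cycle-packing} with this $\C_k$: its randomized polynomial-time algorithm produces, for TTP-$k$ with any constant $k\ge 3$, a feasible schedule of total weight at most $\frac{k-1}{k}n_S w(\C_k)+\frac{2}{k}\Delta_H+O(k^3/n)\psi$. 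Substituting $w(\C_k)\le 2w(\P_k)$ and using $\frac{k-1}{k}\cdot 2=\frac{2k-2}{k}$ yields the claimed bound $\frac{2k-2}{k}n_S w(\P_k)+\frac{2}{k}\Delta_H+O(k^3/n)\psi$.

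There is essentially no hard step: the only points to verify are that completing a path packing indeed gives a valid $k$-cycle packing covering all of $H$ (already recorded in the preliminaries) and that the factor-two loss from the triangle inequality lands exactly on the $n_S w(\P_k)$ term rather than on $\frac{2}{k}\Delta_H$ or the error term, which is immediate from how $w(\C_k)$ enters Theorem~\ref{cycle-packing}. The ``obstacle,'' such as it is, is merely bookkeeping; the substance of the result is carried entirely by Theorem~\ref{cycle-packing}.
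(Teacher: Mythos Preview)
Your proposal is correct and matches the paper's own argument exactly: the paper also completes $\P_k$ into a $k$-cycle packing $\C_k$, notes $w(\C_k)\le 2w(\P_k)$ by the triangle inequality, and then invokes Theorem~\ref{cycle-packing}.
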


Our algorithm is randomized since we label the $2m$ cycle-teams by $\{C_1,\dots,C_{2m}\}$ randomly.
It can be derandomized efficiently by the classic method of conditional expectations \cite{motwani1995randomized}.

Next, we give a brief introduction to the derandomization.

According to the previous analysis, it is easy to see that there is an upper bound of the weight of our schedule, denoted by $W$, such that it is a linear function on variables $w(u_i,u_j)$ and $w(C_i)$ for $1\leq i,j\leq 2m$.
In Theorem~\ref{cycle-packing}, we have proved that $\EE{W}\leq\frac{k-1}{k}n_S w(\C_k)+\frac{2}{k}\Delta_H+O(k^3/n)\psi$.
Recall that the algorithm takes a $k$-cycle packing $\C_k$ as a part of the input, where $\size{\C_k}=2m$.
Let the $2m$ cycle teams be $\{C^*_1,\dots,C^*_{2m}\}$.
The main idea is to find a permutation $\sigma: (1,2,\dots,2m)\leftrightarrow (\sigma_1, \sigma_2,\dots,\sigma_{2m})$ such that
\[
\EE{W|C_1=C^*_{\sigma_1},\dots,C_{2m}=C^*_{\sigma_{2m}}}\leq \EE{W}.
\]
We can determine each $\sigma_i$ sequentially. Assume that we have determined $(\sigma_1, \dots,\sigma_{s-1})$ such that
\[
\EE{W|C_1=C^*_{\sigma_1},\dots,C_{s-1}=C^*_{\sigma_{s-1}}}\leq \EE{W}.
\]
To determine $\sigma_{s}$, we can simply let $\sigma_s$ be
\begin{equation}
\sigma_s=\arg\min_{\sigma_s\in \{1,2,\dots,2m\}\setminus\{\sigma_1, \sigma_2,\dots,\sigma_{s-1}\}} \EE{W|C_1=C^*_{\sigma_1},\dots,C_s=C^*_{\sigma_s}}\leq \EE{W}.
\end{equation}
Then, we can get
\begin{equation}
\EE{W|C_1=C^*_{\sigma_1},\dots,C_s=C^*_{\sigma_s}}\leq \EE{W|C_1=C^*_{\sigma_1},\dots,C_{s-1}=C^*_{\sigma_{s-1}}}\leq \EE{W}.
\end{equation}
Therefore, we can repeat this procedure to determine the permutation $\sigma$.

Now, we consider the running time.
We need to determine $2m$ labels respectively. For each label, we need to compute the expected conditional weight of $W$ at most $2m$ times. Hence, we need to compute the expected conditional weight of $W$ at most $4m^2=O(n^2)$ times.
To compute the expected conditional weight of $W$, we need to first compute the expected conditional weight of each variable.
There are $\frac{2m(2m-1)}{2}+2m=2m^2+m=O(n^2)$ variables.
It is easy to see that the expected conditional weight of each variable can be computed in $O(n^2)$ time, and hence the expected conditional weight of $W$ can be computed in $O(n^4)$ time.
Therefore, the derandomization takes $O(n^6)$ extra time.

Next, we are ready to analyze the approximation qualities of these two constructions.
We will consider TTP-3, TTP-4, TTP-$k$ with $k\geq 5$, and LDTTP-$k$ with $k\geq 3$, respectively.

\section{The Approximation Ratio of TTP-$k$}\label{C}
\subsection{TTP-3}\label{C.1}
\subsubsection{The Approximation Quality of the Hamiltonian Cycle Construction}
Note that the Hamiltonian cycle they used is generated by the Christofides-Serdyukov algorithm. In our algorithm, we also consider another Hamiltonian cycle that uses the minimum weight perfect matching.
For TTP-3, we will select the better one between these two cycles.

\begin{lemma}\label{Second_Hamilton}
Let $M_H$ be a perfect matching in graph $H$. There is a polynomial-time algorithm that can generate a Hamiltonian cycle $C_H$ in $H$ with a weight of at most $ w(M_H)+\frac{1}{2n_S}\Delta_H+O(1/n^2_S)\Delta_H$.
\end{lemma}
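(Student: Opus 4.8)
The plan is to build the Hamiltonian cycle so that it literally contains every edge of $M_H$, and to control only the \emph{extra} edges used to link the matching edges into a single cycle. Write $M_H=\{e_1,\dots,e_p\}$ with $p=n_S/2$ and $e_i=\{a_i,b_i\}$. For any permutation $\pi$ of $\{1,\dots,p\}$ and any choice, for each $i$, of which endpoint of $e_i$ is its ``head'' and which is its ``tail'', consider the closed walk alternating the matching edges $e_{\pi(1)},\dots,e_{\pi(p)}$ with the ``jump'' edges joining the head of $e_{\pi(i)}$ to the tail of $e_{\pi(i+1)}$ (indices mod $p$). Since $H$ is complete, this is a genuine Hamiltonian cycle $C_H$ of $H$, and $w(C_H)=w(M_H)+(\text{sum of the }p\text{ jump-edge weights})$. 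So it suffices to choose $\pi$ and the orientations so that the jump edges are cheap on average.

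First I would analyze a randomized choice: draw $\pi$ uniformly at random and, independently, orient each $e_i$ by a fair coin. Fix a slot $i$. In a uniform random permutation the ordered pair $(e_{\pi(i)},e_{\pi(i+1)})$ is uniform over the $p(p-1)$ ordered pairs of distinct matching edges; conditioned on this pair being $(\{a,b\},\{c,d\})$, the jump edge of slot $i$ is one of $ac,ad,bc,bd$, each with probability $1/4$ because of the two orientation coins. Hence the jump edge of slot $i$ is \emph{uniformly distributed over} $E_H\setminus M_H$, a set of $\binom{n_S}{2}-p=2p(p-1)$ edges, so its expected weight is $\bigl(w(E_H)-w(M_H)\bigr)/(2p(p-1))$. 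Summing over the $p$ slots, the expected total jump weight is $\bigl(w(E_H)-w(M_H)\bigr)/(2(p-1))\le w(E_H)/(n_S-2)=\Delta_H/(2(n_S-2))$, using $w(E_H)=\tfrac12\Delta_H$ and $p=n_S/2$. Thus $\EEE{w(C_H)}\le w(M_H)+\Delta_H/(2(n_S-2))$, and since $\frac{1}{2(n_S-2)}=\frac{1}{2n_S}+\frac{1}{n_S(n_S-2)}=\frac{1}{2n_S}+O(1/n_S^2)$, this is exactly the claimed bound.

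To make the algorithm deterministic and polynomial-time I would derandomize by the method of conditional expectations~\cite{motwani1995randomized}, exactly as in Section~\ref{B.2}: decide the $2p$ choices one at a time --- first which matching edge sits in slot $1$, then its orientation, then slot $2$, and so on --- always picking the option that does not increase the conditional expectation of $w(C_H)$; there are $O(n_S)$ rounds, each requiring $O(n_S)$ evaluations of a conditional expectation that is itself a sum of $O(n_S^2)$ edge expectations, which is clearly polynomial. I expect the only point needing genuine care to be the distributional claim in the second paragraph, namely that over a uniform permutation together with independent fair orientations the jump edge of each slot is \emph{exactly} uniform on $E_H\setminus M_H$; the rest is bookkeeping with the average edge weight $w(E_H)/\binom{n_S}{2}$ and the elementary estimate $\frac{1}{n_S(n_S-2)}=O(1/n_S^2)$.
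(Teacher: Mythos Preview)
Your proposal is correct and follows essentially the same approach as the paper: build the Hamiltonian cycle by stringing together the matching edges in a random order, bound the expected weight of the $p=n_S/2$ ``jump'' edges by linearity of expectation, and derandomize via conditional expectations. Your version is in fact a bit more careful than the paper's, since you make explicit the independent fair-coin orientation of each matching edge needed to ensure each jump edge is exactly uniform on $E_H\setminus M_H$; the paper's proof asserts this uniformity from ``list the edges of $M_H$ in a random way'' without spelling out the orientation randomness.
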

\begin{proof}
The perfect matching $M_H$ contains $n_S/2$ edges. We can list the edges of $M_H$ in a random way, and connect them from first to last to form a Hamiltonian cycle $C_H$ in graph $H$, respectively.
For example, suppose $n_S=4$, the listed two edges (in a random way) are denoted by $(v_1, v_2)$ and $(v_3, v_4)$, and then we can construct a Hamiltonian cycle $(v_1,v_2,v_3,v_4,v_1)$.

Recall that the edge set of graph $H$ is denoted by $E_H$.
There are $n_S/2$ newly added edges, belonging to the edge set $E_H\setminus M_H$. Every edge in the set is selected with the same probability $\frac{1}{\size{E_H\setminus M_H}}=\frac{2}{n_S(n_S-2)}$. The expected weight of newly added edges is $(\frac{n_S}{2})\cdot\frac{2}{n_S(n_S-2)}w(E_H\setminus M_H)\leq \frac{1}{2(n_S-2)}\Delta_H=\frac{1}{2n_S}\Delta_H+\frac{1}{n_S(n_S-2)}\Delta_H\leq\frac{1}{2n_S}\Delta_H+\frac{2}{n^2_S}\Delta_H$, where the first inequality follows from $w(E_H\setminus M_H)\leq w(E_H)=\frac{1}{2}\Delta_H$ by (\ref{eqn_GH}), and the second follows from $n_S=n-n_{\overline{S}}\geq 8k^2-4k>4$. Hence, the Hamiltonian cycle $C_H$ has an expected weight of at most
\[
w(M_H)+\frac{1}{2n_S}\Delta_H+\frac{2}{n^2_S}\Delta_H=w(M_H)+\frac{1}{2n_S}\Delta_H+O(1/n^2_S)\Delta_H.
\]

The randomized algorithm runs in $O(n^3)$ time.
Again, it can be derandomized efficiently by the method of conditional expectations \cite{motwani1995randomized}.
\end{proof}

Note that Lemma \ref{Second_Hamilton} holds for any perfect matching. For TTP-3, we consider the Hamiltonian cycle using a minimum weight perfect matching. By making a trade-off between the two Hamiltonian cycles, we can get the following theorem.

\begin{theorem}\label{first_approach}
For TTP-3 with any constant $\epsilon>0$, there is a polynomial-time algorithm with an approximation ratio of $\min\{\frac{4}{3}+\frac{1}{3}\alpha\gamma,\   1-\frac{1}{6}\gamma+\alpha\gamma\}+\epsilon$.
\end{theorem}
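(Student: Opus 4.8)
\emph{Proof plan.} The plan is to run the Hamiltonian cycle construction of Lemma~\ref{lb-distance+}, specialised to $k=3$ (so its guarantee reads $\tfrac{2}{3}n_S w(C_H)+\tfrac{2}{3}\Delta_H+O(1/n)\psi$), on two different Hamiltonian cycles of $H$ and output the cheaper of the two resulting schedules. The first cycle $C^{(1)}_H$ is produced by the Christofides--Serdyukov algorithm; the second cycle $C^{(2)}_H$ is produced from a minimum weight perfect matching $M_H$ of $H$ via Lemma~\ref{Second_Hamilton}. Before anything else I would check that both cycles meet the hypothesis $n_S w(C_H)\le O(1)\psi$ of Lemma~\ref{lb-distance+}: for $C^{(1)}_H$ this follows from $n_S\mbox{MST}(H)\le n_S w(C^*_H)\le\psi$ (Lemma~\ref{lb-tsp+}) together with Lemma~\ref{lb-chris}; for $C^{(2)}_H$ it follows from $n_S w(M_H)\le\tfrac12\psi$ (Lemma~\ref{lb-matching1}) and $\Delta_H\le\tfrac32\psi$ (the remark after Lemma~\ref{lb-delta}).

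For $C^{(1)}_H$, I would substitute $w(C^{(1)}_H)\le\mbox{MST}(H)+\tfrac12 w(C^*_H)$ (Lemma~\ref{lb-chris}) into the construction guarantee, obtaining a bound of the shape $\tfrac23 n_S\mbox{MST}(H)+\tfrac13 n_S w(C^*_H)+\tfrac23\Delta_H+O(1/n)\psi$, and then plug in the three independent lower bounds $n_S w(C^*_H)\le\psi$ (Lemma~\ref{lb-tsp+}), $n_S\mbox{MST}(H)\le(1+\tfrac14\gamma)\psi-\tfrac12\Delta_H$ (inequality~(\ref{lb-delta_tau}) with $k=3$), and finally $\Delta_H\le(1-\tfrac12\gamma+\alpha\gamma)\psi$ (Lemma~\ref{lb-delta} with $k=3$). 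The $\gamma$-terms cancel and the weight collapses to $(\tfrac43+\tfrac13\alpha\gamma)\psi+O(1/n)\psi$.

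For $C^{(2)}_H$, I would use $n_S w(C^{(2)}_H)\le n_S w(M_H)+\tfrac12\Delta_H+O(1/n)\Delta_H$ (Lemma~\ref{Second_Hamilton}), so the construction guarantee becomes $\tfrac23 n_S w(M_H)+\Delta_H+O(1/n)\psi$; then I apply the TTP-3 matching bound $n_S w(M_H)\le(1+\alpha\gamma)\psi-\Delta_H$ (inequality~(\ref{lb-delta_matching})), followed once more by $\Delta_H\le(1-\tfrac12\gamma+\alpha\gamma)\psi$. This collapses to $(1-\tfrac16\gamma+\alpha\gamma)\psi+O(1/n)\psi$.

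Taking the smaller of the two schedules and using $\psi\le\OPT$, the output weight is at most $\left(\min\{\tfrac43+\tfrac13\alpha\gamma,\ 1-\tfrac16\gamma+\alpha\gamma\}+O(1/n)\right)\OPT$; since $n\ge 8k^2$ (and instances with $n$ below any fixed threshold can be solved exactly by brute force in constant time) the $O(1/n)$ slack is at most $\epsilon$, which yields the claimed ratio. There is no genuine structural difficulty here --- everything reduces to linear arithmetic on the pre-established lower bounds --- so the only point needing real care is the bookkeeping of the $O(1/n)$, $O(1/n_S)$ and $O(1/n^2)$ error terms, and the verification that the precondition $n_S w(C_H)\le O(1)\psi$ holds for each of the two cycles, so that Lemma~\ref{lb-distance+} may legitimately be invoked in both cases.
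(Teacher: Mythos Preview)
Your proposal is correct and follows essentially the same approach as the paper: run the Hamiltonian cycle construction of Lemma~\ref{lb-distance+} on the Christofides--Serdyukov cycle and on the cycle built from a minimum perfect matching via Lemma~\ref{Second_Hamilton}, bound each using the same combination of lower bounds (Lemmas~\ref{lb-tsp+}, \ref{lb-chris}, \ref{lb-delta}, and inequalities~(\ref{lb-delta_tau}), (\ref{lb-delta_matching})), and take the minimum. The only cosmetic differences are that you verify the precondition $n_Sw(C_H)\le O(1)\psi$ for the matching-based cycle via Lemma~\ref{lb-matching1} rather than Lemma~\ref{lb-matching}, and your $\epsilon$-handling at the end is phrased slightly more loosely (the threshold for brute force must depend on $\epsilon$, not just on $8k^2$), but neither affects correctness.
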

\begin{proof}
Here we use $C_H$ to denote a minimum weight Hamiltonian cycle in graph $H$.
If we use the Hamiltonian cycle $C'_H$ obtained by the Christofides-Serdyukov algorithm, we can construct a feasible schedule with a total weight of at most $\frac{2}{3}n_Sw(C'_H)+\frac{2}{3}\Delta_H+O(1/n)\psi$ by Lemma~\ref{lb-distance+}, since $n_Sw(C'_H)\leq \frac{3}{2}n_Sw(C_H)\leq\frac{3}{2}\psi= O(1)\psi$ (recall that the Christofides-Serdyukov algorithm is a $3/2$-approximation algorithm for TSP).
Then, we have that
\begin{align*}
&\frac{2}{3}n_Sw(C'_H)+\frac{2}{3}\Delta_H+O(1/n)\psi\\
&\leq\frac{2}{3}n_S\lrA{\mbox{MST}(H)+\frac{1}{2}w(C_H)}+\frac{2}{3}\Delta_H+O(1/n)\psi\\
&=\frac{2}{3}\lrA{n_S\mbox{MST}(H)+\frac{1}{2}\Delta_H}+\frac{1}{3}\Delta_H+\frac{1}{3}n_Sw(C_H)+O(1/n)\psi\\
&\leq\frac{2}{3}\lrA{1+\frac{1}{4}\gamma}\psi+\frac{1}{3}\lrA{1-\frac{1}{2}\gamma+\alpha\gamma}\psi+\frac{1}{3}\psi+O(1/n)\psi\\
&=\lrA{\frac{4}{3}+\frac{1}{3}\alpha\gamma}\psi+O(1/n)\psi,
\end{align*}
where the first inequality follows from Lemma~\ref{lb-chris}, and the second follows from Lemmas \ref{lb-tsp+}, \ref{lb-delta}, and (\ref{lb-delta_tau}).

Similarly, if we use a minimum weight perfect matching $M_H$ in graph $H$ to obtain the Hamiltonian cycle $C''_H$ in Lemma~\ref{Second_Hamilton}, we can construct a feasible schedule with a total weight of at most $\frac{2}{3}n_Sw(C''_H)+\frac{2}{3}\Delta_H+O(1/n)\psi$, since $n_Sw(C''_H)\leq
n_S(w(M_H)+\frac{1}{2n_S}\Delta_H+O(1/n^2_S)\Delta_H)=n_Sw(M_H)+O(1)\Delta_H= O(1)\psi$ by Lemmas \ref{lb-delta}, \ref{lb-matching} and \ref{Second_Hamilton}. Then, we have that
\begin{align*}
&\frac{2}{3}nw(C''_H)+\frac{2}{3}\Delta_H+O(1/n)\psi\\
&\leq\frac{2}{3}n_S\lrA{w(M_H)+\frac{1}{2n_S}\Delta_H+O(1/n^2_S)\Delta_H}+\frac{2}{3}\Delta_H+O(1/n)\psi\\
&\leq\frac{2}{3}(n_Sw(M)+\Delta_H)+\frac{1}{3}\Delta_H+O(1/n)\psi\\
&\leq\frac{2}{3}(1+\alpha\gamma)\psi+\frac{1}{3}\lrA{1-\frac{1}{2}\gamma+\alpha\gamma}\psi+O(1/n)\psi\\
&=\lrA{1-\frac{1}{6}\gamma+\alpha\gamma}\psi+O(1/n)\psi,
\end{align*}
where the first inequality follows from Lemma~\ref{Second_Hamilton}, the second follows from Lemma~\ref{lb-delta+}, and the last follows from Lemma \ref{lb-delta} and (\ref{lb-delta_matching}).

Since we select the better one between these two Hamiltonian cycles, the approximation ratio is $\min\{\frac{4}{3}+\frac{1}{3}\alpha\gamma,\  1-\frac{1}{6}\gamma+\alpha\gamma\}+O(1/n)$.
Hence, there exists a constant $c$ such that the ratio is bounded by $\min\{\frac{4}{3}+\frac{1}{3}\alpha\gamma,\  1-\frac{1}{6}\gamma+\alpha\gamma\}+c/n$. For any constant $\epsilon>0$, if $n\leq c/\epsilon=O(1)$, we can find an optimal solution by brute force, otherwise we use the approximation algorithm. This establishes the approximation ratio of $\min\{\frac{4}{3}+\frac{1}{3}\alpha\gamma,\  1-\frac{1}{6}\gamma+\alpha\gamma\}+\epsilon$.
\end{proof}

Note that $\max_{0\leq \alpha,\gamma\leq1}\min\{\frac{4}{3}+\frac{1}{3}\alpha\gamma,\  1-\frac{1}{6}\gamma+\alpha\gamma\}+\epsilon$ is maximized when $\alpha=\gamma=1$ with value $(5/3+\epsilon)$. However, this would require $0=(\frac{2}{3}+\frac{1}{3}\gamma-\alpha\gamma)\psi\geq nw(\P^*_3)\geq \frac{1}{2}nw(\C^*_3)$ by Lemma~\ref{lb-packing}. If we use any constant ratio approximation algorithm for minimum weight 3-path packing or 3-cycle packing, we may get a much better schedule based on them. Therefore, we will show that we can do better than $(5/3+\epsilon)$ by combining the Hamiltonian cycle construction with the 3-cycle packing construction shown next.

\subsubsection{The Approximation Quality of the 3-Cycle Packing Construction}
\begin{theorem}\label{second_approach}
For TTP-3 with any constant $\epsilon>0$, if there exist $\rho_c$ and $\rho_p$ approximation algorithms for minimum $3$-cycle and 3-path packing problems respectively, there is a polynomial-time algorithm with an approximation ratio of $(\frac{8\rho+6}{9}+\frac{4\rho-3}{9}\gamma-\frac{4\rho-2}{3}\alpha\gamma+\epsilon)$, where $\rho=\min\{\rho_c, \rho_p\}$.
\end{theorem}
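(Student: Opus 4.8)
The plan is to run two independent sub-routines on the graph $H$ — one that builds a schedule from an approximately minimum $3$-cycle packing, and one that builds a schedule from an approximately minimum $3$-path packing — and then output whichever of the two resulting schedules is cheaper, charging its cost to $\psi$ via the independent lower bounds of Section~\ref{A}.

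First I would invoke the assumed $\rho_c$-approximation to obtain a $3$-cycle packing $\C_3$ of $H$ with $w(\C_3)\le\rho_c\,w(\C^*_3)$, feed it into the algorithm of Theorem~\ref{cycle-packing} with $k=3$, and get a feasible schedule of weight at most $\frac{2}{3}n_S w(\C_3)+\frac{2}{3}\Delta_H+O(1/n)\psi\le\frac{2}{3}\rho_c\,n_S w(\C^*_3)+\frac{2}{3}\Delta_H+O(1/n)\psi$. Then I substitute the lower bounds: Lemma~\ref{lb-packing} for $k=3$ gives $n_S w(\C^*_3)\le 2(1-\alpha)\gamma\psi+\frac{4}{3}(1-\gamma)\psi$, and Lemma~\ref{lb-delta} for $k=3$ gives $\Delta_H\le(1-\tfrac12\gamma+\alpha\gamma)\psi$. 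Plugging these in and collecting the coefficients of $1$, $\gamma$, and $\alpha\gamma$ produces a cost bound of $\big(\tfrac{8\rho_c+6}{9}+\tfrac{4\rho_c-3}{9}\gamma-\tfrac{4\rho_c-2}{3}\alpha\gamma\big)\psi+O(1/n)\psi$. This collecting step is the only genuine computation, and it is routine arithmetic.

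Symmetrically, I would use the $\rho_p$-approximation to get a $3$-path packing $\P_3$ with $w(\P_3)\le\rho_p\,w(\P^*_3)$, pass it to Corollary~\ref{path-packing} (whose $k=3$ bound is $\frac{4}{3}n_S w(\P_3)+\frac{2}{3}\Delta_H+O(1/n)\psi$), and apply the other half of Lemma~\ref{lb-packing}, namely $n_S w(\P^*_3)\le(1-\alpha)\gamma\psi+\frac{2}{3}(1-\gamma)\psi$, together with the same bound on $\Delta_H$. The identical bookkeeping yields the very same expression with $\rho_p$ in place of $\rho_c$. Since the construction does not know $\alpha$ or $\gamma$ in advance, I would simply run both sub-routines and keep the cheaper schedule; its weight is at most the minimum of the two bounds, and because the bracketed function is affine in its ``$\rho$'' with slope $\tfrac{8+4\gamma-12\alpha\gamma}{9}\ge 0$ for all $\alpha,\gamma\in[0,1]$ (as $12\alpha\gamma\le 12\gamma\le 8+4\gamma$), this minimum equals the value at $\rho=\min\{\rho_c,\rho_p\}$. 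Dividing by $\psi\le\OPT$ then gives an approximation ratio of $\tfrac{8\rho+6}{9}+\tfrac{4\rho-3}{9}\gamma-\tfrac{4\rho-2}{3}\alpha\gamma+O(1/n)$, and, exactly as at the end of Theorem~\ref{first_approach}, one removes the $O(1/n)$ term by brute-forcing instances with $n$ below a constant threshold, which establishes the claimed ratio for any constant $\epsilon>0$.

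I do not expect any step to be a serious obstacle: all the difficulty has been front-loaded into the stronger lower bound of Lemma~\ref{lb-packing}, which ties the minimum-weight $3$-path/$3$-cycle packing of $H$ to $\psi$ through $\alpha$ and $\gamma$, and into the cycle-/path-packing constructions of Theorem~\ref{cycle-packing} and Corollary~\ref{path-packing}. The points that still need a little care are that the packings and the lower bound both live on $H$ rather than $G$ (so the whole argument must be carried out over $H$), the monotonicity-in-$\rho$ check that licenses writing $\rho=\min\{\rho_c,\rho_p\}$, and tracking the several $O(1/n)\psi$ error terms coming from the two constructions so that they stay absorbed in the final $O(1/n)$.
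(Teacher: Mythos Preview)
Your proposal is correct and follows essentially the same route as the paper: bound the cost of the cycle-packing schedule via Theorem~\ref{cycle-packing} and Lemma~\ref{lb-packing}, do the same for the path-packing schedule via Corollary~\ref{path-packing}, substitute Lemma~\ref{lb-delta} for $\Delta_H$, and take the better of the two. The only notable difference is that you supply the explicit monotonicity-in-$\rho$ check justifying $\rho=\min\{\rho_c,\rho_p\}$, which the paper leaves implicit.
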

\begin{proof}
First, we consider that the schedule uses a $\rho_c$-approximation 3-cycle packing $\C_3$ in graph $H$. By Theorem \ref{cycle-packing}, we can get a schedule with a weight of at most
\begin{align*}
&\frac{2}{3}\rho_c n_Sw(\C^*_3)+\frac{2}{3}\Delta_H+O(1/n)\psi\\
&\le\frac{4}{3}\rho_c n_Sw(\P^*_3)+\frac{2}{3}\Delta_H+O(1/n)\psi\\
&\le\frac{4}{3}\rho_c\lrA{\frac{2}{3}+\frac{1}{3}\gamma-\alpha\gamma}\psi+\frac{2}{3}\lrA{1-\frac{1}{2}\gamma+\alpha\gamma}\psi+O(1/n)\psi\\
&=\lrA{\frac{8\rho_c+6}{9}+\frac{4\rho_c-3}{9}\gamma-\frac{4\rho_c-2}{3}\alpha\gamma}\psi+O(1/n)\psi,
\end{align*}
where the first inequality follows from $w(\C^*_3)\leq 2w(\P^*_3)$ by the triangle inequality, and the second follows from Lemmas \ref{lb-delta} and \ref{lb-packing}.

Then, we consider that the schedule uses a $\rho_p$-approximation 3-path packing $\P_3$ in graph $H$. By Theorem \ref{path-packing}, we can get a schedule with a weight of at most
\begin{align*}
&\frac{4}{3}\rho_p n_Sw(\P^*_3)+\frac{2}{3}\Delta_H+O(1/n)\psi\\
&\le\frac{4}{3}\rho_p\lrA{\frac{2}{3}+\frac{1}{3}\gamma-\alpha\gamma}\psi+\frac{2}{3}\lrA{1-\frac{1}{2}\gamma+\alpha\gamma}\psi+O(1/n)\psi\\
&=\lrA{\frac{8\rho_p+6}{9}+\frac{4\rho_p-3}{9}\gamma-\frac{4\rho_p-2}{3}\alpha\gamma}\psi+O(1/n)\psi,
\end{align*}
where the inequality follows from Lemmas \ref{lb-delta} and \ref{lb-packing}.

By selecting the better one, we can get an approximation ratio of $(\frac{8\rho+6}{9}+\frac{4\rho-3}{9}\gamma-\frac{4\rho-2}{3}\alpha\gamma+\epsilon)$.
\end{proof}

\subsubsection{Trade-off between Two Constructions}\label{A.4}
\begin{theorem}\label{main-res}
Given a polynomial-time $\rho_c$-approximation algorithm for the minimum $3$-cycle packing problem, and a polynomial-time $\rho_p$-approximation algorithm for the minimum $3$-path packing problem, let $\rho=\min\{\rho_c, \rho_p\}$.
For TTP-3 with any constant $\epsilon>0$, there is a  polynomial-time algorithm with an approximation ratio of
\[
\left\{
\begin{aligned}
\frac{11}{6}-\frac{5}{2(4\rho+1)}+\epsilon,\quad\mbox{if}\quad \rho\leq\frac{9}{4};\\
\frac{5}{3}-\frac{2}{3(4\rho-1)}+\epsilon,\quad\mbox{if}\quad \rho>\frac{9}{4}.
\end{aligned}
\right.
\]
%
\end{theorem}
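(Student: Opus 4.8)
The plan is to let the algorithm run every schedule produced so far and return the cheapest one: the Hamiltonian-cycle construction of Theorem~\ref{first_approach} with both the Christofides--Serdyukov cycle and the matching-based cycle of Lemma~\ref{Second_Hamilton}, and the $3$-cycle/$3$-path packing construction of Theorem~\ref{second_approach} (which already combines the $\rho_c$- and $\rho_p$-approximate packings, so it only sees $\rho=\min\{\rho_c,\rho_p\}$). By Theorems~\ref{first_approach} and~\ref{second_approach}, for the values $\alpha,\gamma\in[0,1]$ determined by the optimal itineraries, the approximation ratio of the combined algorithm is at most $\min\{f_1,f_2,f_3\}+\epsilon$, where
\[
f_1=\frac{4}{3}+\frac{1}{3}\alpha\gamma,\qquad f_2=1-\frac{1}{6}\gamma+\alpha\gamma,\qquad f_3=\frac{8\rho+6}{9}+\frac{4\rho-3}{9}\gamma-\frac{4\rho-2}{3}\alpha\gamma.
\]
It therefore suffices to bound $\min\{f_1,f_2,f_3\}$ by the claimed piecewise quantity uniformly over $\alpha,\gamma\in[0,1]$; note $\rho\ge1$, so every convex-combination weight used below is nonnegative.

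The heart of the argument is to kill the bilinear term $\alpha\gamma$ by two \emph{separate} convex combinations, each pairing $f_3$ with one of $f_1,f_2$. First, with $a=\frac{4\rho-2}{4\rho-1}$ and $b=\frac{1}{4\rho-1}$ (so $a+b=1$, $a,b\ge0$), the combination $af_1+bf_3$ has vanishing $\alpha\gamma$-coefficient and $\gamma$-coefficient $\frac{4\rho-3}{9(4\rho-1)}\ge0$; hence it is nondecreasing in $\gamma\in[0,1]$, and
\[
\min\{f_1,f_3\}\ \le\ af_1+bf_3\ \le\ (af_1+bf_3)\big|_{\gamma=1}\ =\ \frac{20\rho-7}{3(4\rho-1)}\ =\ \frac{5}{3}-\frac{2}{3(4\rho-1)}.
\]
Second, with $c=\frac{4\rho-2}{4\rho+1}$ and $d=\frac{3}{4\rho+1}$, the combination $cf_2+df_3$ again has vanishing $\alpha\gamma$-coefficient and $\gamma$-coefficient $\frac{4\rho-4}{6(4\rho+1)}\ge0$, so it is nondecreasing in $\gamma$ and
\[
\min\{f_2,f_3\}\ \le\ cf_2+df_3\ \le\ (cf_2+df_3)\big|_{\gamma=1}\ =\ \frac{44\rho-4}{6(4\rho+1)}\ =\ \frac{11}{6}-\frac{5}{2(4\rho+1)}.
\]

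Combining the two displays yields $\min\{f_1,f_2,f_3\}\le\min\big\{\frac{11}{6}-\frac{5}{2(4\rho+1)},\ \frac{5}{3}-\frac{2}{3(4\rho-1)}\big\}$. Equating the two bounds and clearing denominators reduces to $8\rho^2-22\rho+9=0$, whose only root with $\rho\ge1$ is $\rho=\frac{9}{4}$; checking one value on each side (e.g.\ $\rho=1$ gives $\frac{4}{3}<\frac{13}{9}$, and $\rho=3$ gives $\frac{53}{33}<\frac{128}{78}$) shows that the first bound is the smaller one exactly when $\rho\le\frac{9}{4}$ and the second when $\rho>\frac{9}{4}$. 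This is precisely the claimed formula, and the additive $\epsilon$ is inherited verbatim from Theorems~\ref{first_approach} and~\ref{second_approach}; the $O(1)$-time brute-force fallback for small $n$ is as before.

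\textbf{Anticipated obstacle.} Once the reduction is set up, everything is mechanical, so the only real decision is the one above: recognizing that $\alpha\gamma$ should be eliminated by pairing $f_3$ \emph{separately} with $f_1$ and with $f_2$ (rather than balancing all three bounds at once), and checking that in each pairing the residual $\gamma$-coefficient is nonnegative, which is what forces the extremum to $\gamma=1$ and turns the bounds into explicit functions of $\rho$ alone. The threshold $\rho=\frac{9}{4}$ then simply drops out of the resulting quadratic.
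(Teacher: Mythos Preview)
Your proof is correct and follows the same overall strategy as the paper: combine Theorems~\ref{first_approach} and~\ref{second_approach} and bound $\max_{0\le\alpha,\gamma\le1}\min\{f_1,f_2,f_3\}$. The paper substitutes $\widetilde\gamma=\alpha\gamma$, writes down the resulting linear program, and simply asserts its optimal value; you instead exhibit the LP dual certificates explicitly as the two convex combinations $af_1+bf_3$ and $cf_2+df_3$, verify that each kills the $\alpha\gamma$ term and has nonnegative residual $\gamma$-coefficient, and then locate the crossover at $\rho=\tfrac{9}{4}$ by hand. This is the same argument, made fully explicit rather than delegated to an LP solver.
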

\begin{proof}
Our algorithm will select the better schedule from the previous two constructions.
By Theorem \ref{first_approach}, the ratio of the Hamiltonian cycle construction is $\min\{\frac{4}{3}+\frac{1}{3}\alpha\gamma,\  1-\frac{1}{6}\gamma+\alpha\gamma\}+\epsilon$. By Theorem \ref{second_approach}, the ratio of the $k$-cycle packing construction is $(\frac{8\rho+6}{9}+\frac{4\rho-3}{9}\gamma-\frac{4\rho-2}{3}\alpha\gamma+\epsilon)$. Therefore, the ratio of our algorithm in the worst case is
\[
\max_{0\leq \alpha,\gamma\leq 1}\min\lrC{\frac{4}{3}+\frac{1}{3}\alpha\gamma,\  1-\frac{1}{6}\gamma+\alpha\gamma,\  \frac{8\rho+6}{9}+\frac{4\rho-3}{9}\gamma-\frac{4\rho-2}{3}\alpha\gamma}+\epsilon.
\]

We can transform it into the following LP, where we let $\widetilde{\gamma}=\alpha\gamma$.
\begin{alignat}{2}
\max\quad & y\nonumber \\
\mbox{s.t.}\quad&y \leq \frac{4}{3}+\frac{1}{3}\widetilde{\gamma},\nonumber \\
&y\leq 1-\frac{1}{6}\gamma+\widetilde{\gamma},\nonumber \\
&y\leq\frac{8\rho+6}{9}+\frac{4\rho-3}{9}\gamma-\frac{4\rho-2}{3}\widetilde{\gamma},\nonumber\\
&0\leq \widetilde{\gamma}\leq\gamma\leq 1.\nonumber
\end{alignat}

The LP shows that the ratio is $\frac{11}{6}-\frac{5}{2(4\rho+1)}+\epsilon$ when $\rho\leq 9/4$, and $\frac{5}{3}-\frac{2}{3(4\rho-1)}+\epsilon$ otherwise.
\end{proof}

There are polynomial-time $8/3$-approximation  algorithms for both the minimum 3-cycle and 3-path packing problems~\cite{DBLP:journals/siamcomp/GoemansW95}. Therefore, we have the following result.

\begin{corollary}
For any constant $\epsilon>0$, there is a polynomial-time $(139/87+\epsilon)$-approximation algorithm for TTP-3.
\end{corollary}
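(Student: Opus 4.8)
The plan is to invoke Theorem~\ref{main-res} with a concrete choice of approximation algorithms for the two packing subproblems. First I would recall that the minimum 3-cycle packing problem and the minimum 3-path packing problem each admit a polynomial-time $\frac{8}{3}$-approximation algorithm; this is exactly the cited result of Goemans and Williamson~\cite{DBLP:journals/siamcomp/GoemansW95}. Hence I may set $\rho_c = \rho_p = \frac{8}{3}$, so that $\rho = \min\{\rho_c,\rho_p\} = \frac{8}{3}$.

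Next I would check which branch of the piecewise bound in Theorem~\ref{main-res} is active. Since $\rho = \frac{8}{3} \approx 2.667 > \frac{9}{4}$, the relevant expression is $\frac{5}{3} - \frac{2}{3(4\rho-1)} + \epsilon$. Substituting $\rho = \frac{8}{3}$ gives $4\rho - 1 = \frac{32}{3} - 1 = \frac{29}{3}$, hence $\frac{2}{3(4\rho-1)} = \frac{2}{29}$, and therefore the ratio is
\[
\frac{5}{3} - \frac{2}{29} + \epsilon = \frac{145 - 6}{87} + \epsilon = \frac{139}{87} + \epsilon .
\]
Since Theorem~\ref{main-res} already guarantees that the corresponding algorithm runs in polynomial time, this completes the argument.

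The whole derivation is a routine specialization of Theorem~\ref{main-res}, so there is no real obstacle beyond bookkeeping; the only point requiring the slightest care is verifying that $\frac{8}{3}$ falls in the $\rho > \frac{9}{4}$ regime (so that the second, smaller branch of the bound is used) and then performing the arithmetic to land exactly on $\frac{139}{87}$. One could also remark that the first branch, evaluated at $\rho = \frac{8}{3}$, gives $\frac{11}{6} - \frac{5}{2 \cdot (35/3)} = \frac{11}{6} - \frac{3}{14} = \frac{77 - 9}{42} = \frac{68}{42} = \frac{34}{21} \approx 1.619 > \frac{139}{87}$, which confirms consistency of the piecewise bound at the threshold and shows the chosen branch is indeed the operative one.
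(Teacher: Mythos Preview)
Your proposal is correct and follows exactly the paper's approach: cite the $8/3$-approximation of Goemans and Williamson for both packing problems, observe that $\rho = 8/3 > 9/4$, and evaluate the second branch of Theorem~\ref{main-res} to obtain $\frac{5}{3} - \frac{2}{29} = \frac{139}{87}$. The paper itself omits the arithmetic entirely, so your write-up is in fact more detailed; the closing remark about the first branch is unnecessary (the theorem's bound is genuinely piecewise, not a minimum of the two expressions), but it is harmless.
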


\begin{corollary}\label{ttp3-coro}
For any constant $\epsilon>0$, TTP-3 allows
a polynomial-time $(4/3+\epsilon)$-approximation algorithm if
a minimum 3-cycle or a minimum 3-path packing of $H$ is given.
\end{corollary}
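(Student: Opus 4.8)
The plan is to obtain this as an immediate specialization of Theorem~\ref{main-res} with $\rho=1$. First I would note that if a minimum $3$-cycle packing (resp.\ a minimum $3$-path packing) of $H$ is supplied as part of the input, then the subroutine in the algorithm that asks for a $\rho_c$-approximate (resp.\ $\rho_p$-approximate) minimum packing can simply output the given optimal packing, so we may take $\rho_c=1$ (resp.\ $\rho_p=1$). Pairing this with the known polynomial-time $8/3$-approximation for the other packing variant~\cite{DBLP:journals/siamcomp/GoemansW95} so that both hypotheses of Theorem~\ref{main-res} are met, we get $\rho=\min\{\rho_c,\rho_p\}=1$ in either case; every remaining step (the two constructions of Sections~\ref{B.1}--\ref{B.2}, the choice between the Christofides--Serdyukov cycle and the matching-based cycle, and the trade-off) runs in polynomial time.

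Next I would plug $\rho=1$ into the bound of Theorem~\ref{main-res}. Since $1\le 9/4$, the first branch applies, giving an approximation ratio of
\[
\frac{11}{6}-\frac{5}{2(4\cdot 1+1)}+\epsilon=\frac{11}{6}-\frac12+\epsilon=\frac43+\epsilon,
\]
which is exactly the claimed guarantee.

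I do not expect any real obstacle here; the only point worth spelling out is that it is legitimate to feed an \emph{optimal} packing into the analyses behind Theorems~\ref{first_approach} and~\ref{second_approach}. This is because those proofs use the approximation guarantee only through the inequality $w(\C_k)\le \rho_c\,w(\C^*_k)$ (and its $3$-path analogue $w(\P_k)\le \rho_p\,w(\P^*_k)$), which holds trivially with $\rho_c=1$ when $\C_k=\C^*_k$ (resp.\ $\rho_p=1$ when $\P_k=\P^*_k$). Hence the corollary follows directly.
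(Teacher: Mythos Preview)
Your proposal is correct and takes essentially the same approach as the paper: Corollary~\ref{ttp3-coro} is stated immediately after Theorem~\ref{main-res} without a separate proof, so it is meant to follow by plugging $\rho=1$ into that theorem, exactly as you do. Your arithmetic $\frac{11}{6}-\frac{5}{2(4\cdot1+1)}=\frac{4}{3}$ is right, and your remark about supplying an $8/3$-approximation for the other packing variant so that both hypotheses of Theorem~\ref{main-res} are met is a reasonable bit of care.
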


\subsection{TTP-4}\label{C.2}
It is natural to use the same idea to solve TTP-4. For the problem of minimum 4-path packing, to our best knowledge, the current best-known ratio is $3/2$~\cite{DBLP:journals/jda/MonnotT08}. For the problem of minimum 4-cycle packing, there is a $3$-approximation algorithm~\cite{DBLP:journals/siamcomp/GoemansW95}.

If we use the construction based on the Hamiltonian cycle and the construction based on the 4-path packing, we can get a $(17/10+\epsilon)$-approximation algorithm for TTP-4, which improves the previous ratio $(7/4+\epsilon)$~\cite{yamaguchi2009improved}.


\begin{theorem}\label{ttp4-first-approach}
For TTP-4 with any constant $\epsilon>0$, there is a polynomial-time algorithm using the Hamiltonian cycle construction that can achieve an approximation ratio of $(\frac{3}{2}+\frac{1}{8}\gamma+\frac{1}{8}\alpha\gamma+\epsilon)$.
\end{theorem}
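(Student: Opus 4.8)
The plan is to follow the Christofides--Serdyukov branch of the proof of Theorem~\ref{first_approach}, now specialized to $k=4$. First I would let $C_H$ be a minimum weight Hamiltonian cycle of $H$ and let $C'_H$ be the Hamiltonian cycle of $H$ returned by the Christofides--Serdyukov algorithm. Since $n_Sw(C'_H)\leq\tfrac{3}{2}n_Sw(C_H)\leq\tfrac{3}{2}\psi=O(1)\psi$ by Lemma~\ref{lb-tsp+}, the hypothesis of Lemma~\ref{lb-distance+} is met, so feeding $C'_H$ into that construction with $k=4$ produces a feasible schedule of weight at most
\[
\tfrac{3}{4}n_Sw(C'_H)+\tfrac{1}{2}\Delta_H+O(1/n)\psi .
\]

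Next I would replace $w(C'_H)$ by $\mbox{MST}(H)+\tfrac{1}{2}w(C_H)$ using Lemma~\ref{lb-chris}, and then split the resulting expression so that each block matches one available lower bound:
\[
\tfrac{3}{4}n_Sw(C'_H)+\tfrac{1}{2}\Delta_H
\ \leq\ \tfrac{3}{4}\Big(n_S\mbox{MST}(H)+\tfrac{1}{2}\Delta_H\Big)+\tfrac{1}{8}\Delta_H+\tfrac{3}{8}n_Sw(C_H).
\]
Here I would apply inequality~(\ref{lb-delta_tau}) with $k=4$, namely $\tfrac12\Delta_H+n_S\mbox{MST}(H)\leq(\tfrac54+\tfrac14\gamma)\psi$, to the first block; Lemma~\ref{lb-delta} with $k=4$, namely $\Delta_H\leq(\tfrac32-\tfrac12\gamma+\alpha\gamma)\psi$, to the second; and Lemma~\ref{lb-tsp+}, namely $n_Sw(C_H)\leq\psi$, to the third. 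Adding the $O(1/n)\psi$ error back, the total weight is at most
\[
\tfrac{3}{4}\Big(\tfrac54+\tfrac14\gamma\Big)\psi+\tfrac{1}{8}\Big(\tfrac32-\tfrac12\gamma+\alpha\gamma\Big)\psi+\tfrac{3}{8}\psi+O(1/n)\psi
=\Big(\tfrac32+\tfrac18\gamma+\tfrac18\alpha\gamma\Big)\psi+O(1/n)\psi,
\]
and dividing by the lower bound $\psi$ yields an approximation ratio of $\tfrac32+\tfrac18\gamma+\tfrac18\alpha\gamma+O(1/n)$.

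To finish, I would absorb the $O(1/n)$ term into the additive $\epsilon$ exactly as in the proof of Theorem~\ref{first_approach}: there is a constant $c$ with the ratio at most $\tfrac32+\tfrac18\gamma+\tfrac18\alpha\gamma+c/n$; for $n\leq c/\epsilon=O(1)$ one solves the instance optimally by brute force, and otherwise one runs the algorithm, establishing the claimed ratio $(\tfrac32+\tfrac18\gamma+\tfrac18\alpha\gamma+\epsilon)$. I do not expect a genuine obstacle in this argument; the only thing that must be gotten right is the coefficient bookkeeping that makes the $\mbox{MST}(H)$, $\Delta_H$, and $n_Sw(C_H)$ contributions recombine \emph{exactly} into the three lower bounds above --- the split $\tfrac34:\tfrac18:\tfrac38$ is what collapses the expression --- together with a quick check that Lemma~\ref{lb-distance+} applies verbatim for $k=4$, which it does since $k=4\geq 3$ is constant and $n\geq 8k^2$ is assumed throughout.
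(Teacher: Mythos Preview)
Your proposal is correct and follows essentially the same approach as the paper's proof: the same Christofides--Serdyukov cycle, the same application of Lemma~\ref{lb-distance+} with $k=4$, the same $\tfrac34:\tfrac18:\tfrac38$ split matched against (\ref{lb-delta_tau}), Lemma~\ref{lb-delta}, and Lemma~\ref{lb-tsp+}, and the same arithmetic. The only cosmetic difference is that the paper omits the explicit $\epsilon$-absorption step (leaving it implicit from Theorem~\ref{first_approach}) and adds a remark that the matching-based Hamiltonian cycle is unnecessary here.
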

\begin{proof}
Here we use $C_H$ to denote a minimum weight Hamiltonian cycle in graph $H$.
If we use the Hamiltonian cycle $C'_H$ obtained by the Christofides-Serdyukov algorithm, we can construct a feasible schedule with a total weight of at most $\frac{3}{4}n_Sw(C'_H)+\frac{1}{2}\Delta_H+O(1/n)\psi$ by Lemma~\ref{lb-distance+}, since $n_Sw(C'_H)\leq \frac{3}{2}n_Sw(C_H)\leq\frac{3}{2}\psi= O(1)\psi$ (recall that the Christofides-Serdyukov algorithm is a $3/2$-approximation algorithm for TSP).
Then, we have that
\begin{align*}
&\frac{3}{4}n_Sw(C'_H)+\frac{1}{2}\Delta_H+O(1/n)\psi\\
&\leq\frac{3}{4}n_S\lrA{\mbox{MST}(H)+\frac{1}{2}w(C_H)}+\frac{1}{2}\Delta_H+O(1/n)\psi\\
&=\frac{3}{4}\lrA{n_S\mbox{MST}(H)+\frac{1}{2}\Delta_H}+\frac{1}{8}\Delta_H+\frac{3}{8}n_Sw(C_H)+O(1/n)\psi\\
&\leq\frac{3}{4}\lrA{\frac{5}{4}+\frac{1}{4}\gamma}\psi+\frac{1}{8}\lrA{\frac{3}{2}-\frac{1}{2}\gamma+\alpha\gamma}\psi+\frac{3}{8}\psi+O(1/n)\psi\\
&=\lrA{\frac{3}{2}+\frac{1}{8}\gamma+\frac{1}{8}\alpha\gamma}\psi+O(1/n)\psi,
\end{align*}
where the first inequality follows from Lemma~\ref{lb-chris}, and the second follows from Lemmas \ref{lb-tsp+}, \ref{lb-delta}, and (\ref{lb-delta_tau}).
Note that there is no need to use the Hamiltonian cycle based on the minimum weight perfect matching.
\end{proof}

\begin{theorem}\label{ttp4-second-approach}
For TTP-4 with any constant $\epsilon>0$, there is a polynomial-time algorithm using the 4-path packing construction that can achieve an approximation ratio of $(\frac{39}{16}+\frac{5}{16}\gamma-\frac{7}{4}\alpha\gamma+\epsilon)$.
\end{theorem}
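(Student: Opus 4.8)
The plan is to mirror the second half of the proof of Theorem~\ref{second_approach}, now instantiated with $k=4$ and the 4-path packing construction. First I would invoke the $3/2$-approximation algorithm for minimum weight 4-path packing of~\cite{DBLP:journals/jda/MonnotT08} to obtain a 4-path packing $\P_4$ in graph $H$ with $w(\P_4)\le\frac{3}{2}w(\P^*_4)$, where $\P^*_4$ denotes a minimum weight 4-path packing of $H$. Feeding $\P_4$ into the cycle-packing construction, Corollary~\ref{path-packing} specialized to $k=4$ produces a feasible schedule of total weight at most
\[
\frac{3}{2}n_S w(\P_4)+\frac{1}{2}\Delta_H+O(1/n)\psi
\ \le\ \frac{9}{4}n_S w(\P^*_4)+\frac{1}{2}\Delta_H+O(1/n)\psi .
\]

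Next I would bound the two quantities $n_S w(\P^*_4)$ and $\Delta_H$ in terms of $\psi$, $\gamma$ and $\alpha$ using the independent lower bounds of Section~\ref{A} with $k=4$. Lemma~\ref{lb-packing} gives $n_S w(\P^*_4)\le\lrA{\frac{3}{4}+\frac{1}{4}\gamma-\alpha\gamma}\psi$, and Lemma~\ref{lb-delta} gives $\Delta_H\le\lrA{\frac{3}{2}-\frac{1}{2}\gamma+\alpha\gamma}\psi$. Substituting these into the displayed bound,
\[
\frac{9}{4}n_S w(\P^*_4)+\frac{1}{2}\Delta_H+O(1/n)\psi
\ \le\ \lrA{\frac{39}{16}+\frac{5}{16}\gamma-\frac{7}{4}\alpha\gamma}\psi+O(1/n)\psi ,
\]
where the only work is the routine arithmetic of collecting the coefficients of $1$, $\gamma$ and $\alpha\gamma$.

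Finally, since $\psi$ is a lower bound on the optimum of TTP-4, the above yields an approximation ratio of $\frac{39}{16}+\frac{5}{16}\gamma-\frac{7}{4}\alpha\gamma+O(1/n)$; absorbing the $O(1/n)$ term into $\epsilon$ by the standard ``brute force for small $n$'' argument (exactly as in the proof of Theorem~\ref{first_approach}) gives the claimed ratio $\lrA{\frac{39}{16}+\frac{5}{16}\gamma-\frac{7}{4}\alpha\gamma+\epsilon}$.

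I do not expect any genuine obstacle: all the heavy lifting—feasibility of the cycle-packing construction, its expected-weight analysis (Theorem~\ref{cycle-packing} and Corollary~\ref{path-packing}), and the independent lower bounds of Lemmas~\ref{lb-delta} and~\ref{lb-packing}—is already in place. The only subtlety worth flagging is that Corollary~\ref{path-packing} takes a 4-path packing as part of its input, so one must make explicit that plugging in the approximate packing $\P_4$ (rather than $\P^*_4$) only scales the $n_S w(\P^*_4)$ term by the factor $3/2$, as used in the first display; and, as in the TTP-4 Hamiltonian-cycle case (Theorem~\ref{ttp4-first-approach}), there is no need here for the perfect-matching-based Hamiltonian cycle of Lemma~\ref{Second_Hamilton}.
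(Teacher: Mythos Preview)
Your proposal is correct and follows essentially the same approach as the paper's own proof: apply Corollary~\ref{path-packing} with $k=4$ to the $3/2$-approximate 4-path packing, then bound $n_S w(\P^*_4)$ and $\Delta_H$ via Lemmas~\ref{lb-packing} and~\ref{lb-delta} (specialized to $k=4$) and collect coefficients. The only cosmetic difference is that the paper folds the $3/2$ factor into the first displayed bound immediately, whereas you show the intermediate term $\frac{3}{2}n_S w(\P_4)$ explicitly.
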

\begin{proof}
Using the $3/2$-approximation 4-path packing $\P_4$ in graph $H$, by Corollary \ref{path-packing}, we can get a schedule with a weight of at most $\frac{9}{4}n_Sw(\P^*_4)+\frac{1}{2}\Delta_H+O(1/n)\psi$. Then, we have that
\begin{align*}
&\frac{9}{4} n_Sw(\P^*_4)+\frac{1}{2}\Delta_H+O(1/n)\psi\\
&\le\frac{9}{4}\lrA{\frac{3}{4}+\frac{1}{4}\gamma-\alpha\gamma}\psi+\frac{1}{2}\lrA{\frac{3}{2}-\frac{1}{2}\gamma+\alpha\gamma}\psi+O(1/n)\psi\\
&=\lrA{\frac{39}{16}+\frac{5}{16}\gamma-\frac{7}{4}\alpha\gamma}\psi+O(1/n)\psi,
\end{align*}
where the inequality follows from Lemmas \ref{lb-delta} and \ref{lb-packing}.
\end{proof}

Making a trade-off between Theorems \ref{ttp4-first-approach} and \ref{ttp4-second-approach}, we can get an approximation ratio of
\[
\max_{0\leq \alpha,\gamma\leq 1}\min\lrC{\frac{3}{2}+\frac{1}{8}\gamma+\frac{1}{8}\alpha\gamma,\  \frac{39}{16}+\frac{5}{16}\gamma-\frac{7}{4}\alpha\gamma}+\epsilon=\frac{17}{10}+\epsilon.
\]
Hence, we have the following theorem.
\begin{theorem}
For any contant $\epsilon>0$, there is a polynomial-time $(17/10+\epsilon)$-approximation  algorithm for TTP-4.
\end{theorem}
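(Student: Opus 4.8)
The plan is to run both constructions from Theorems~\ref{ttp4-first-approach} and~\ref{ttp4-second-approach} and return the cheaper of the two resulting schedules. For the second construction we feed in the $3/2$-approximate minimum $4$-path packing of~\cite{DBLP:journals/jda/MonnotT08} computed on the graph $H$; both constructions run in polynomial time, and comparing the two produced schedules is trivial. Since the output weight is at most the minimum of the two upper bounds, and every upper bound we derived is a multiple of the lower bound $\psi$, the approximation ratio of the combined algorithm is at most
\[
\max_{0\le\alpha,\gamma\le1}\min\lrC{\frac{3}{2}+\frac{1}{8}\gamma+\frac{1}{8}\alpha\gamma,\ \frac{39}{16}+\frac{5}{16}\gamma-\frac{7}{4}\alpha\gamma}+\epsilon.
\]

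Next I would evaluate this max-min. Substituting $\widetilde{\gamma}=\alpha\gamma$ (so that $0\le\widetilde{\gamma}\le\gamma\le1$), the first expression is increasing in both $\gamma$ and $\widetilde{\gamma}$, while the second is increasing in $\gamma$ and decreasing in $\widetilde{\gamma}$; hence for any fixed $\widetilde{\gamma}$ both are maximized at $\gamma=1$, so we may set $\gamma=1$. The two functions of $\widetilde{\gamma}$ then become $\frac{13}{8}+\frac{1}{8}\widetilde{\gamma}$ and $\frac{11}{4}-\frac{7}{4}\widetilde{\gamma}$; the first is increasing and the second decreasing, so their pointwise minimum is maximized where they are equal, namely at $\widetilde{\gamma}=3/5$ (which satisfies $0\le 3/5\le 1$), with common value $\frac{13}{8}+\frac{1}{8}\cdot\frac{3}{5}=\frac{17}{10}$. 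Therefore the ratio is at most $\frac{17}{10}+\epsilon$ up to the usual $O(1/n)$ additive slack.

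Finally I would absorb the $O(1/n)$ slack in the standard way: there is a constant $c$ so that the algorithm's ratio is bounded by $17/10+c/n$; for any fixed $\epsilon>0$, instances with $n\le c/\epsilon=O(1)$ are solved optimally by brute force (recall $k=4$ is constant, and our constructions already assume $n\ge 8k^2$), and the approximation algorithm is used otherwise. This yields the claimed $(17/10+\epsilon)$-approximation.

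I do not expect a genuine obstacle here, since everything reduces to combining the two already-proven construction bounds and solving a two-constraint linear program; the only point requiring a little care is confirming that the optimum of the max-min occurs at $\gamma=1$ together with the interior value $\widetilde{\gamma}=3/5$, i.e.\ that neither the constraint $\widetilde{\gamma}\le\gamma$ nor $\widetilde{\gamma}\ge 0$ is tight at the maximizer, which is why the crossing value $17/10$ is indeed attained.
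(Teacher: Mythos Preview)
Your proposal is correct and takes essentially the same approach as the paper: the paper's proof is just the one-line statement that the trade-off between Theorems~\ref{ttp4-first-approach} and~\ref{ttp4-second-approach} gives
\[
\max_{0\leq \alpha,\gamma\leq 1}\min\lrC{\tfrac{3}{2}+\tfrac{1}{8}\gamma+\tfrac{1}{8}\alpha\gamma,\ \tfrac{39}{16}+\tfrac{5}{16}\gamma-\tfrac{7}{4}\alpha\gamma}+\epsilon=\tfrac{17}{10}+\epsilon,
\]
without spelling out the optimization. Your write-up supplies the explicit evaluation (the reduction to $\gamma=1$ via monotonicity and the crossing at $\widetilde{\gamma}=3/5$) and the standard $O(1/n)$-absorption argument, both of which are fine additions.
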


\subsection{TTP-$k$ with $k\geq 5$}\label{C.3}
For TTP-$k$ with $k\geq5$, we note that both of the best-known approximation ratios for minimum $k$-cycle and $k$-path packing problems are $4(1-1/k)>3$~\cite{DBLP:journals/siamcomp/GoemansW95}. The approximation ratios are too big and we cannot improve TTP-$k$ by using the same idea. However, by a refined analysis in \cite{yamaguchi2009improved}, we can slightly improve the approximation ratio $(\frac{5k-7}{2k}+\epsilon)$ by a term of $\Theta(1/k)$.

Define $\zeta=\alpha\gamma+\beta(1-\gamma)$. It measures the proportion of weights of all home-edges compared to the itineraries of all teams. Then, we have $0\leq \zeta\leq 1$. We can get two following lower bounds.

\begin{lemma}\label{lb-delta++}
$(\frac{k-2}{2}+\zeta)\psi\geq \Delta_H$.
\end{lemma}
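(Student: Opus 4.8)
The plan is to derive the bound directly from the per-team estimate in Lemma~\ref{lb-delta-}, rewriting its right-hand side in terms of the total home-edge proportion and using $\gamma_v\le 1$ to absorb the remaining slack before summing over all teams.

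First I would recall that for each team $v\in S$, Lemma~\ref{lb-delta-} gives
$$\lrA{\tfrac{k-2}{2}+\alpha_v}\gamma_v w(I^H_v)+\lrA{\tfrac{k-3}{2}+\beta_v}(1-\gamma_v) w(I^H_v)\geq \deg_H(v).$$
Next I would expand the left-hand side and collect the two ``home-edge'' contributions $\alpha_v\gamma_v w(I^H_v)$ and $\beta_v(1-\gamma_v)w(I^H_v)$; writing $\zeta_v=\alpha_v\gamma_v+\beta_v(1-\gamma_v)$, the left-hand side equals $\bigl(\tfrac{k-3}{2}+\tfrac12\gamma_v+\zeta_v\bigr)w(I^H_v)$. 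Since $\gamma_v\le 1$, this is at most $\bigl(\tfrac{k-2}{2}+\zeta_v\bigr)w(I^H_v)$, so $\deg_H(v)\le \bigl(\tfrac{k-2}{2}+\zeta_v\bigr)w(I^H_v)$.

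Then I would sum over all $v\in S$. The left side becomes $\Delta_H=\sum_{v\in S}\deg_H(v)$; on the right, $\sum_{v\in S}\tfrac{k-2}{2}w(I^H_v)=\tfrac{k-2}{2}\psi^H$, and by the definitions of $\alpha$, $\beta$, $\gamma$ we have $\sum_{v\in S}\zeta_v w(I^H_v)=\sum_{v\in S}\alpha_v\gamma_v w(I^H_v)+\sum_{v\in S}\beta_v(1-\gamma_v)w(I^H_v)=\alpha\gamma\psi^H+\beta(1-\gamma)\psi^H=\zeta\psi^H$. Hence $\Delta_H\le\bigl(\tfrac{k-2}{2}+\zeta\bigr)\psi^H\le\bigl(\tfrac{k-2}{2}+\zeta\bigr)\psi$, using $\psi\ge\psi^H$.

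There is no real obstacle here; the only point requiring a little care is that the $\tfrac12\gamma_v$ slack should be discarded termwise (or, equivalently, one uses $\gamma\le 1$ after summing, which is also valid). In fact the quickest route is to start from the already-summed Lemma~\ref{lb-delta}: its middle expression is $\bigl(\tfrac{k-2}{2}+\alpha\bigr)\gamma\psi+\bigl(\tfrac{k-3}{2}+\beta\bigr)(1-\gamma)\psi=\bigl(\tfrac{k-3}{2}+\tfrac12\gamma+\zeta\bigr)\psi$, which is at most $\bigl(\tfrac{k-2}{2}+\zeta\bigr)\psi$ because $\gamma\le 1$, and it already dominates $\Delta_H$; this yields the claim in a single line.
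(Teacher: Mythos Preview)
Your proposal is correct and essentially matches the paper's proof: the paper also starts from the middle expression of Lemma~\ref{lb-delta} and replaces $\tfrac{k-3}{2}$ by $\tfrac{k-2}{2}$ in the $(1-\gamma)$ term (equivalently, uses $\gamma\le 1$), obtaining $(\tfrac{k-2}{2}+\zeta)\psi\ge\Delta_H$ immediately. Your final ``quickest route'' paragraph is verbatim the paper's argument.
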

\begin{proof}
By Lemma~\ref{lb-delta}, we have that $\Delta_H\leq (\frac{k-2}{2}+\alpha)\gamma\psi +(\frac{k-3}{2}+\beta)(1-\gamma)\psi$. Note that $(\frac{k-2}{2}+\alpha)\gamma\psi +(\frac{k-3}{2}+\beta)(1-\gamma)\psi\leq (\frac{k-2}{2}+\alpha)\gamma\psi +(\frac{k-2}{2}+\beta)(1-\gamma)\psi=(\frac{k-2}{2}+\zeta)\psi$.
\end{proof}

\begin{lemma}\label{lb-tree+}
$\min\{(1-\frac{1}{2}\zeta)\psi, \frac{k}{k+1}\psi\} \geq n_S\mbox{MST}(H)$.
\end{lemma}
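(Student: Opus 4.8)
The plan is to prove the two inequalities hidden in the minimum separately, and then take the smaller one.

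The bound $(1-\tfrac12\zeta)\psi\ge n_S\mathrm{MST}(H)$ is essentially a restatement of Lemma~\ref{lb-tree}. That lemma gives $(1-\tfrac12\alpha)\gamma\psi+(1-\tfrac12\beta)(1-\gamma)\psi\ge n_S\mathrm{MST}(H)$, and expanding and regrouping the left-hand side yields $\psi-\tfrac12\bigl(\alpha\gamma+\beta(1-\gamma)\bigr)\psi$, which is exactly $(1-\tfrac12\zeta)\psi$ by the definition $\zeta=\alpha\gamma+\beta(1-\gamma)$. So this half needs no new argument.

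For the bound $\tfrac{k}{k+1}\psi\ge n_S\mathrm{MST}(H)$ I would argue one team at a time. Fix $v\in S$ and take an optimal itinerary $I^H_v$ on $H$ with no $2$-cycles (Lemma~\ref{2-cycle}); it decomposes into $(k+1)$-cycles and $k'$-cycles with $3\le k'\le k$, all of which pass through $v$ and are otherwise vertex-disjoint. From each cycle delete a maximum-weight edge: a $(k+1)$-cycle has $k+1$ edges, so its maximum edge carries at least a $\tfrac{1}{k+1}$ fraction of the cycle's weight, leaving a path of weight at most $\tfrac{k}{k+1}w(C)$; a $k'$-cycle leaves a path of weight at most $\tfrac{k'-1}{k'}w(C)\le\tfrac{k-1}{k}w(C)\le\tfrac{k}{k+1}w(C)$ (the last inequality being equivalent to $k^2-1\le k^2$). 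The union of the resulting paths spans all of $S$, contains exactly $n_S-1$ edges, and is connected because every path still contains $v$; hence it is a spanning tree of $H$, and $\mathrm{MST}(H)\le\tfrac{k}{k+1}w(I^H_v)$. Summing over $v\in S$ and using $\psi\ge\psi^H=\sum_{v\in S}w(I^H_v)$ gives $n_S\mathrm{MST}(H)\le\tfrac{k}{k+1}\psi$.

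Combining the two displayed bounds proves $\min\{(1-\tfrac12\zeta)\psi,\ \tfrac{k}{k+1}\psi\}\ge n_S\mathrm{MST}(H)$. I do not expect a substantive obstacle; the only points requiring a line of care are verifying that deleting one edge per cycle keeps the structure a connected spanning subgraph (which holds since all cycles meet exactly at $v$) and the elementary inequality $\tfrac{k-1}{k}\le\tfrac{k}{k+1}$.
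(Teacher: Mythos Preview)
Your proof is correct and follows essentially the same approach as the paper: the first bound is read off from Lemma~\ref{lb-tree} after regrouping, and the second is obtained by deleting a heaviest edge from each cycle of $I^H_v$ to produce a spanning tree of $H$ with weight at most $\tfrac{k}{k+1}w(I^H_v)$, then summing over $v\in S$. Your write-up is in fact a bit more careful than the paper's, which writes $I_v$ where $I^H_v$ is meant and does not spell out the connectedness and edge-count check.
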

\begin{proof}
We first show that $(1-\frac{1}{2}\zeta)\psi\geq n_S\mbox{MST}(H)$.
By Lemma~\ref{lb-tree}, we have that $n_S\mbox{MST}(H)\leq (1-\frac{1}{2}\alpha)\gamma\psi+(1-\frac{1}{2}\beta)(1-\gamma)\psi=(1-\frac{1}{2}\zeta)\psi$.

To prove $\frac{k}{k+1}\psi\geq n_S\mbox{MST}(H)$, it is sufficient to prove that $\frac{k}{k+1}w(I_v)\geq \mbox{MST}(H)$ holds for any $v\in S$. Recall that $I_v$ contains a set of $k'$-cycles with $3\leq k'\leq k+1$. For each $k'$-cycle $C$, we delete the longest edge and the weight of the cycle loses at least $\frac{1}{k'}w(C)\geq \frac{1}{k+1}w(C)$. Hence, we can get a spanning tree with the a total weight of at most $\frac{k}{k+1}w(I_v)$. We have $\frac{k}{k+1}w(I_v)\geq \MST(H)$.
\end{proof}

By Lemmas \ref{lb-delta++} and \ref{lb-tree+}, we have that
\begin{equation}\label{lb-delta_tau+}
\frac{k+2}{4}\psi\geq \frac{1}{2}\Delta_H+n_S\mbox{MST}(H).
\end{equation}

\begin{theorem}\label{ttp-k}
For TTP-$k$ with any constant $k\geq5$,
there is a polynomial-time algorithm achieving  approximation ratio of $(\frac{5k^2-4k+3}{2k(k+1)}+\epsilon)$.
\end{theorem}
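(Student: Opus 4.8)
The plan is to use only the Hamiltonian cycle construction (Lemma~\ref{lb-distance+}), feeding it the cycle produced by the Christofides--Serdyukov algorithm, and then bound the resulting expression $\frac{k-1}{k}n_Sw(C'_H)+\frac{2}{k}\Delta_H+O(k^2/n)\psi$ against $\psi$ using the refined lower bounds just established. Concretely, I would start from $n_Sw(C'_H)\le \mbox{MST}(H)+\frac12 w(C_H)$ (Lemma~\ref{lb-chris}), so that $n_Sw(C'_H)\le n_S\mbox{MST}(H)+\frac12 n_Sw(C_H)$, and note $n_Sw(C_H)\le \psi$ by Lemma~\ref{lb-tsp+}; this also verifies the hypothesis $n_Sw(C_H)\le O(1)\psi$ of Lemma~\ref{lb-distance+}. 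Substituting gives a schedule weight of at most
\[
\frac{k-1}{k}\Bigl(n_S\mbox{MST}(H)+\tfrac12 n_Sw(C_H)\Bigr)+\frac{2}{k}\Delta_H+O(k^2/n)\psi.
\]

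Next I would regroup the terms to expose the combination $\frac12\Delta_H+n_S\mbox{MST}(H)$, which is controlled by the new inequality (\ref{lb-delta_tau+}): $\frac{k+2}{4}\psi\ge \frac12\Delta_H+n_S\mbox{MST}(H)$. Writing
\[
\frac{k-1}{k}n_S\mbox{MST}(H)+\frac{2}{k}\Delta_H
=\frac{2(k-1)}{k}\Bigl(\tfrac12\Delta_H+n_S\mbox{MST}(H)\Bigr)-\frac{k-3}{k}\Delta_H,
\]
I can bound the first chunk by $\frac{2(k-1)}{k}\cdot\frac{k+2}{4}\psi=\frac{(k-1)(k+2)}{2k}\psi$, and bound $\Delta_H$ from below by $0$ (or, if a tighter constant is needed, use Lemma~\ref{lb-delta++}); since $k\ge 5$ the coefficient $-\frac{k-3}{k}$ is negative so dropping $\Delta_H$ is valid and only helps. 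Adding the $\frac12\cdot\frac{k-1}{k}n_Sw(C_H)\le \frac{k-1}{2k}\psi$ term, the schedule weight is at most $\bigl(\frac{(k-1)(k+2)}{2k}+\frac{k-1}{2k}\bigr)\psi+O(k^2/n)\psi=\frac{(k-1)(k+3)}{2k}\psi+O(k^2/n)\psi$. This, however, equals $\frac{k^2+2k-3}{2k}$, which is weaker than the claimed $\frac{5k^2-4k+3}{2k(k+1)}$ for large $k$, so the naive grouping above is too lossy and the real work is in balancing the use of (\ref{lb-delta_tau+}) against a direct appeal to Lemma~\ref{lb-delta++}.

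The correct approach, and the step I expect to be the main obstacle, is to keep $\Delta_H$ and $n_S\mbox{MST}(H)$ as separate budgeted quantities and optimize the split: use a convex combination of the bound $\Delta_H\le(\frac{k-2}{2}+\zeta)\psi$ from Lemma~\ref{lb-delta++}, the bound $n_S\mbox{MST}(H)\le\frac{k}{k+1}\psi$ from Lemma~\ref{lb-tree+}, and the combined bound (\ref{lb-delta_tau+}), choosing weights so that the residual dependence on the free parameter $\zeta\in[0,1]$ vanishes (the worst case will be at one endpoint of $\zeta$). Tracking coefficients carefully, the schedule weight $\frac{k-1}{k}n_S\mbox{MST}(H)+\frac{k-1}{2k}n_Sw(C_H)+\frac{2}{k}\Delta_H$ should be shown to be at most $\frac{5k^2-4k+3}{2k(k+1)}\psi+O(k^2/n)\psi$; the factor $k+1$ in the denominator is the signature of the $\frac{k}{k+1}$ MST bound, confirming that Lemma~\ref{lb-tree+} must be the binding constraint in the optimal split. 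Finally, as in the proofs of Theorems~\ref{first_approach} and~\ref{ttp4-first-approach}, I would absorb the $O(k^2/n)\psi$ error into $\epsilon$ by solving TTP-$k$ by brute force when $n$ is below a constant threshold depending on $\epsilon$ and $k$, yielding the stated $(\frac{5k^2-4k+3}{2k(k+1)}+\epsilon)$ ratio.
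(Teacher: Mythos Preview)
Your plan is correct and matches the paper's approach exactly: use the Christofides--Serdyukov cycle in Lemma~\ref{lb-distance+}, then bound $\frac{k-1}{k}n_S\mbox{MST}(H)+\frac{2}{k}\Delta_H+\frac{k-1}{2k}n_Sw(C_H)$ using (\ref{lb-delta_tau+}), the $\frac{k}{k+1}$ branch of Lemma~\ref{lb-tree+}, and Lemma~\ref{lb-tsp+}. Two small points. First, the identity in your discarded attempt is mis-written (the right-hand side expands to $\frac{2}{k}\Delta_H+\frac{2(k-1)}{k}n_S\mbox{MST}(H)$, not $\frac{k-1}{k}n_S\mbox{MST}(H)+\frac{2}{k}\Delta_H$), though this does not affect the final argument. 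Second, the ``optimization over $\zeta$'' you anticipate is not really needed: (\ref{lb-delta_tau+}) is already $\zeta$-free, so the split is purely a matter of apportioning the MST coefficient. The paper writes
\[
\frac{k-1}{k}n_S\mbox{MST}(H)+\frac{2}{k}\Delta_H
=\frac{4}{k}\Bigl(n_S\mbox{MST}(H)+\tfrac12\Delta_H\Bigr)+\frac{k-5}{k}\,n_S\mbox{MST}(H),
\]
bounds the first piece by $\frac{4}{k}\cdot\frac{k+2}{4}\psi$ via (\ref{lb-delta_tau+}) and the second by $\frac{k-5}{k}\cdot\frac{k}{k+1}\psi$ via Lemma~\ref{lb-tree+} (here $k\ge 5$ keeps the coefficient nonnegative), and adds $\frac{k-1}{2k}\psi$ for the $n_Sw(C_H)$ term; the sum is exactly $\frac{5k^2-4k+3}{2k(k+1)}\psi$.
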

\begin{proof}
Here we use $C_H$ to denote a minimum weight Hamiltonian cycle in graph $H$.
If we use the Hamiltonian cycle $C'_H$ obtained by the Christofides-Serdyukov algorithm, we can construct a feasible schedule with a total weight of at most $\frac{k-1}{k}n_Sw(C'_H)+\frac{2}{k}\Delta_H+O(k^2/n)\psi$ by Lemma~\ref{lb-distance+}, since $n_Sw(C'_H)\leq \frac{3}{2}n_Sw(C_H)\leq\frac{3}{2}\psi= O(1)\psi$ (recall that the Christofides-Serdyukov algorithm is a $3/2$-approximation algorithm for TSP).
Then, we have that
\begin{align*}
&\frac{k-1}{k}n_Sw(C'_H)+\frac{2}{k}\Delta_H\\
&\leq\frac{k-1}{k}n_S\lrA{\mbox{MST}(H)+\frac{1}{2}w(C_H)}+\frac{2}{k}\Delta_H\\
&=\frac{4}{k}\lrA{n_S\mbox{MST}(H)+\frac{1}{2}\Delta_H}+\frac{k-5}{k}n_S\mbox{MST}(H)+\frac{k-1}{2k}n_Sw(C_H)\\
&\leq\frac{4}{k}\lrA{\frac{k+2}{4}\psi}+\frac{k-5}{k}\lrA{\frac{k}{k+1}\psi}+\frac{k-1}{2k}\psi\\
&=\frac{5k^2-4k+3}{2k(k+1)}\psi,
\end{align*}
where the first inequality follows from Lemma~\ref{lb-chris}, and the second follows from Lemmas \ref{lb-tsp+}, \ref{lb-tree+}, and (\ref{lb-delta_tau+}).
Since we consider $k$ as a constant,
we can get an approximation of $(\frac{5k^2-4k+3}{2k(k+1)}+\epsilon)$.
Note that it satisfies that $\frac{5k^2-4k+3}{2k(k+1)}\leq\frac{5k-7}{2k}$ for any constant $k\geq5$.
\end{proof}

\section{LDTTP-$k$ with $k\geq 3$}\label{D}
If all teams are on a line shape graph, TTP-$k$ is known as Linear Distance TTP-$k$ (LDTTP-$k$)~\cite{DBLP:journals/jair/HoshinoK12}. For this graph, it is easy to see that the minimum $k$-cycle/path packing can be found in polynomial time.
For LDTTP-3, by Corollary~\ref{ttp3-coro}, our algorithm has an approximation ratio of $(4/3+\epsilon)$, which also matches the current best-known ratio $4/3$~\cite{DBLP:journals/jair/HoshinoK12}.
Their result is based on a stronger lower bound of LDTTP-3.
If we use this lower bound, the ratio of our 3-cycle packing construction can achieve $(6/5+\epsilon)$. In this section, we will consider LDTTP-$k$ directly.

\subsection{The Lower Bound}
We first propose a lower bound for LDTTP-$k$, which is a generalization of the lower bound for LDTTP-3 used in~\cite{DBLP:journals/jair/HoshinoK12}.
We consider the graph $H$, which is also a line shape.
Assume that the teams from left to right on the line are $(t_1,t_2,\dots, t_{n_S})$, respectively. We let $d_i=w(t_i,t_{i+1})$ for all $1\leq i\leq n_S-1$. Then, we have the following lemma.
\begin{lemma}\label{lb-LDTTP}
For LDTTP-$k$, it holds that
$
\psi\geq\sum_{i=1}^{n_S-1}c_id_i,
$
where $c_i=2i\lceil\frac{n_S-i}{k}\rceil+2(n_S-i)\lceil\frac{i}{k}\rceil$.
\end{lemma}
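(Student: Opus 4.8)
The plan is to bound $\psi$ from below by $\psi^H=\sum_{v\in S}w(I^H_v)$, which is legitimate since $w(I^H_v)\le w(I_v)$ for every $v\in S$ (as already observed in the excerpt, the triangle inequality lets us shortcut an itinerary on $G$ onto $H$), and then to analyze, for each team $v\in S$ and each edge $e_i=(t_i,t_{i+1})$ of the line, how many edges of an optimal itinerary $I^H_v$ must cross the cut induced by $e_i$. Write $L_i=\{t_1,\dots,t_i\}$ and $R_i=\{t_{i+1},\dots,t_{n_S}\}$, so $\size{L_i}=i$ and $\size{R_i}=n_S-i$, and let $x_{v,i}$ denote the number of edges of $I^H_v$ with one endpoint in $L_i$ and the other in $R_i$.

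First I would record the line-metric identity: for $a<b$ we have $w(t_a,t_b)=\sum_{a\le j<b}d_j$, so summing over all edges of $I^H_v$ and interchanging summations gives
$$w(I^H_v)=\sum_{i=1}^{n_S-1}x_{v,i}\,d_i.$$
Hence it suffices to show $\sum_{v\in S}x_{v,i}\ge c_i$ for every $i$, since then $\psi\ge\psi^H=\sum_{v\in S}\sum_i x_{v,i}d_i=\sum_i d_i\sum_{v\in S}x_{v,i}\ge\sum_i c_i d_i$.

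Next I would prove the per-team crossing bound. Recall that $I^H_v$ decomposes into edge-disjoint road trips, each a simple cycle through $v$ covering at most $k$ vertices other than $v$, and every vertex other than $v$ lies in exactly one road trip. Fix $i$ and suppose $v\in L_i$. Team $v$ must visit all $n_S-i$ vertices of $R_i$; these are partitioned among the road trips, each containing at most $k$ of them, so at least $\lceil (n_S-i)/k\rceil$ road trips contain a vertex of $R_i$. Each such road trip is a cycle through $v\in L_i$ that also meets $R_i$, so its edge set crosses the cut $(L_i,R_i)$ an even, hence at least two, number of times; because the road trips are edge-disjoint these crossings are counted without overlap, giving $x_{v,i}\ge 2\lceil (n_S-i)/k\rceil$. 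The mirror-image argument (reflecting the line left-to-right) gives $x_{v,i}\ge 2\lceil i/k\rceil$ for every $v\in R_i$. Summing over the $i$ teams of $L_i$ and the $n_S-i$ teams of $R_i$,
$$\sum_{v\in S}x_{v,i}\ \ge\ 2i\Ceil{\tfrac{n_S-i}{k}}+2(n_S-i)\Ceil{\tfrac{i}{k}}\ =\ c_i,$$
which is exactly what we need.

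The only delicate point — and the one I would state most carefully — is the claim that a road trip meeting both $L_i$ and $R_i$ contributes at least two crossings of the cut: this rests on the fact that a road trip is a single simple cycle, so the number of its edges crossing any fixed vertex partition is even, and it is $0$ precisely when the cycle lies entirely on one side. Everything else is bookkeeping: the line-metric identity for $w(I^H_v)$, a pigeonhole count over road trips, the left-right symmetry, and finally the reduction $\psi\ge\psi^H$. No new structural result about TTP-$k$ is needed beyond the definition of an itinerary, so I expect the proof to be short once the crossing lemma is phrased precisely.
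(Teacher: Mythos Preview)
Your proof is correct and follows essentially the same approach as the paper's: both count, for each edge $e_i$ of the line, the number of times the itineraries must cross the cut $(L_i,R_i)$, using pigeonhole on road trips (each covers at most $k$ opposite-side vertices) and the fact that a cycle crosses any cut an even number of times. Your write-up is simply more explicit about the reduction $\psi\ge\psi^H$, the line-metric identity $w(I^H_v)=\sum_i x_{v,i}d_i$, and the parity argument, while the paper compresses these into two sentences.
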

\begin{proof}
For an edge $t_it_{i+1}$, there are $i$ teams on the left and $n_S-i$ teams on the right of it. For each left team, it takes at least $\ceil{\frac{n_S-i}{k}}$ trips for it to visit all right teams. For each right team, it takes at least $\ceil{\frac{i}{k}}$ trips for it to visit all left teams.
There are at least $i\ceil{\frac{n_S-i}{k}}+(n_S-i)\ceil{\frac{i}{k}}=\frac{l_i}{2}$ trips in total. Each trip must cross the edge twice and hence all teams must cross the edge at least $l_i$ times. Together, we get the lower bound $\sum_{i=1}^{n_S-1}l_id_i$.
\end{proof}

Next, we analyze the approximation ratio of our $k$-cycle packing schedule.

\subsection{The Analysis}
We can get an optimal $k$-cycle packing $\C_k$ in graph $H$ such that
\[
\C_k=\{(t_{ki-(k-1)},t_{ki-(k-2)},\dots,t_{ki},t_{ki-(k-1)})\}_{i=1}^{n_S/k}.
\]
Using $\C_k$, by Theorem \ref{cycle-packing}, the weight of our cycle packing schedule is $\frac{k-1}{k}n_Sw(\C_k)+\frac{2}{k}\Delta_H+O(k^3/n)\psi$.
Let $\frac{k-1}{k}n_Sw(\C_k)+\frac{2}{k}\Delta_H=\sum_{i=1}^{n_S-1}f_id_i$. Define $\rho=\max_{1\leq i\leq n_S-1}\frac{f_i}{c_i}$. Then, by Lemma~\ref{lb-LDTTP}, we have
\begin{equation}\label{lb_LDTTP_k}
\begin{aligned}
\frac{k-1}{k}n_Sw(\C_k)+\frac{2}{k}\Delta_H=\sum_{i=1}^{n_S-1}f_id_i
\leq\rho\sum_{i=1}^{n_S-1}c_id_i\leq\rho\psi.
\end{aligned}
\end{equation}

\begin{theorem}\label{ldttp-final}
For LDTTP-$k$ with any constant $k\geq3$, the approximation ratio of our cycle packing schedule is $(\frac{3k-3}{2k-1}+\epsilon)$.
\end{theorem}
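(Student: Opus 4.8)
The plan is to make the quantity $\rho=\max_{1\le l\le n_S-1}\tfrac{f_l}{c_l}$ appearing in (\ref{lb_LDTTP_k}) completely explicit and to show that $\rho=\tfrac{3k-3}{2k-1}+O(1/n)$; combined with (\ref{lb_LDTTP_k}) and Theorem~\ref{cycle-packing} this yields the stated ratio. First I would compute $f_l$. Since $H$ is a line, $\tfrac{2}{k}\Delta_H=\tfrac{4}{k}\sum_{l=1}^{n_S-1}l(n_S-l)d_l$, and the chosen packing has $i$th cycle $(t_{ki-k+1},\dots,t_{ki},t_{ki-k+1})$ of weight $2(d_{ki-k+1}+\cdots+d_{ki-1})$, so summing over $i$ gives $w(\C_k)=2\sum_{k\nmid l}d_l$ and hence $\tfrac{k-1}{k}n_Sw(\C_k)=\tfrac{2(k-1)}{k}n_S\sum_{k\nmid l}d_l$. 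Therefore
\[
f_l=\tfrac{4}{k}l(n_S-l)\ \text{ if }\ k\mid l,\qquad f_l=\tfrac{4}{k}l(n_S-l)+\tfrac{2(k-1)}{k}n_S\ \text{ otherwise.}
\]
Because $k\mid n_S$, both $f_l$ and $c_l$ are invariant under $l\mapsto n_S-l$, so it suffices to treat $1\le l\le n_S/2$; and if $k\mid l$ then $c_l=\tfrac{4}{k}l(n_S-l)=f_l$, giving ratio $1$.

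For $l\not\equiv 0\pmod{k}$, set $j=l\bmod k\in\{1,\dots,k-1\}$; using $k\mid n_S$ one evaluates $\lceil\tfrac{n_S-l}{k}\rceil=\tfrac{n_S-l+j}{k}$ and $\lceil\tfrac{l}{k}\rceil=\tfrac{l+k-j}{k}$, whence
\[
c_l=\tfrac{2}{k}\bigl(2l(n_S-l)+(k-j)n_S+(2j-k)l\bigr),\qquad \frac{f_l}{c_l}=1+\frac{(j-1)n_S+(k-2j)l}{2l(n_S-l)+(k-j)n_S+(2j-k)l}.
\]
I then split on the size of $l$. For $l\ge 2k-1$: the denominator is at least $2l(n_S-l)$, the numerator is nonnegative and at most $(j-1)n_S+(k-2)l$ (as $k-2j\le k-2$), and $n_S-l\ge n_S/2$, so $\tfrac{f_l}{c_l}\le 1+\tfrac{j-1}{l}+\tfrac{k-2}{n_S}$, which is at most $\tfrac{3k-3}{2k-1}+O(1/n)$ because $j-1\le k-2$ and $l\ge 2k-1$. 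For the constantly many remaining indices $1\le l\le 2k-2$ (recall $k=O(1)$), the exact expression above converges as $n_S\to\infty$, with error $O(1/n)$, to $\tfrac{2l+k-1}{2l+k-j}=1+\tfrac{j-1}{2l+k-j}$; this last quantity is increasing in $j$ and decreasing in $l$, so over $l\le 2k-2$ it is largest at the smallest $l$ with the largest $j$, namely $l=j=k-1$, where it equals $\tfrac{3k-3}{2k-1}$. Together these two ranges give $\rho=\tfrac{3k-3}{2k-1}+O(1/n)$, attained (asymptotically) at $l=k-1$.

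Finally, by (\ref{lb_LDTTP_k}) we have $\tfrac{k-1}{k}n_Sw(\C_k)+\tfrac{2}{k}\Delta_H\le\rho\psi$, so Theorem~\ref{cycle-packing} produces a schedule of weight at most $\rho\psi+O(k^3/n)\psi=\bigl(\tfrac{3k-3}{2k-1}+O(1/n)\bigr)\psi$ since $k$ is a constant; as $\psi$ does not exceed the optimum, the ratio is $\tfrac{3k-3}{2k-1}+c/n$ for some constant $c=c(k)$, and when $n\le c/\epsilon=O(1)$ we instead solve the instance optimally by brute force, giving the $(\tfrac{3k-3}{2k-1}+\epsilon)$ bound. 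The routine parts are the two displayed computations; the step requiring care is the second paragraph, where one must check uniformly over all $\Theta(n)$ admissible $l$ that no value beats $l=k-1$ by more than $O(1/n)$ --- the cutoff $l\ge 2k-1$ is picked exactly so that the crude estimate there still lands at or below $\tfrac{3k-3}{2k-1}$, while the monotonicity of $\tfrac{j-1}{2l+k-j}$ isolates $l=k-1$ among the small cases; pushing the bound to the exact $\rho\le\tfrac{3k-3}{2k-1}$ is possible but unnecessary, since the $O(k^3/n)\psi$ term of Theorem~\ref{cycle-packing} forces the $+\epsilon$ in any case.
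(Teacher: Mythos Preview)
Your proof is correct and follows essentially the same route as the paper: you compute $f_l$ and $c_l$ identically, use the same symmetry reduction to $1\le l\le n_S/2$, handle $k\mid l$ the same way, and derive the same closed form for $f_l/c_l$ when $j=l\bmod k\neq 0$. The only difference is in how the maximum is located: the paper argues via asymptotic growth classes ($i=\Theta(n)$, $i=o(n)\cap\omega(1)$, $i=\Theta(1)$), whereas you use the concrete cutoff $l\ge 2k-1$ versus $l\le 2k-2$, which is a bit more explicit but amounts to the same thing.
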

\begin{proof}
By (\ref{lb_LDTTP_k}), we know that it is sufficient to prove $\rho= \frac{3k-3}{2k-1}+O(1/n)$.
Recall that $\frac{k-1}{k}n_Sw(\C_k)+\frac{2}{k}\Delta_H=\sum_{i=1}^{n_S-1}f_id_i$.
On one hand, it is easy to see that $\frac{k-1}{k}n_Sw(\C_k)=\sum_{i=1}^{n_S/k}\sum_{j=1}^{k-1}\frac{2k-2}{k}n_Sd_{ki-j}$. On the other hand, we have $\frac{2}{k}\Delta_H=\sum_{i=1}^{n_S-1}\frac{4}{k}i(n-i)d_i$.
Therefore, we have $f_i=\frac{4}{k}i(n_S-i)$ for $i\bmod k=0$, and $f_i=\frac{4}{k}i(n_S-i)+\frac{2k-2}{k}n_S$ for $i\bmod k\neq0$.
Recall that $c_i=2i\lceil\frac{n_S-i}{k}\rceil+2(n_S-i)\lceil\frac{i}{k}\rceil$. We have $f_{i'}=f_{n_S-i'}$ and $c_{i'}=c_{n_S-i'}$ for any $1\leq i'\leq n_S/2$. The structure has a symmetry property, and hence we only consider the case of $1\leq i\leq n_S/2$.

\textbf{Case 1: $i\bmod k=0$.} We have $f_i=\frac{4}{k}i(n_S-i)$ and $c_i=\frac{4}{k}i(n_S-i)$. Then, we can get $\frac{f_{i}}{c_{i}}=1$.

\textbf{Case 2: $i\bmod k=j$, where $1\leq j<k$.} We have $f_i=\frac{4}{k}i(n_S-i)+\frac{2k-2}{k}n_S$.
Note that $\ceil{\frac{n_S-i}{k}}=\frac{n_S-i+j}{k}$ and $\ceil{\frac{i}{k}}=\frac{i+k-j}{k}$.
We have $c_i=2i\lceil\frac{n_S-i}{k}\rceil+2(n_S-i)\lceil\frac{i}{k}\rceil=\frac{4}{k}i(n_S-i)+\frac{2}{k}((k-j)n_S+(2j-k)i)$. Then, we can get
\begin{align*}
\frac{f_{i}}{c_{i}}=\ &\frac{\frac{4}{k}i(n_S-i)+\frac{2k-2}{k}n_S}{\frac{4}{k}i(n_S-i)+\frac{2}{k}((k-j)n_S+(2j-k)i)}\\
=\ &1+\frac{\frac{2}{k}((j-1)n_S-(2j-k)i)}{\frac{4}{k}i(n_S-i)+\frac{2}{k}((k-j)n_S+(2j-k)i)}\\
=\ &1+\frac{(j-1)n_S-(2j-k)i}{2i(n_S-i)+(k-j)n_S+(2j-k)i}.
\end{align*}
In the worst case, it is easy to see that we have $j>1$.
Recall that $1\leq i\leq n_S/2$ and $k=\Theta(1)$. If $i=\Theta(n)$, we have $(j-1)n_S-(2j-k)i=\Theta(n)$, $2i(n_S-i)+(k-j)n_S+(2j-k)i=\Theta(n^2)$, and hence $f_i/c_i=1+\Theta(1/n)$. Otherwise, if $i=o(n)$ and $i=\omega(1)$, we have $(j-1)n_S-(2j-k)i=\Theta(n)$, $2i(n_S-i)+(k-j)n_S+(2j-k)i=\omega(n)$, and hence $f_i/c_i=1+o(1)$. Hence, for the rest case, we have $i=\Theta(1)$. Moreover, we have that $(j-1)n_S-(2j-k)i=(j-1)n_S+\Theta(1)$ and $2i(n_S-i)+(k-j)n_S+(2j-k)i=(k+2i-j)n_S+\Theta(1)$. Therefore, in the worst case, we have that
\begin{align*}
\frac{f_{i}}{c_{i}}=\ &1+\frac{(j-1)n_S+\Theta(1)}{(k+2i-j)n_S+\Theta(1)}\\
=\ &1+\frac{j-1}{k+2i-j}+\Theta(1/n)\\
\leq\ &1+\frac{j-1}{k+j}+\Theta(1/n)\\
\leq\ &1+\frac{k-2}{2k-1}+\Theta(1/n),
\end{align*}
where the first inequality follows from $i\geq j$ since $i\bmod k=j$ and the second is from $1\leq j<k$.

By the above analysis, we have $\rho=\max_{1\leq i\leq n_S-1}\frac{f_i}{c_i}=1+\frac{k-2}{2k-1}+O(1/n)=\frac{3k-3}{2k-1}+O(1/n)$.
\end{proof}

We can see that the approximation ratio is $(6/5+\epsilon)$ for LDTTP-3, which improves the previous result of $(4/3+\epsilon)$~\cite{DBLP:journals/jair/HoshinoK12}.

For LDTTP-$k$, we can also analyze the Hamiltonian cycle schedule. Let $C_H$ be the optimal Hamiltonian cycle in graph $H$, i.e., $C_H=(t_1,\dots, t_{n_S},t_1)$. Note that $n_Sw(C_H)\leq \psi$ by Lemma~\ref{lb-tsp+}. By Lemma~\ref{lb-distance+}, the total weight of the Hamiltonian cycle schedule is $\frac{2k-2}{k}n_Sw(C_H)+\frac{2}{k}\Delta_H+O(k^2/n)\psi$.
Let $\frac{2k-2}{k}n_Sw(C_H)+\frac{2}{k}\Delta_H=\sum_{i=1}^{n-1}f'_id_i$.
We have that $\frac{2k-2}{k}n_Sw(C)=\sum_{i=1}^{n_S-1}\frac{2k-2}{k}n_Sd_i$, $\frac{2}{k}\Delta_H=\sum_{i=1}^{n_S-1}\frac{4}{k}i(n_S-i)d_i$, and $c_i=2i\lceil\frac{n_S-i}{k}\rceil+2(n_S-i)\lceil\frac{i}{k}\rceil$. It is easy to see that the bottleneck case is
\[
\frac{f'_k}{c_k}=\frac{\frac{2k-2}{k}n_S+4(n_S-k)}{4(n_S-k)}=\frac{3k-1}{2k}+O(1/n).
\]

\begin{theorem}
For LDTTP-$k$ with any constant $k\geq3$, the approximation ratio of the Hamiltonian cycle schedule is $(\frac{3k-1}{2k}+\epsilon)$.
\end{theorem}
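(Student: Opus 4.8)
The plan is to bound the cost of the Hamiltonian cycle schedule coefficient-by-coefficient against the line lower bound of Lemma~\ref{lb-LDTTP}, exactly as in the proof of Theorem~\ref{ldttp-final}, and then to single out the edge index that realizes the worst ratio. I would first fix $C_H=(t_1,t_2,\dots,t_{n_S},t_1)$; since $H$ is a line this is a minimum weight Hamiltonian cycle, so $n_Sw(C_H)\le\psi=O(1)\psi$ by Lemma~\ref{lb-tsp+}, and Lemma~\ref{lb-distance+} produces a feasible schedule of weight at most $\frac{k-1}{k}n_Sw(C_H)+\frac{2}{k}\Delta_H+O(k^2/n)\psi$. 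Next I expand both leading terms in the basis $d_i=w(t_i,t_{i+1})$, $1\le i\le n_S-1$: the monotone line tour crosses each $d_i$ twice, so $w(C_H)=2\sum_{i=1}^{n_S-1}d_i$ and $\frac{k-1}{k}n_Sw(C_H)=\sum_{i=1}^{n_S-1}\frac{2k-2}{k}n_Sd_i$; and $d_i$ lies on the path joining exactly $i(n_S-i)$ team pairs, so $\Delta_H=2w(E_H)=2\sum_{i=1}^{n_S-1}i(n_S-i)d_i$ and $\frac{2}{k}\Delta_H=\sum_{i=1}^{n_S-1}\frac{4}{k}i(n_S-i)d_i$. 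With $f'_i=\frac{2k-2}{k}n_S+\frac{4}{k}i(n_S-i)$ the schedule costs at most $\sum_{i=1}^{n_S-1}f'_id_i+O(k^2/n)\psi$, while Lemma~\ref{lb-LDTTP} gives $\OPT\ge\psi\ge\sum_{i=1}^{n_S-1}c_id_i$ with $c_i=2i\ceil{\frac{n_S-i}{k}}+2(n_S-i)\ceil{\frac{i}{k}}$. Hence the ratio is at most $\rho+O(k^2/n)=\rho+O(1/n)$ (as $k$ is constant), with $\rho=\max_{1\le i\le n_S-1}f'_i/c_i$, and, as usual, solving the instances with $n$ below a constant threshold by brute force absorbs the $O(1/n)$ term into $\epsilon$.

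It then remains to show $\rho=\frac{3k-1}{2k}+O(1/n)$. From $\ceil x\ge x$ we have $c_i\ge\frac{4}{k}i(n_S-i)$, hence $f'_i/c_i\le1+\frac{(2k-2)n_S}{4i(n_S-i)}$; by the symmetry $f'_i=f'_{n_S-i}$, $c_i=c_{n_S-i}$ we may assume $i\le n_S/2$, and then this bound is $1+o(1)$ whenever $i=\omega(1)$, so the maximum is attained at one of the finitely many indices $i=\Theta(1)$, which I would examine using the exact value of $c_i$. Writing $j=i\bmod k$ and using $n_S\equiv0\pmod k$: if $j=0$ then $c_i=\frac{4}{k}i(n_S-i)$ and $f'_i/c_i=1+\frac{k-1}{2i}+O(1/n)$, which over admissible indices is largest at $i=k$ with value $1+\frac{k-1}{2k}=\frac{3k-1}{2k}$; if $1\le j\le k-1$ then $\ceil{i/k}=\frac{i+k-j}{k}$ and $\ceil{(n_S-i)/k}=\frac{n_S-i+j}{k}$, so $c_i=\frac{4}{k}i(n_S-i)+\frac{2}{k}\bigl((k-j)n_S+(2j-k)i\bigr)$ and a one-line computation gives $f'_i/c_i=1+\frac{j-1}{2i+k-j}+O(1/n)\le1+\frac{j-1}{j+k}+O(1/n)\le\frac{3k-3}{2k-1}+O(1/n)$ (using $i\ge j$ and $j\le k-1$), which is strictly below $\frac{3k-1}{2k}$. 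Thus $\rho=f'_k/c_k+O(1/n)=\frac{3k-1}{2k}+O(1/n)$, giving the claimed ratio $\frac{3k-1}{2k}+\epsilon$.

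The only genuinely substantive step is this last case check — confirming that the bottleneck is the constant-size, $k$-divisible index $i=k$ rather than some other small index — and even that reduces, once the ceilings are resolved modulo $k$, to comparing the two explicit fractions $\frac{3k-1}{2k}$ and $\frac{3k-3}{2k-1}$; everything else is the routine change into edge coordinates already carried out for Theorem~\ref{ldttp-final}.
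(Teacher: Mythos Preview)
Your proof is correct and takes essentially the same approach as the paper: expand both the schedule cost and the line lower bound in the edge basis $d_i$, then show $\rho=\max_i f'_i/c_i=\frac{3k-1}{2k}+O(1/n)$. The paper simply asserts that the bottleneck is at $i=k$ and computes $f'_k/c_k$ directly, whereas you actually carry out the case split on $j=i\bmod k$ and verify that the $j\neq 0$ indices only reach $\frac{3k-3}{2k-1}<\frac{3k-1}{2k}$; so your argument is the same but more complete.
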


We can see that the approximation ratio $(\frac{3k-1}{2k}+\epsilon)$ is worse than the ratio $(\frac{3k-3}{2k-1}+\epsilon)$.
A natural question is whether there is an algorithm with an approximation ratio better than $(\frac{3k-3}{2k-1}+\epsilon)$.

\section{Conclusion}\label{E}
In this paper, we consider approximation algorithms for TTP-$k$. We propose several new lower bounds and a new construction based on the $k$-cycle/path packing. Our construction can be combined well with the previous construction on the Hamiltonian cycle. For TTP-3, which is the most investigated case of TTP-$k$, we prove that these constructions can guarantee an $(139/87+\epsilon)$-approximation ratio which is better than the best-known ratio $(5/3+\epsilon)$. Moreover, we show that we can improve the ratio from $(7/4+\epsilon)$ to $(17/10+\epsilon)$ for TTP-4 and from $(\frac{5k-7}{2k}+\epsilon)$ to $(\frac{5k^2-4k+3}{2k(k+1)}+\epsilon)$ for TTP-$k$ with $k\geq 5$. Furthermore, for LDTTP-$k$, we show that the cycle packing construction has an approximation of $(\frac{3k-3}{2k-1}+\epsilon)$, which improves the ratio from $4/3$ to $(6/5+\epsilon)$ for LDTTP-3.
We also remark that according to our results, an improved approximation ratio for minimum weight $k$-cycle/path packing may directly lead to further improvements for TTP-$k$. 

\backmatter


\bmhead{Acknowledgements}
The work is supported by the National Natural Science Foundation of China, under grant 62372095.

\section*{Declarations}
\subsection*{Funding}
Not applicable.

\subsection*{Competing interests}
The authors declare no competing interests.

\subsection*{Ethics approval and consent to participate}
Not applicable.

\subsection*{Consent for publication}
Not applicable.

\subsection*{Data availability}
Not applicable.

\subsection*{Materials availability}
Not applicable.

\subsection*{Code availability}
Not applicable.

\subsection*{Author contribution}
All authors wrote the main manuscript text. All authors reviewed the manuscript.
\bibliography{mybib}
\end{document}